\newcommand{\mw}{w}
\newtheorem{theorem}{Theorem}
\newtheorem{lemma}[theorem]{Lemma}
\newtheorem{corollary}[theorem]{Corollary}
\newtheorem{definition}[theorem]{Definition}
\newtheorem{example}{Example}
\def\ehmm#1[#2]{\textsc{\MakeLowercase{#1}}_{#2}}
\newcommand{\Q}{{\ehmm{Q}[]}}
\newcommand{\R}{{\ehmm{R}[]}}
\newcommand{\qedex}{\hfill \IEEEQEDopen}
\newcommand{\pit}{\tau}
\newcommand{\ints}{\mathbb{Z}}
\newcommand{\posints}{\mathbb{Z}_+}
\newcommand{\nats}{\mathbb{N}}
\newcommand{\reals}{\mathbb{R}}
\newcommand{\xspace}{{\ensuremath{\cal X}}} % outcome space
\let\rv\bm
\newcommand{\prd}{\textnormal{\tiny p}}
\newcommand{\sil}{\textnormal{\tiny s}}
\newcommand{\name}[2][+<0pt,+10pt>]{\save[]#1*\txt{\tiny$\tuple{\text{#2}}$}\restore}
\newcommand{\bname}[1]{\name[+<0pt,-10pt>]{#1}}
\newcommand{\corref}[1]{Corollary~\ref{#1}}
\newcommand{\wstates}{Q}
\newcommand{\wstate}{q}
\newcommand{\wstart}{\wstate_0} % Start state
\newcommand{\wtf}{\Prob_{\kern-0.3em\raisebox{-0.1ex}{\tiny$\vec{~}$}}}
\DeclareMathOperator{\wnl}{\Lambda} % state label
\DeclareMathOperator{\Prob}{P}
\DeclareMathOperator{\Exp}{E} % expectation
\DeclareMathOperator{\kld}{D} % Kullback Leibler
\DeclareMathOperator{\ent}{H} % (cross) entropy
\newcommand{\indicator}[1]{\mathbb I_{#1}}
\newcommand{\argmax}{\mathop{\textnormal{argmax}}}
\newcommand{\nsw}{\textnormal{\textsf{n}}}
\newcommand{\swi}{\textnormal{\textsf{s}}}
\newcommand{\expname}[1]{\textsc{#1}}
\newcommand{\ex}[1]{*+[o][F]{\expname{\small#1}}}
\newcommand{\si}{*+<4pt>[o][F**:Black]{}} % silent node
\newcommand{\bn}{*+<4pt>[o][F]{}} % initial node
\begin{document}

\author{Wouter~M.~Koolen and~Steven~de~Rooij
\thanks{S. de Rooij is with the 
Informatics Institute, University of Amsterdam
Science Park 904, P.O. Box 94323
1090 GH Amsterdam, Netherlands
(steven.de.rooij@gmail.com)
and with the 
VU University Amsterdam, 
De Boelelaan 1081a,
1081HV Amsterdam,
Netherlands 
.}
\thanks{W. M. Koolen is with the
Centrum Wiskunde \& Informatica (CWI),
  P.O. Box 94079, NL-1090 GB Amsterdam, Netherlands
(wmkoolen@cwi.nl)}
\thanks{A pre-print of this paper appeared as \cite{us:ceae:corr}. An extended abstract \cite{us:ceae:colt} of this paper appeared at COLT 2008. The dissertation of the first author includes this paper as \cite[Chapter 3]{dissertation}.}
\thanks{%
Copyright \copyright{} 2012 IEEE. Personal use of this material is permitted.  However, permission to use this material for any other purposes must be obtained from the IEEE by sending a request to pubs-permissions@ieee.org.
}
}

\title{Universal Codes from Switching Strategies}

\maketitle

\begin{abstract}
  We discuss algorithms for combining sequential prediction
  strategies, a task which can be viewed as a natural generalisation
  of the concept of universal coding. We describe a graphical language
  based on Hidden Markov Models for defining prediction strategies,
  and we provide both existing and new models as examples. The 
  models include efficient, parameterless models for switching between
  the input strategies over time, including a model for the case where
  switches tend to occur in clusters, and finally a new model for the
  scenario where the prediction strategies have a known relationship,
  and where jumps are typically between strongly related ones. This
  last model is relevant for coding time series data where parameter
  drift is expected. As theoretical contributions we introduce an
  interpolation construction that is useful in the development and
  analysis of new algorithms, and we establish a new sophisticated lemma
  for analysing the individual sequence regret of parameterised
  models.
\end{abstract}

\begin{IEEEkeywords}
  Universal Coding, Regret, Individual Sequence, Hidden Markov Models,
  Prediction with Expert Advice, Expert Tracking
\end{IEEEkeywords}

\section{Introduction}\label{sec:intro}
For the most delectable universal codes, fill a cooking pot with
water, add a bunch of experts, a.k.a. codes, and put it on a slow
fire, stirring constantly. The resulting mix is guaranteed to delight,
achieving a codelength close to that of the best among the ingredient
codes.

In this paper we investigate such cookery in detail, deviating from
the usual recipe in two ways. First, following
Shtarkov~\cite{shtarkov.universal} and Rissanen~\cite{Rissanen1996},
universality is expressed in terms of the individual sequence regret:
the difference between the length of the considered code and the
shortest codelength among any of the ingredient codes, for the data
that were actually observed. As such, there are no distributional
assumptions. Second, the setting is generalised somewhat: rather than
always comparing our performance to that of the best code, we will
also consider \emph{combinations} of the ingredient codes as baselines
for the regret.  Among other things, this allows us to compete with
the best possible way to split the data sequence into a small number
of blocks of consecutive outcomes, and encode each block with the best
original code for that block, a problem known as \emph{expert
  tracking} in online learning
\cite{HerbsterWarmuth1998,cesa-bianchi2006}, which is also a core
focus of this work.

We identify sequential coding with sequential prediction as
follows. In each round $t=1,2\ldots$, a sequential prediction strategy
issues a probability distribution $Q_t$ on the outcome space
$\xspace$, which for simplicity we assume to be
countable. Subsequently, a new outcome $x_t$ is observed, and the
prediction is evaluated using \emph{logarithmic loss} $-\log_2
Q_t(x_t)$. The Kraft inequality states that there exists a prefix code
such that the accumulated logarithmic loss of the prediction strategy,
rounded up to the nearest integer, corresponds exactly to the
codelength for the data; in fact there are many practical algorithms
for implementing such a code, such as arithmetic coding
\cite{Rissanen1976}. For this reason, we will use the words
``prediction strategy'' and ``code'' interchangeably; similarly we use
``codelength'' as a synonym for ``logarithmic loss'', and forget about
the rounding. Also, for convenience we will use natural logarithms
such that codelengths are expressed in
nats. See~\cite{rissanen1984,barron1998b,MerhavFeder1998} for more
information about the connection between sequential prediction and
data compression.

Given a set of codes, our aim is to build universal codes with
efficient implementations, and evaluate their performance. Borrowing
terminology from learning theory, we henceforth call the ingredient
codes ``experts'' to emphasize that they are black boxes that can be
interpreted as prediction strategies. The overall protocol is as
follows (see \figref{fig:pwea.protocl}). Let $\Xi$ be a set of
experts, that we fix throughout this paper. Each round $t$, each
expert $\xi\in\Xi$ issues a prediction $P_{\xi,t}$ of the next outcome
in the form of a probability distribution. Our universal prediction
strategy collects all these predictions and uses them to form a
prediction $Q_t$ of its own. We then wait until the new data item
$x_t$ is observed, and incur a logarithmic loss of (encode it using)
$-\ln Q_t(x_t)$ nats. By the end of the game we compare our
accumulated loss (total codelength) to that of the best among a set of
reference strategies.

\begin{figure}
\caption{Prediction with Expert Advice (Logarithmic Loss)}\label{fig:pwea.protocl}
\begin{algorithmic}
\FOR{each round $t=1,2,\ldots$}
\STATE Each expert $\xi \in \Xi$ issues prediction $P_{\xi,t}$
\STATE We produce prediction $Q_t$
\STATE Outcome $x_t \in \xspace$ is revealed
\STATE We incur log loss $- \ln Q_t(x_t)$.
\ENDFOR
\end{algorithmic}
\end{figure}

Algorithms for combining prediction strategies can be found in the
literature under various headings. On the one hand there are results
in (Bayesian) statistics and source coding. Most relevant is Volf and
Willems' algorithm for combining two sequential data compression
algorithms \cite{volfwillems1998}, called the ``Switching Method'',
which we discuss in Section~\ref{sec:learnrate}. Important precursors
of this work include \cite{willems96:coding_piecew,WillemsKrom1997};
the algorithms described there do not combine expert predictions but
can be used for that purpose (see Section~\ref{sec:interpolation} for
details). The tradeoff between time complexity and regret has received
substantial further analysis, see~\cite{ShamirMerhav1999,
  ShamirCostello2000, GyorgyLinderLugosi12008,
  DeRooijVanErven09,ErvenGrunwaldDeRooij2012}, but such work is
outside the scope of this introduction. On the other hand, the
learning theory community has produced a lot of work on universal
prediction under the heading ``prediction with expert advice''
\cite{cesa-bianchi2006, LittlestoneWarmuth1989, vovk1990, Vovk1999,
  HerbsterWarmuth1998, haussler1998}. In this case the experts'
predictions are not necessarily probabilistic, and scored using an
arbitrary loss function. In this paper we focus on results for
logarithmic loss, although our results apply to any mixable loss
function, as discussed in~\secref{sec:loss.functions}.

In order to give a clear, consistent and comprehensive
introduction/overview of the topic, we use the Bayesian framework to
describe the considered codes, and show how algorithms for prediction
with expert advice can often be described using a prior distribution
on \emph{sequences} of experts. This approach allows unified
description of many of the mentioned results as well as some new
models that represent interesting trade-offs between time complexity
and modelling power. Following in the footsteps of, e.g.,
\cite{willems96:coding_piecew, volfwillems1998, ShamirMerhav1999,
  Jaakkola2003}, we use hidden Markov models (HMMs) as an intuitive
graphical language to describe such priors, and obtain a
computationally efficient implementation using standard algorithms
such as the forward algorithm. Our main focus will be on algorithms
for expert tracking.

Let us emphasize that, although technically the algorithms we consider
are Bayesian, we use a worst-case individual sequence analysis. Thus,
we do not adopt the usual subjective Bayesian interpretation of the
prior distribution as an expression of belief; in fact, we do not make
any statements or assumptions about the data-generating machinery
whatsoever.

\subsection{Our Contribution}
The aim of this work is to provide a readily useful and accessible
introduction to prediction with expert advice for an information
theoretic audience. To this end we present models in terms of HMM state
transition diagrams which, although widely used, have not been applied
as consistently to the problem of prediction with expert advice. This
graphical language allows us to conveniently describe many existing
models and design new ones. The resulting diagrams can be
understood and compared with ease; moreover, computational efficiency
can be gleaned directly from their structure.

Beyond its tutorial nature, the paper also has two theoretical
contributions: first, we describe an interpolation construction that
is useful in the development and analysis of new algorithms
(Section~\ref{sec:interpolation}). Second, Lemma~\ref{lem:parambound}
provides a substantial generalisation of earlier methods from \cite{Monteleoni2003,Jaakkola2003} for analysing
the regret of expert models; the lemma is later used in the proof of
Theorem~\ref{thm:kernel_ml_regret}, which cannot be proven using
earlier results.

We describe a number of straightforward applications of the theory
developed in the paper, which can be interpreted both as examples and
as models of practical utility. A number of these models are new, in
particular the Quickly Decreasing Switching Rate model in
Section~\ref{sec:dsr} and the Ordered Experts model in Section~\ref{sec:oe}.

\subsection{Overview}
In \secref{sec:es.priors} we describe Bayesian prediction strategies
based on a prior distribution on sequences of experts, and discuss how
the resulting prediction strategies can be cast in the form of a
hidden Markov model.

The performance of the algorithms we are interested in is expressed in
terms of guarantees about their individual sequence
regret. \secref{sec:lossbounds} provides the main theoretical tools
for analysing the regret.

In \secref{sec:switching} we illustrate our approach by rendering
various models for tracking the best expert in HMM form, thus showing
the relationships between them. The seminal \emph{Fixed Share}
algorithm \cite{HerbsterWarmuth1995, HerbsterWarmuth1998} serves as a
starting point. Fixed Share has two drawbacks: first, one has to
specify a fixed \emph{switching rate} in advance; choosing a
suboptimal value here produces a linear penalty in the regret. Second,
the incurred regret depends on the number of observations $t$, even
when the optimal number of switches is bounded. To address these
problems, the switching probabilities need to be modelled differently.
\secref{sec:interpolation} explains how the part of the model that
describes switching probabilities can be isolated from the rest; we
then proceed to describe several alternative models for the switching
probabilities, and discuss how these modifications affect the regret
bound. In particular, \secref{sec:dsr} describes a new, simple and
effective approach to solve both problems associated with Fixed Share,
\secref{sec:learnrate} describes several methods for learning the switching rate from
data, and~\secref{sec:rl} describes another model that is
especially well suited to the scenario where changes in predictive
performance of experts are expected to appear in clusters.

So far, none of the considered models for expert tracking made any
assumptions as to the inner workings of, or the relationships between,
the various experts -- they are black boxes. However, as an
interesting special case we consider the scenario where the experts
are ordered.  For example, if the experts are prediction strategies
associated with a parametric model, instantiated with various
parameter values, then switches between two experts seem intuitively
more likely if they represent parameter values that are close. This
scenario is explored in~\secref{sec:oe}; the notion is taken to its
extreme in~\secref{sec:drift}, where the regret is no longer analysed
in terms of all-or-nothing ``switches'', but rather in terms of a more
smooth characterisation of the amount of ``parameter drift''.

A number of loose ends are discussed in~\secref{sec:loose.ends}. We
specifically discuss a number of useful results from the literature
that would distract from the exposition but are nevertheless too
important to skip over altogether (\ref{sec:conditioning.on.data}), as well as a variant evaluation criterion called adaptive regret  (\ref{sec:adaptive}). We then consider how one might estimate which expert made the best
prediction at a certain time (\ref{sec:map}). Finally we describe
how our results can be generalised to any mixable loss function
(\ref{sec:loss.functions}) and to online investment (\ref{sec:investment}).

\section{Expert Sequence Priors and Hidden Markov Models}\label{sec:es.priors}

Since we do not know who among our set $\Xi$ of experts will issue the
best predictions and achieve minimal codelength, the straightworward
Bayesian response is to define a prior $w$ on $\Xi$, and then
construct a distribution $\ehmm{b}[w]$ for the joint space $\xspace^\infty \times\Xi$
with
\begin{equation}\label{eq:bayes}
\ehmm{b}[w](x^t,\xi)\df w(\xi)\ehmm{P}[\xi](x^t),\text{~where~}\ehmm{P}[\xi](x^t)\df\prod_{i=1}^t P_{\xi,i}(x_i).
\end{equation}
The Bayesian prediction is obtained by conditioning on past
observations and marginalising over $\Xi$ as follows:
\[
Q_{t+1}(\rv x_{t+1}) \df \ehmm{b}[w](\rv x_{t+1}\mid x^t).
\]
(Random variables are denoted in bold face.)
Note that this prediction depends on the expert predictions from times $1$ through $t+1$ but not beyond.
This strategy is simple and effective. Compared to the single best
expert, the regret of this strategy is
\begin{equation}\label{eq:simple_regret}
-\ln \ehmm{b}[w](x^t)-\left(-\ln\ehmm{P}[\hat\xi](x^t)\right),
\end{equation}
where $\hat\xi=\argmax_\xi\ehmm{P}[\xi](x^t)$, breaking ties arbitrarily. To
bound the regret, note that
\begin{equation}\label{eq:drop_terms}
\ehmm{P}[\hat\xi](x^t)~\ge~\underbrace{\sum_{\xi\in\Xi}w(\xi)\ehmm{P}[\xi](x^t)}_{=\ehmm{b}[w](x^t)}~\ge~ w(\hat\xi)\ehmm{P}[\hat\xi](x^t);
\end{equation}
substitution in~\eqref{eq:simple_regret} reveals that the regret must
be in the interval $[0,-\ln w(\hat\xi)]$.

Thus, the good news is that this first strategy guarantees a
codelength that is within a constant of the performance of the best
available expert $\hat\xi$. On the other hand, with this strategy we
never do any \emph{better} than $\hat\xi$ either! Consider that the
nature of the data generating process may evolve over time;
consequently different experts may be better during different periods
of time. It is also possible that not the data generating process, but
the experts themselves change as more and more outcomes are being
observed: they may learn from past mistakes, possibly at different
rates, or they may have occasional bad days, etc. By generalising the
Bayesian modelling a little bit, we can compete with prediction
strategies that perform far better than the best individual expert.

\subsection{Including Transient Behaviour}
Rather than using a prior distribution to represent our uncertainty
about which \emph{single} expert is best, we generalise the setup by
considering which expert is best \emph{in each round}. Let $\pi$ be a
prior on infinite sequences of experts, called an \emph{expert
  sequence prior} (ES-prior). We subsequently define the Bayesian
joint distribution $\ehmm{B}[\pi]$ by
\begin{equation}\label{eq:ES.joint}
\ehmm{B}[\pi](x^t,\xi^t)\df\pi(\xi^t)\ehmm{P}[\xi^t](x^t),\text{~where~}\ehmm{P}[\xi^t](x^t)\df\prod_{i=1}^t P_{\xi_i,i}(x_i).
\end{equation}
We can recover the previous prediction strategy by defining $\pi$ such
that it assigns probability zero to any expert sequence that lists
more than one expert.

In the simplest case, the prior models the sequence of experts as a
Markov chain, but it is often desirable to carry along some additional
state information besides the identity of the previous expert in the
definition of the prior. Therefore, we will construct the Bayesian
prediction strategy as a Hidden Markov Model of the form depicted
in~\figref{fig:hmm}, where the state of the prior process is captured
by the $\rv \wstate_t$ variables.

Being a black box, an expert $\xi$ may use \emph{any} strategy to form
the prediction $P_{\xi,t}$ for $\rv x_t$. The framework could even
incorporate psychic experts who have some metaphysical access to
future data! Or, as is the standard assumption for proving regret
bounds, we may consider experts that conspire to maximally frustrate
the prediction task. As such, an application may involve complicated
dependencies between the $\rv x_t$ and the predictions of the experts
in general. However, as not only the data but also the predictions of
the experts are \emph{observed}, there is no need to include these
dependencies in the model.

% Dependencies van de toekomstige uitkomsten op het verleden
% lopen *via* de expertvoorspellingen. En die zijn geobserveerd
% en daarom kunnen we ze weglaten uit het model.
%
% I think this point is quite complicated.
% To get a feel for what is going on, I drew the HMM that includes
% expert predictions in 'fullmodel.dot'.
% 
% 
% Black-box experts, side information (interpretatie)
% Helderziend, toekomstige data. Free your mind over die experts.
% En 'free your mind' is belangrijk, want experts kunnen zelfs met de data
% samenspannen om de leertaak te frustreren. Dit is de standaard aannname
% om theoretische garanties te krijgen.

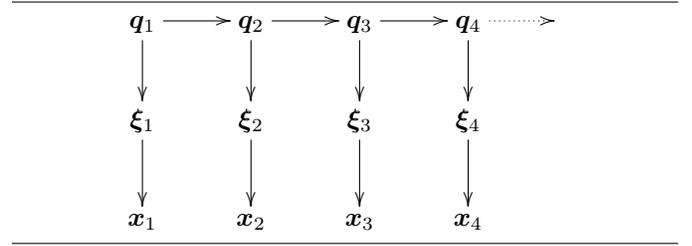
\begin{figure}
\centering
$\xymatrix{
{\rv \wstate_1} \ar[r] \ar[d] & {\rv \wstate_2} \ar[r]\ar[d] & {\rv \wstate_3}
\ar[r]\ar[d] & {\rv \wstate_4} \ar@{.>}[r]\ar[d] &
\\
{\rv \xi_1} \ar[d] & {\rv \xi_2} \ar[d] & {\rv \xi_3} \ar[d] & {\rv
  \xi_4} \ar[d] &
\\
{\rv x_1}  & {\rv x_2}  & {\rv x_3}  & {\rv x_4} &
}$
\caption{HMM description of a prediction strategy\label{fig:hmm}}
\end{figure}

\subsection{Graphical Specification of Expert HMMs}
We now outline a graphical language for describing HMMs of the form
shown in \figref{fig:hmm}, which allows us to cleanly and intuitively
display model structure. The resulting prediction strategy can be
read off directly from these diagrams, and diagrammatic simplicity
implies computational efficiency.

An Expert Hidden Markov Model (EHMM) is a joint distribution on
sequences of states, experts and outcomes. It is defined by the
following ingredients. We start by choosing a set of \markdef{states}
$\wstates$ and a designated \markdef{start state}
$\wstart\in\wstates$. We then specify the \markdef{transition
  probability} between states using a Markov kernel $\wtf$. (We now
have a regular Markov chain on the sequence of states.)  A subset
$\wstates^\prd \subseteq \wstates$ of the states are called
\markdef{productive}. Experts are deterministically assigned to the
productive states by $\wnl:\wstates^\prd\to\Xi$.  The role of the
non-productive \markdef{silent} states is to provide fine-grained
independencies in the Markov chain, allowing us to ``spell out'' the
transitions between productive states conceptually and computationally
efficiently.

Given expert predictions $P_{\xi,i}$ for all $\xi\in\Xi$ and $i = 1, 2, \ldots$, the joint probability is given as follows. Let $\wstate^\lambda$ be a sequence of states, and let $\wstate^\prd_1, \ldots, \wstate^\prd_t$ be the subsequence of its productive states. Then
\begin{multline}\label{eq:EHMM.joint}
\Q(\wstate^\lambda, \xi^t, x^t) 
~\df~
\\
%\prod_{i=0}^{\lambda-1} \wtf(\wstate_i \to \wstate_{i+1})
%\prod_{i=1}^t \indicator{\xi_i = \wnl(\wstate^\prd_i)} \prod_{i=1}^t
%P_{\xi_i, i}(x_i)
\begin{cases}
\displaystyle
\ehmm{P}[\xi^t](x^t)
\prod_{i=0}^{\lambda-1} \wtf(\wstate_i \to \wstate_{i+1})
&\text{if $\forall i:\wnl(\wstate^\prd_i)=\xi_i$,}\\
0&\text{otherwise.}
\end{cases}
\end{multline}
For convenience, we identify the EHMM distribution with its defining
5-tuple,
i.e. $\Q\equiv\tuple{\wstates,\wstates^\prd,\wstart,\wtf,\wnl}$.

The advantage of our setup is that we can specify EHMMs using
intuitive state transition diagrams, as done for example in Figures~\ref{graph:bayes}
and~\ref{graph:fixmix}. We first draw a node $N_\wstate$ for each
state $\wstate$. We use an open dot $\xymatrix{\bn}$ for the start
state, black dots $\xymatrix{\si}$ for the silent states, and we
display each productive state $\wstate$ as an open circle labelled by
the expert $\wnl(\wstate)$ who is assigned to make the prediction,
like this: $\xymatrix{\ex{a}}$. We draw an arrow from $N_\wstate$ to
$N_{\wstate'}$ if the transition probability $\wtf(\wstate \to
\wstate')$ is nonzero. The transition probabilities themselves could
be written along such arrows, but this quickly becomes messy. Instead
we write them below our graphs.

The forward algorithm, see e.g.\ \cite{rabiner1989}, can be used to compute the predictive distribution on experts given past data, i.e.\
\[
\Q\delc*{\rv \xi_{t+1}}{x^t}
.
\]
Intuitively, this is done by maintaining weights on the productive states. These weights are then alternately used for prediction and subsequently conditioned on observations, and percolated forward through the network of silent states according to the transition probabilities $\wtf$.

The total running time of the forward algorithm for all time steps is proportional to the number of
edges in the graph.

\subsection{Examples}
We give the ES-priors and EHMMs that correspond to the simplest models:
Bayesian mixtures and elementwise mixtures with fixed parameters.

\begin{figure}
\caption{Bayesian mixture $\ehmm{b}[w]$}\label{graph:bayes}
\centering
\subfloat{%
$\xymatrix@R=0.5em@C=1.7em{
&\ex{a}\name{\expname a,1} \ar[rr] && \ex{a} \name{\expname a,2} \ar[rr] && \ex{a} \ar[rr] && \ex{a} \ar@{.>}[r] & \\
\\
&\ex{b} \name{\expname b,1} \ar[rr] && \ex{b} \name{\expname b,2} \ar[rr] && \ex{b} \ar[rr] && \ex{b} \ar@{.>}[r] & \\
\bn \name[+<0pt,10pt>]{0}  \ar[ur] \ar[uuur] \ar[dr] \ar[dddr]\\
&\ex{c} \name{\expname c,1} \ar[rr] && \ex{c} \name{\expname c,2} \ar[rr] && \ex{c} \ar[rr] && \ex{c} \ar@{.>}[r] & \\
\\
&\ex{d} \name[+<4pt,10pt>]{\expname d,1} \ar[rr] && \ex{d} \name{\expname d,2}\ar[rr] && \ex{d} \ar[rr] && \ex{d} \ar@{.>}[r] & \\
}$}
\quad
\subfloat{%
\setlength{\abovedisplayskip}{0cm}
\setlength{\belowdisplayskip}{0cm}
\begin{minipage}[t]{.4\textwidth}
\begin{gather*}
\ehmm{b}[w] =
  \tuple{\wstates, \wstates^\prd, 0, \wtf, \wnl}
\\
\wstates= \wstates^\prd \cup \set{0} 
\quad \wstates^\prd =  \Xi \times \posints 
\quad \wnl(\xi,t) = \xi
\\
\wtf\del*{
\begin{aligned}
\tuple{0} &\to \tuple{\xi, 1}
\\
\tuple{\xi, t} &\to \tuple{\xi, t+1}
\end{aligned}} = \del*{
\begin{gathered}
\mw(\xi)
\\
1
\end{gathered}}
\end{gather*}
\end{minipage}
}
\end{figure}
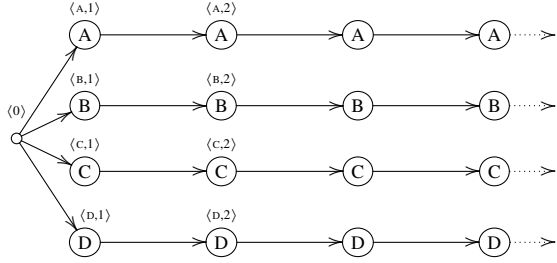

\begin{example}[Bayesian Mixtures]\label{example:bayes.es.prior}\label{example:bayes.hmm}
  The EHMM $\ehmm{b}[w]$ for the Bayesian mixture is shown in
  \figref{graph:bayes}. 
The figure illustrates that in the standard Bayesian mixture there is no provision for a-priori redistribution of probability mass between experts: the posterior distribution will concentrate very quickly on the state corresponding to the expert that assigns the overall highest likelihood to the data.
The vector-style definition of $\wtf$ is a
  shorthand; each row specifies one or more transition
  probabilities. The reader may check that this EHMM formally
  corresponds to the Bayesian mixture in the sense that the marginal
  likelihoods on data of \eqref{eq:EHMM.joint} and \eqref{eq:bayes}
  coincide.
  It is well-known that the Bayesian prediction
  $\ehmm{b}[w]\delc{x_{t+1}}{x^t}$ can be computed in $O(k)$ time per trial by
  maintaining the posterior. The forward algorithm on $\ehmm{b}[w]$
  has the same efficiency.
\qedex
\end{example}

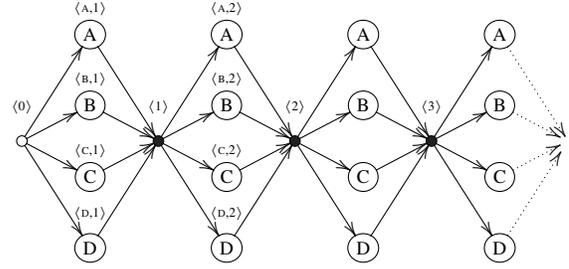
\begin{figure}
\caption{Fixed elementwise mixture $\ehmm{em}[\mw]$}\label{graph:fixmix}
\centering
\subfloat{%
$\xymatrix@R=0.5em@C=1.7em{
&\ex{a} \name{\expname a,1} \ar[rddd] && \ex{a} \name{\expname a,2} \ar[rddd] && \ex{a} \ar[rddd] && \ex{a} \ar@{.>}[rddd] & \\
\\
&\ex{b} \name{\expname b,1} \ar[rd] && \ex{b} \name{\expname b,2} \ar[rd] && \ex{b} \ar[rd] && \ex{b} \ar@{.>}[rd] & \\
\bn \name[+<0pt,13pt>]{0} \ar[ur] \ar[uuur] \ar[dr] \ar[dddr]&&
\si \name[+<0pt,13pt>]{1} \ar[ur] \ar[uuur] \ar[dr] \ar[dddr]&&
\si \name[+<0pt,13pt>]{2} \ar[ur] \ar[uuur] \ar[dr] \ar[dddr]&&
\si \name[+<0pt,13pt>]{3} \ar[ur] \ar[uuur] \ar[dr] \ar[dddr]&&
\\
&\ex{c} \name{\expname c,1} \ar[ru] && \ex{c} \name{\expname c,2} \ar[ru] && \ex{c} \ar[ru] && \ex{c} \ar@{.>}[ru] & \\
\\
&\ex{d} \name[+<0pt,13pt>]{\expname d,1} \ar[ruuu] && \ex{d} \name[+<0pt,13pt>]{\expname d,2} \ar[ruuu] && \ex{d} \ar[ruuu] && \ex{d} \ar@{.>}[ruuu] &
}$
}
\quad
\subfloat{%
\setlength{\abovedisplayskip}{0cm}
\setlength{\belowdisplayskip}{0cm}
\begin{minipage}[t]{.4\textwidth}
\begin{gather*}
\ehmm{em}[\mw] ~=~  \tuple{\wstates, \wstates^\prd, 0, \wtf, \wnl}
\\
\wstates = \wstates^\sil \cup \wstates^\prd
\quad \wstates^\sil =\nats
\quad \wstates^\prd = \Xi \times \posints
\quad \wnl(\xi, t) = \xi
\\
\wtf\del*{
\begin{aligned}
\tuple{t} &\to \tuple{\xi, t+1}
\\
\tuple{\xi,t} &\to \tuple{t}
\end{aligned}} = \del*{
\begin{gathered}
\mw(\xi)
\\
1
\end{gathered}}
\end{gather*}
\end{minipage}
}
\end{figure}%

\begin{example}[Elementwise Mixtures]\label{example:fixed.elementwise.mixtures}\label{example:fixed.elementwise.mixtures.hmm}
  The \markdef{elementwise mixture}\footnote{These mixtures are
    sometimes just called mixtures, or predictive mixtures, or exponentially weighted averages. We use the
    term elementwise mixtures to avoid confusion with Bayesian mixtures.} with 
  mixture weights $\mw$ on experts $\Xi$ predicts as follows:
\[
Q_t(x_t) ~\df~ \sum_{\xi \in \Xi} P_{\xi,t}(x_t) \mw(\xi).
%P_{\textnormal{mix},\mw}(x^t) ~\df~ \prod_{i=1}^t \sum_{\xi \in \Xi} P_{\xi,i}(x_i) \mw(\xi).
\]
This prediction strategy for elementwise mixtures can be implemented
by the EHMM $\ehmm{em}[\mw]$ defined in \figref{graph:fixmix}.
The EHMM has a single silent state per outcome, whose transition
probabilities are the mixture weights $\mw$.
Intuitively, by funnelling all weight through a single silent state all differentiation between the experts based on past performance is forgotten. As such no learning occurs, the strength of this model instead lies in the fact that the model always assigns reasonable probability to any outcome that is likely according to any of the experts.
The forward algorithm calculates the prediction of $\ehmm{em}[\mw]$ in
$O(k)$ time per trial.  \qedex
\end{example}

\section{Regret Bounds}\label{sec:lossbounds}
Here we provide some handles for analysing the predictive performance
of EHMMs.  In each case, the idea is to compare the loss incurred by
some model $\Q$ to the loss incurred by another prediction strategy
from a set $\mathcal M$ of reference strategies. For example,
$\mathcal M$ might be the set of all prediction strategies based on a
fixed expert sequence that contains $m$ blocks, or it might be the set
of all prediction strategies that can be obtained by mixing over the
predictions of the experts with fixed weights. The goal is now to
provide an upper bound on the excess codelength of the model $\Q$
compared to the best of these reference strategies. Note that for many
models we provide \emph{simultaneous} regret guarantees with respect
to several distinct reference classes; e.g. the regret bounds for
expert tracking algorithms hold for all numbers of blocks $m$.

Throughout this paper, we use three types of regret bounds, which are
given in order of increasing sophistication. The first bound is
appropriate if only a few expert sequences contribute significantly to
the probability of the data. In that case it is sufficient to simply
drop some terms from the Bayesian mixture.

\begin{lemma}[Regret w.r.t.\ Expert Sequence $\xi^t$]\label{lem:dropmarg}
  Let $\Q$ denote an EHMM, and let $\xi^t$ denote a particular
  reference expert sequence. Then, for all data $x^t$,
\begin{equation}\label{eq:bound1}
\ln \frac{\ehmm{P}[\xi^t](x^t)}{\Q(x^t)}\le-\ln\Q(\xi^t).
\end{equation}
\end{lemma}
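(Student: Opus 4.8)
The plan is to recognise inequality~\eqref{eq:bound1} as the simplest ``drop terms'' argument, directly parallel to the bound~\eqref{eq:drop_terms} for the plain Bayesian mixture, but now carried out over \emph{expert sequences} rather than individual experts. After exponentiating and clearing denominators, \eqref{eq:bound1} is equivalent to
\[
\ehmm{P}[\xi^t](x^t)\,\Q(\xi^t) ~\le~ \Q(x^t),
\]
so the whole task reduces to displaying the left-hand side as a single nonnegative summand inside a nonnegative-term expansion of $\Q(x^t)$.

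First I would marginalise the EHMM joint~\eqref{eq:EHMM.joint} over all state sequences $\wstate^\lambda$ whose productive subsequence is consistent with $\xi^t$ (those satisfying $\wnl(\wstate^\prd_i)=\xi_i$ for every $i$); every other state sequence contributes $0$. Since the data factor $\ehmm{P}[\xi^t](x^t)$ in~\eqref{eq:EHMM.joint} depends only on $\xi^t$ and $x^t$, and not on which silent states were traversed, it pulls out of this sum, giving
\[
\Q(\xi^t, x^t) ~=~ \ehmm{P}[\xi^t](x^t)\,\Q(\xi^t),
\qquad
\Q(\xi^t) ~=~ \sum_{\wstate^\lambda}\;\prod_{i=0}^{\lambda-1}\wtf(\wstate_i\to\wstate_{i+1}),
\]
where the latter sum ranges over exactly the consistent state sequences and is precisely the prior marginal the EHMM assigns to $\xi^t$. (That this state-sum really equals $\Q(\xi^t)$ follows because summing $\ehmm{P}[\xi^t](x^t)$ over all $x^t$ gives $1$, each $P_{\xi_i,i}$ being a probability distribution on the countable space $\xspace$.) This identification of the detached factor is the only point demanding any care; everything else is bookkeeping over the silent-state layer of the HMM.

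The remainder is immediate. Summing $\Q(\tilde\xi^t, x^t)$ over all competing expert sequences $\tilde\xi^t$ recovers $\Q(x^t)=\sum_{\tilde\xi^t}\ehmm{P}[\tilde\xi^t](x^t)\,\Q(\tilde\xi^t)$, a sum of nonnegative terms; retaining only the term indexed by our chosen $\xi^t$ yields $\Q(x^t)\ge\ehmm{P}[\xi^t](x^t)\,\Q(\xi^t)$. Dividing by $\Q(x^t)\,\Q(\xi^t)$ and taking logarithms then delivers~\eqref{eq:bound1}. I anticipate no genuine obstacle: the content is just the elementary fact that a Bayesian mixture over expert sequences dominates each of its individual components, with the HMM machinery affecting only how the prior weight $\Q(\xi^t)$ is accounted for.
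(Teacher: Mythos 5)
Your proof is correct and follows exactly the paper's argument: the paper's one-line proof (``drop all terms in the mixture except for the one corresponding to $\xi^t$'') is precisely your step of retaining the single summand $\Q(\xi^t,x^t)=\ehmm{P}[\xi^t](x^t)\,\Q(\xi^t)$ from the expansion of $\Q(x^t)$ over expert sequences. The additional bookkeeping you supply --- verifying via \eqref{eq:EHMM.joint} that the data factor pulls out of the sum over silent-state paths so that $\Q(x^t,\xi^t)$ factorises as claimed --- is a correct elaboration of what the paper leaves implicit.
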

\begin{IEEEproof}The bound is obtained by dropping all terms in the
  mixture except for the one corresponding to $\xi^t$.
\end{IEEEproof}

We obtain an expression for the regret w.r.t.\ some reference set
$\mathcal M \subseteq \Xi^t$ by maximising $\xi^t$ over its elements. We have
already seen an example application: the regret
bound~\eqref{eq:drop_terms} for $\Q=\ehmm{b}[w]$ is derived in
this way.

In the second kind of bound, another EHMM $\R$ plays the role
of reference prediction strategy. It can be useful even if the number
of different expert sequences with significant contribution to the
probability is very large. The following lemma forms the basis for
such bounds.

\newcommand{\V}{{\ehmm{V}[]}}
\begin{lemma}[Regret w.r.t.\ another EHMM]\label{lem:toexpectation}
  Fix data $x^t$ and EHMMs $\Q$ and $\R$. We have
  \[
\ln \frac{\R(x^t)}{\Q(x^t)}
~\le~
-\ln \Exp_\V \sbr[\bigg]{\frac{\Q(\rv{\xi}^t)}{\R(\rv{\xi}^t)}}
~\le~
\Exp_\V \sbr[\bigg]{\ln \frac{\R(\rv{\xi}^t)}{\Q(\rv{\xi}^t)}}
,\]
where $\V(\rv\xi^t)=\R(\rv\xi^t|x^t)$.
\end{lemma}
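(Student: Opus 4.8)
The plan is to prove the two inequalities separately, observing first that the leftmost one is actually an equality obtained by a one-line computation, and that the rightmost one is a direct application of Jensen's inequality. The only structural fact I need is the factorisation of each EHMM into an ES-prior times a shared likelihood: marginalising the joint \eqref{eq:EHMM.joint} over all state paths consistent with a given expert sequence pulls the factor $\ehmm{P}[\xi^t](x^t)$ (which depends only on the fixed, observed expert predictions, not on the HMM machinery) outside the sum over paths, yielding $\Q(\xi^t,x^t)=\Q(\xi^t)\,\ehmm{P}[\xi^t](x^t)$ and likewise $\R(\xi^t,x^t)=\R(\xi^t)\,\ehmm{P}[\xi^t](x^t)$, exactly as in \eqref{eq:ES.joint}. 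Here $\Q(\rv\xi^t)$ and $\R(\rv\xi^t)$ denote the respective marginal ES-priors on expert sequences, and the likelihood factor $\ehmm{P}[\xi^t](x^t)$ is literally the same object in both models.

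First I would unfold the middle expectation using $\V(\rv\xi^t)=\R(\rv\xi^t\mid x^t)=\R(\xi^t)\ehmm{P}[\xi^t](x^t)/\R(x^t)$:
\[
\Exp_\V\sbr[\bigg]{\frac{\Q(\rv\xi^t)}{\R(\rv\xi^t)}}
=\sum_{\xi^t}\frac{\R(\xi^t)\ehmm{P}[\xi^t](x^t)}{\R(x^t)}\cdot\frac{\Q(\xi^t)}{\R(\xi^t)}
=\frac{1}{\R(x^t)}\sum_{\xi^t}\Q(\xi^t)\ehmm{P}[\xi^t](x^t)
=\frac{\Q(x^t)}{\R(x^t)}.
\]
The decisive step is the cancellation of the $\R(\xi^t)$ factor, after which the surviving sum collapses to the marginal likelihood $\Q(x^t)$ precisely because $\Q$ shares the same likelihood factor. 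Taking $-\ln$ of both ends turns this into the identity $-\ln\Exp_\V\sbr{\Q/\R}=\ln\del*{\R(x^t)/\Q(x^t)}$, so the leftmost inequality holds with equality.

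For the rightmost inequality I would rewrite $\ln\del*{\R(\rv\xi^t)/\Q(\rv\xi^t)}=-\ln\del*{\Q(\rv\xi^t)/\R(\rv\xi^t)}$ and apply Jensen's inequality to the concave logarithm,
\[
\Exp_\V\sbr[\bigg]{\ln\frac{\Q(\rv\xi^t)}{\R(\rv\xi^t)}}
~\le~
\ln\Exp_\V\sbr[\bigg]{\frac{\Q(\rv\xi^t)}{\R(\rv\xi^t)}},
\]
and then negate both sides, which produces exactly $-\ln\Exp_\V\sbr{\Q/\R}\le\Exp_\V\sbr{\ln(\R/\Q)}$, the claimed second inequality.

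There is no genuine analytic difficulty here; the part requiring the most care is bookkeeping. I would be explicit that $\Q(\rv\xi^t)$ and $\R(\rv\xi^t)$ are the marginal ES-priors (not posteriors), so that the collapse of the sum to $\Q(x^t)$ is justified, and I would guard against division by zero by noting that the ratio $\Q(\rv\xi^t)/\R(\rv\xi^t)$ is evaluated only $\V$-almost-surely, i.e.\ on sequences with $\R(\xi^t)>0$, which are exactly those carrying positive weight under $\V$; hence every expression above is well defined wherever it contributes.
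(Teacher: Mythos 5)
Your route is essentially the paper's: expand the expectation under $\V$, cancel the shared likelihood factor $\ehmm{P}[\xi^t](x^t)$ using the factorisation $\Q(x^t,\xi^t)=\Q(\xi^t)\,\ehmm{P}[\xi^t](x^t)$, and finish with Jensen's inequality. The factorisation argument and the Jensen step are both correct.

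However, there is one genuine error: your claim that the leftmost inequality holds with equality is false in general, and the flaw sits exactly in the step you dismissed as bookkeeping. The expectation $\Exp_\V\bigl[\Q(\rv\xi^t)/\R(\rv\xi^t)\bigr]$ runs only over the support of $\V$, i.e.\ over those $\xi^t$ with $\R(x^t,\xi^t)>0$. After cancelling $\R(\xi^t)$ you are therefore left with $\frac{1}{\R(x^t)}\sum_{\xi^t:\,\R(x^t,\xi^t)>0}\Q(\xi^t)\,\ehmm{P}[\xi^t](x^t)$, and this \emph{restricted} sum can be strictly smaller than $\Q(x^t)/\R(x^t)\cdot\R(x^t)$ whenever there are expert sequences to which $\Q$ assigns positive mass but $\R$ does not. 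Such cases are squarely within the lemma's scope: take $\R=\ehmm{b}[w]$ (which puts all its mass on constant expert sequences) and $\Q=\ehmm{fs}[w,\alpha]$ with $0<\alpha<1$; the restricted sum then misses every non-constant sequence, so the first relation is strict. Fortunately the error is in the harmless direction: the correct conclusion of your computation is $\Exp_\V\bigl[\Q(\rv\xi^t)/\R(\rv\xi^t)\bigr]\le\Q(x^t)/\R(x^t)$, taking $-\ln$ reverses this into precisely the first inequality of the lemma, and the remainder of your argument is untouched. With that correction your proof coincides with the paper's, which deliberately writes the first step as an inequality obtained by summing only over $\xi^t$ with $\R(x^t,\xi^t)>0$.
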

\begin{IEEEproof}
  Rewrite
  \[\begin{split}
  \frac{\Q(x^t)}{\R(x^t)}
&\ge~\kern-1.5em\sum_{\xi^t : \R(x^t,\xi^t) > 0}\kern-0.7em\frac{\R(x^t,\xi^t)}{\R(x^t)}\cdot\frac{\Q(x^t,\xi^t)}{\R(x^t,\xi^t)}\\
&=~\Exp_\V\sbr*{\frac{\Q(x^t,\rv{\xi}^t)}{\R(x^t,\rv{\xi}^t)}}
~=~\Exp_\V\sbr*{\frac{\Q(\rv{\xi}^t)}{\R(\rv{\xi}^t)}},\end{split}\]
  % NOTE: De eerste inequality was een equality. Maar equality geldt
  % alleen als \pi << \rho (\pi abs. cont. w.r.t. \rho, i.e. als \pi
  % ergens kans >0 geeft, dan doet \rho dat ook)
  take the $-\ln$ and subsequently apply Jensen's inequality.
\end{IEEEproof}
%
% DONE: notation clash: $Q_s$ zijn al de *silent* states. Veranderd in Qpara
\newcommand{\Qpara}{\wstates^\dagger}
\newcommand{\cnt}{n}
\newcommand{\tsp}{\top}

\noindent Although this bound still involves the actual data through
the distribution $\V$, sometimes the expectation can be replaced by a
maximum over $\xi^t$. This may be sufficiently sharp for the job at
hand if a good uniform bound is available for the ES-priors.  However,
it is possible to say more about the regret if $\Q$ and $\R$ share a certain
structure.

%- M&L doen FS en volledig vrije transitiematrix als parameters
%- hieronder beschrijven we een generalisatie voor modellen met een kleiner aantal parameters (dan de volledige matrix)
%- het lemma laat zien dat je in een beperkt model veel dichter bij de ML kunt komen

The next and final lemma applies to EHMMs in which some of the transition probabilities are a function of a parameter vector $\beta$. Intuitively, models with fewer parameters have more constrained transition dynamics and are hence less expressive, making it easier to compete with the maximum likelihood parameter values. 

The lemma is a generalisation of Theorems~1 and~3
in~\cite{Monteleoni2003}\footnote{Theorem~1 also appears
in~\cite{Jaakkola2003}. Although both theorems are valid, Theorem~3
does not generalise Theorem~1; there is a problem with Lemma~3.3.1 in
the cited work.}. Theorem~1 is concerned with the special case of Fixed Share, where the transition matrix is parameterised by the switching rate $\alpha$. Theorem~3 applies to unconstrained Markov transition dynamics (with $\card{\Xi}^2$ parameters). 
The lemma below yields sharper results for restricted transition dynamics that can be expressed in exponential family form. 
%Moreover, it can be applied to state spaces other than $\Xi$.
%
We will apply this result to reobtain Monteleoni \& Jaakkola's sophisticated regret bound \cite{Jaakkola2003} of Fixed Share (in Theorem~\ref{thm:fs.parambound}) and then use it for the kernel models of Sections~\ref{sec:oe} and~\ref{sec:drift} (Theorem~\ref{thm:kernel_ml_regret}), for which the added generality is essential.

The parameterisation of the EHMM should be of the following form.  Let
$T_\beta(j)=e^{\beta^\tsp\phi(j)}h(j)/Z(\beta)$ be an exponential
family of distributions with parameter vector $\beta$, some sufficient
statistic $\phi$, carrier $h$ and normalisation $Z(\beta)=\sum_j
e^{\beta^\tsp\phi(j)}h(j)$, where $j$ takes values in a finite set
$\mathcal J$.  Let $\Qpara\subseteq\wstates$ be a subset of the state
space for which the transition probabilities are governed by this
exponential family model $T_\beta$ in the sense that there is an
injective function $S:\Qpara\times\mathcal J\to\wstates$ that
indicates for each such state how the symbols in $\mathcal J$ map to
the successor states:
\[
\wtf(\wstate\to S(q,j))=T_\beta(j).
\]
Intuitively, the more often such parameterised transitions are traversed, the larger the difference between two distinct values for $\beta$.
Therefore we need to count the number of transitions from states in $\Qpara$. 
Define the random variable $\rv \wstate^{(t)}(\wstate^\infty) \df \wstate^\lambda$ where $\lambda$ is chosen such that $\wstate_\lambda$ is the $t$th productive state of $\wstate^\infty$. We assume throughout that $\rv \wstate^{(t)}$ is well-defined almost surely for all $t$. Further define $\cnt_j(\wstate^{(t)}) \df \card{\setc{i}{0\le i<\lambda,\wstate_i\in
    \Qpara,S(\wstate_i,j)=\wstate_{i+1}}}$, where $\lambda$ is the length of $\wstate^{(t)}$, so that $\cnt_j(\wstate^{(t)})$ is the number of
transitions in $\wstate^{(t)}$ between states in $\Qpara$ and their
$j$-successors, and let $\cnt(\wstate^{(t)})=\sum_j
\cnt_j(\wstate^{(t)})$ be the total parametrised transition count. We can now state our result:

\newcommand{\W}{{\ehmm{W}[]}}
\begin{lemma}[Regret w.r.t.\ ML Parameter $\hat\beta$]\label{lem:parambound}
  Let $\ehmm{q}[\beta]$ be parameterised as defined above. Fix
  outcomes $x^t$ and let $\hat\beta=\argmax_\beta \ehmm{q}[\beta](x^t)$ with the assumption that $\ehmm{q}[\hat\beta](x^t)>0$.
  Furthermore let $\W(\rv \wstate^{(t)})=\ehmm{Q}[\hat\beta](\rv\wstate^{(t)}|x^t)$ denote the posterior distribution of $\rv \wstate^{(t)}$ under prior $\ehmm{q}[\hat\beta]$. We then have
  \[\begin{split}
  \ln\frac{\ehmm{q}[\hat\beta](x^t)}{\ehmm{q}[\beta](x^t)}
  &~\le~
  -\ln \Exp_\W\sbr*{\frac{\ehmm{q}[\beta](\rv{\wstate}^{(t)})}{\ehmm{q}[\hat\beta](\rv{\wstate}^{(t)})}}
  ~\le~
  \Exp_\W\sbr*{\ln\frac{\ehmm{q}[\hat\beta](\rv{\wstate}^{(t)})}{\ehmm{q}[\beta](\rv{\wstate}^{(t)})}}\\
  &~=~
  \Exp_\W[\cnt(\rv{\wstate}^{(t)})]\kld\delcc[\big]{T_{\hat\beta}}{T_\beta},
  \end{split}\]
where $\kld$ is the Kullback-Leibler divergence.
\end{lemma}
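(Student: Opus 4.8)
The plan is to treat the two inequalities as a state-sequence analogue of Lemma~\ref{lem:toexpectation} and to reserve the real work for the closing equality. First I would observe that, by the EHMM joint \eqref{eq:EHMM.joint}, the likelihood of the data given a complete state sequence depends only on the experts $\wnl$ assigns along that sequence, hence not on $\beta$. Consequently the shared likelihood factor cancels and $\ehmm{q}[\beta](\rv\wstate^{(t)},x^t)/\ehmm{q}[\hat\beta](\rv\wstate^{(t)},x^t) = \ehmm{q}[\beta](\rv\wstate^{(t)})/\ehmm{q}[\hat\beta](\rv\wstate^{(t)})$. Repeating the computation from the proof of Lemma~\ref{lem:toexpectation} verbatim, but marginalising $\ehmm{q}[\beta](x^t)$ over \emph{state} sequences rather than expert sequences and restricting the sum to those $\wstate^{(t)}$ with $\ehmm{q}[\hat\beta](\wstate^{(t)},x^t)>0$, yields $\ehmm{q}[\beta](x^t)/\ehmm{q}[\hat\beta](x^t) \ge \Exp_\W\sbr*{\ehmm{q}[\beta](\rv\wstate^{(t)})/\ehmm{q}[\hat\beta](\rv\wstate^{(t)})}$; taking $-\ln$ gives the first inequality and Jensen's inequality (convexity of $-\ln$) the second. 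Working at the finer granularity of state sequences rather than expert sequences is essential, as it is exactly what exposes the per-symbol transition counts.

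For the equality I would unfold the prior over a state sequence as $\ehmm{q}[\beta](\wstate^{(t)}) = \prod_{i=0}^{\lambda-1}\wtf(\wstate_i\to\wstate_{i+1})$. Since every transition out of a state $q\in\Qpara$ is of the form $q\to S(q,j)$ with probability $T_\beta(j)$, and $S$ is injective, the parametrised transitions contribute precisely $\prod_j T_\beta(j)^{\cnt_j(\wstate^{(t)})}$, whereas every non-parametrised transition carries a probability independent of $\beta$ and therefore cancels in the ratio. Hence $\ln\ehmm{q}[\hat\beta](\wstate^{(t)})/\ehmm{q}[\beta](\wstate^{(t)}) = \sum_j \cnt_j(\wstate^{(t)})\,\ell_j$ with $\ell_j \df \ln(T_{\hat\beta}(j)/T_\beta(j))$, and taking the $\W$-expectation reduces the target to showing $\sum_j \Exp_\W[\cnt_j(\rv\wstate^{(t)})]\,\ell_j = \Exp_\W[\cnt(\rv\wstate^{(t)})]\,\kld\delcc[\big]{T_{\hat\beta}}{T_\beta}$.

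Here the exponential family form does the decisive work: $\ell_j = (\hat\beta-\beta)^\tsp\phi(j) + \ln(Z(\beta)/Z(\hat\beta))$ is affine in the sufficient statistic $\phi(j)$. The additive constant $\ln(Z(\beta)/Z(\hat\beta))$ contributes that constant times $\Exp_\W[\cnt(\rv\wstate^{(t)})]$ to the left-hand side and the same multiple to the right-hand side (because $T_{\hat\beta}$ and the count-weighted empirical symbol distribution $j\mapsto \Exp_\W[\cnt_j]/\Exp_\W[\cnt]$ are both normalised), so it cancels. The equality then collapses to the moment-matching identity $\sum_j\Exp_\W[\cnt_j(\rv\wstate^{(t)})]\,\phi(j) = \Exp_\W[\cnt(\rv\wstate^{(t)})]\sum_j T_{\hat\beta}(j)\phi(j)$.

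The crux, and the step I expect to be the main obstacle, is establishing this moment-matching identity from the defining property of $\hat\beta$; I would obtain it as the first-order optimality condition of the marginal log-likelihood. Differentiating $\ln\ehmm{q}[\beta](x^t) = \ln\sum_{\wstate^{(t)}}\ehmm{q}[\beta](\wstate^{(t)},x^t)$ and using the standard exponential family score identity $\nabla_\beta\ln T_\beta(j) = \phi(j) - \sum_{j'}T_\beta(j')\phi(j')$, one finds
\[
\nabla_\beta\ln\ehmm{q}[\beta](x^t) = \Exp_{\ehmm{q}[\beta](\cdot\mid x^t)}\sbr*{\textstyle\sum_j \cnt_j(\rv\wstate^{(t)})\,\phi(j) - \cnt(\rv\wstate^{(t)})\sum_{j'}T_\beta(j')\phi(j')}.
\]
Setting this to zero at $\beta=\hat\beta$, where the posterior $\ehmm{q}[\hat\beta](\cdot\mid x^t)$ is by definition exactly $\W$, produces precisely the required identity. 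The care needed is to justify that the maximiser $\hat\beta$ is interior so the gradient vanishes (guaranteed away from degeneracies by the assumption $\ehmm{q}[\hat\beta](x^t)>0$) and that differentiation commutes with the finite-support marginalisation, both routine given the finiteness of $\mathcal J$.
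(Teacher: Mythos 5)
Your proposal is correct and follows essentially the same route as the paper's proof: the two inequalities via Lemma~\ref{lem:toexpectation} applied at the level of state sequences, the decomposition of the log prior ratio into transition counts times $\ln(T_{\hat\beta}(j)/T_\beta(j))$, the affine exponential-family form of these log ratios, and the moment-matching identity obtained as the first-order optimality condition at $\hat\beta$. The only cosmetic difference is that you differentiate $\ln\ehmm{q}[\beta](x^t)$ directly using the score identity, whereas the paper differentiates the ratio $\ehmm{q}[\beta](x^t)/\ehmm{q}[\hat\beta](x^t)$ and identifies it with the gradient of the $\W$-expected log ratio --- the same computation.
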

\noindent As before, the goal of this lemma is to say something about
the overhead incurred by using a particular strategy $\Q_\beta$ instead
of the reference strategy $\Q_{\hat\beta}$. The result still depends on
the data via the distribution $\W$, but in applications of the lemma
the idea will be to replace $\Exp_\W[\cnt(\rv{\wstate}^{(t)})]$ by a
bound on the number of states in $\Qpara$ that may be traversed by relevant
state sequences.

As an important special case, the distribution on $\mathcal J$ may be
fully specified by a multinomial distribution with parameter vector
$\mw$; we can then apply the lemma above by setting
$\phi(j)=(0,\ldots,1,0,\ldots)$, with the $1$ appearing at the
$j^\text{th}$ position, and $h(j)=1$. Instantiated in this way the
Lemma expresses a bound with respect to all possible distributions on
the successor state, much like Theorem~3 of
\cite{Monteleoni2003}. However, if we choose a more constrained
exponential family, the reference strategy $T_{\hat\beta}$ has fewer
degrees of freedom and the divergence
$\kld\delcc{T_{\hat\beta}}{T_\beta}$ that appears in the bound is
reduced. This can be used to obtain
Theorem~1 of \cite{Monteleoni2003} (which also appears here as
Theorem~\ref{thm:fs.parambound}); we also use it to prove
Theorem~\ref{thm:kernel_ml_regret} below.
\begin{IEEEproof}[Proof of \lemref{lem:parambound}]
The first two inequalities are \lemref{lem:toexpectation} on the level of state sequences. The
  contribution of this lemma lies in the last equality. 
First expand

  \begin{equation}\label{eq:expln}
    \begin{split}
&\Exp_\W\sbr*{\ln\frac{\ehmm{q}[\hat\beta](\rv{\wstate}^{(t)})}{\ehmm{q}[\beta](\rv{\wstate}^{(t)})}}
~=~\Exp_\W\sbr*{\sum_{j\in\mathcal
    J}\cnt_j(\rv{\wstate}^{(t)})\ln\frac{T_{\hat\beta}(j)}{T_{\beta}(j)}}
\\
&~=~\sum_j
\Exp_\W[\cnt_j(\rv{\wstate}^{(t)})]\del*{(\hat\beta-\beta)^\tsp\phi(j)+\ln\frac{Z(\beta)}{Z(\hat\beta)}}
\\
&~=~(\hat\beta-\beta)^\tsp\sum_j\phi(j)\Exp_\W[\cnt_j(\rv{\wstate}^{(t)})]+\Exp_\W[\cnt(\rv{\wstate}^{(t)})]\ln\frac{Z(\beta)}{Z(\hat\beta)}.
\end{split}
\end{equation}
Since $\hat\beta$ maximises the probability of $\ehmm{q}[\beta]$ and
since $\nabla_\beta \ln \ehmm{q}[\beta](x^t,\wstate^{(t)}) = \nabla_\beta \ln
\del[\big]{\ehmm{q}[\beta](\wstate^{(t)})/\ehmm{q}[\hat\beta](\wstate^{(t)})}$
we obtain\footnote{ Recall that for a function $f : \reals^k \to
  \reals$, the vector differential $\nabla_\beta f$ is defined as
  the column vector $\del*{\frac{\partial f}{\partial \beta_1},
    \ldots, \frac{\partial f}{\partial \beta_k} }$.  }
\newcommand{\delval}[2][\bigg]{\nabla_\beta #2 #1|_{\beta=\hat\beta}}
\[\begin{split}
\vec 0 
&=
-\delval{ \frac{\ehmm{q}[\beta](x^t)}{\ehmm{q}[\hat\beta](x^t)} }=
-\frac{\sum_{\wstate^{(t)}} \delval[\big]{ \ehmm{q}[\beta](x^t,\wstate^{(t)})}}{\ehmm{q}[\hat\beta](x^t)}\\
%=
%\frac{\sum_{\wstate^{(t)}} \ehmm{q}[\hat\beta](x^t,\wstate^{(t)})
%  \delval[\big]{ \ln
%    \ehmm{q}[\beta](x^t,\wstate^{(t)})}}{\ehmm{q}[\hat\beta](x^t)}\\
&=
-\sum_{\wstate^{(t)}} \frac{\ehmm{q}[\hat\beta](x^t,\wstate^{(t)})}{\ehmm{q}[\hat\beta](x^t)} \delval[\big]{ \ln \ehmm{q}[\beta](x^t,\wstate^{(t)})}\\
&=
\delval[\bigg]{ \Exp_\W \sbr[\bigg]{  \ln \frac{\ehmm{q}[\hat\beta](\rv\wstate^{(t)})}{\ehmm{q}[\beta](\rv\wstate^{(t)})}}}.
\end{split}\]
This shows that the vector differential of~\eqref{eq:expln} must be zero at $\hat \beta$.
Reordering terms we obtain
\begin{equation}\label{eq:deriv}\begin{split}
\sum_j\phi(j)\Exp_\W[\cnt_j(\rv{\wstate}^{(t)})]
&=
\Exp_\W[\cnt(\rv{\wstate}^{(t)})]\nabla_\beta\ln\frac{Z(\beta)}{Z(\hat\beta)}\bigg|_{\beta=\hat\beta}\\
&=
\Exp_\W[\cnt(\rv{\wstate}^{(t)})]\mathop{\Exp}_{\rv{j}\sim T_{\hat\beta}}[\phi(\rv{j})],
\end{split}\end{equation}
where the last step follows from
\[
  \nabla_\beta\ln {Z(\beta)}
=\frac{\nabla_\beta
    Z(\beta)}{Z(\beta)}
=\sum_{j\in\mathcal
    J}\frac{e^{\beta^\tsp\phi(j)}h(j)}{Z(\beta)}\phi(j)
=\mathop{\Exp}_{\rv{j}\sim
    T_{\beta}}[\phi(\rv{j})].
\]
Using~\eqref{eq:deriv} we may now simplify~\eqref{eq:expln} to
\[\begin{split}
  &\Exp_\W\sbr*{\ln\frac{\ehmm{q}[\hat\beta](\rv{\wstate}^{(t)})}{\ehmm{q}[\beta](\rv{\wstate}^{(t)})}}\\
&=
\Exp_\W[\cnt(\rv{\wstate}^{(t)})]\del*{(\hat\beta-\beta)^\tsp\mathop{\Exp}_{\rv{j}\sim
      T_{\hat\beta}}[\phi(\rv{j})]+\ln\frac{Z(\beta)}{Z(\hat\beta)}}\\
&=\Exp_\W[\cnt(\rv{\wstate}^{(t)})]\kld\delcc[\big]{T_{\hat\beta}}{T_\beta},
\end{split}\]
completing the proof.
\end{IEEEproof}
We now apply the lemma to our two running examples. Note that in
both cases, the model is parameterised such that the total number of
parameterised transitions is known.

\begin{example}[Bayesian Mixture Regret]\label{ex:bayesbound2}
  We have already obtained the bound~\eqref{eq:drop_terms} for
  $\ehmm{b}[w]$ using \lemref{lem:dropmarg}, but it is
  instructive to do the same using \lemref{lem:parambound}.  Let
  $\Qpara$ contain just the initial silent state and identify the experts
  $\Xi$ with $\mathcal J$. We now have $\Exp_\W[\cnt(\rv{\wstate}^{(t)})]=1$, so the
  lemma tells us that our regret is $\kld\delcc{\hat w}{w}$,
  where $\hat w$ is the \emph{hindsight optimal} prior weight
  vector that maximises the probability of the available data, and
  $w$ is the prior we actually use. Now observe that in order to
  maximise probability, $\hat w$ must assign all mass to a single
  expert $\hat\xi$, so $\kld\delcc{\hat w}{w}=-\ln w(\hat\xi)$ as
\  before.
\qedex
\end{example}

\begin{example}[Elementwise Mixture Regret]\label{ex:elmixbound}
  We now compute the regret of $\ehmm{em}[\mw]$. Let $\Qpara$
  contain the silent states and again identify the experts $\Xi$ with
  $\mathcal J$.  For elementwise mixtures,
  $\Exp_\W[\cnt(\rv{\wstate}^{(t)})]=t$.  So by \lemref{lem:parambound},
  the regret of predicting the outcomes $x^t$ with an elementwise
  mixture with weights $\mw$ instead of the \emph{hindsight optimal}
  mixture weights $\hat\mw$ is bounded by $t
  \kld\delcc{\hat\mw}{\mw}$.  \qedex
\end{example}

\section{Switching Strategies}\label{sec:switching}

\subsection{Fixed Share}\label{sec:fs}
% 1. Fixed Share:
%    plaatje, verhaaltje hoe dat werkt,
%    loss bound, bespreken verschil met loss bound H&W
%    loss bound van M&J bekijken
%    problemen:
%     - aanduiden nadeel dat je alpha moet kiezen, wat leidt tot lin overhead
%     - jammer dat regret blijft groeien in t ook als het # switches
%       bounded is
%     - volgende secties beschrijven methodes die address 1 of beide
%       problemen. Bovendien ...
% 

Mark Herbster and Manfred Warmuth's paper on \emph{tracking the best
  expert} \cite{HerbsterWarmuth1995, HerbsterWarmuth1998} is the first
to consider the scenario where the best predicting expert may change
over time.  They compare the loss of their algorithm to the smallest
loss that can be achieved by splitting the data of size $t$ into $m$
segments, and within each segment, copying the predictions of the
expert who in hindsight turns out to be best for that particular
segment. They give two algorithms called Fixed Share and Variable
Share, but the motivation for the second algorithm applies only to
loss functions other than log-loss (see
\secref{sec:conditioning.on.data}), so we focus on Fixed Share, which
matches the EHMM $\ehmm{fs}[w,\alpha]$ defined in
\figref{graph:mixedshare}.  Note that all arcs \emph{into} the silent
states have fixed probability $\alpha\in[0,1]$ and all arcs
\emph{from} the silent states have some fixed distribution $w$ on
$\Xi$ (the original algorithm uses a uniform $w(\xi)=1/k$). The same
algorithm is also described as an instance of the Aggregating
Algorithm in~\cite{Vovk1999}. Fixed Share reduces to fixed elementwise
mixtures by setting $\alpha=1$ and to Bayesian mixtures by setting
$\alpha=0$.
\begin{figure}
\caption{Fixed Share: $\ehmm{fs}[w,\alpha]$}\label{graph:mixedshare}
\centering
\subfloat{%
$\xymatrix@R=0.5em@C=1.6em{
&\ex{a}\ar[dddr] \ar[rr]&&\ex{a}\ar[dddr] \ar[rr]&&\ex{a}\ar[dddr] \ar[rr]&&\ex{a}\ar@{.>}[r] \ar@{.>}[dddr]&\\
&&&&&&&\\
&\ex{b}\ar[dr]\ar[rr]&&\ex{b}\ar[dr]\ar[rr]&&\ex{b}\ar[dr]\ar[rr]&&\ex{b}\ar@{.>}[r] \ar@{.>}[dr]&\\
\bn\name[+<1pt,9pt>]{0} \ar[uuur]\ar[ur]\ar[dr]\ar[dddr]&&
\si\name[+<1pt,9pt>]{1} \ar[uuur]\ar[ur]\ar[dr]\ar[dddr]&&
\si\name[+<1pt,9pt>]{2} \ar[uuur]\ar[ur]\ar[dr]\ar[dddr]&&
\si\name[+<1pt,9pt>]{3} \ar[uuur]\ar[ur]\ar[dr]\ar[dddr]&&
\\
&\ex{c}\ar[ur]\ar[rr]&&\ex{c}\ar[ur]\ar[rr]&&\ex{c}\ar[ur]\ar[rr]&&\ex{c}\ar@{.>}[r] \ar@{.>}[ur]&\\
&&&&&&&\\
&\ex{d}\ar[uuur]\ar[rr]&&\ex{d}\ar[uuur]\ar[rr]&&\ex{d}\ar[uuur]\ar[rr]&&\ex{d}\ar@{.>}[r] \ar@{.>}[uuur]&
}$
}
\quad
\subfloat{%
\setlength{\abovedisplayskip}{0cm}
\setlength{\belowdisplayskip}{0cm}
\begin{minipage}[t]{.42\textwidth}
\begin{gather*}
\begin{aligned}
\wstates &= \wstates^\sil \cup \wstates^\prd
&
\wstates^\sil &= \nats
&
\wstates^\prd &= \Xi \times \posints
\end{aligned}
\\
\begin{aligned}
\wstart &= 0
&
\wnl(\xi,t) &= \xi
\end{aligned}
\\
\wtf\del*{\begin{aligned}
\tuple{t} &\to \tuple{\xi,t+1}
\\
\tuple{\xi, t} &\to \tuple{t}
\\
\tuple{\xi, t} &\to \tuple{\xi,t+1}
\end{aligned}} 
= 
\del*{\begin{gathered}
w(\xi) 
\\
\alpha
\\
1-\alpha
\end{gathered}}
\end{gather*}
\end{minipage}
}%
\end{figure}
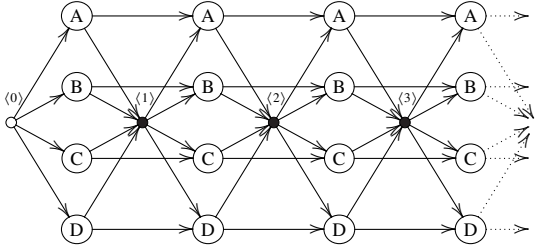%
Each productive state represents that a particular expert is used at a
certain sample size. Once a transition to a silent state is made, all
expert history is forgotten and a new expert is chosen according to
$w$.

We now bound the regret of Fixed Share with respect to a given
partition, i.e.\ sequence of experts.
\begin{theorem}[Herbster and Warmuth \cite{HerbsterWarmuth1998}]\label{thm:fixed.share.loss.bound}
  Fix experts $\Xi$ and data $x^t$, and let $\xi^t$ be a sequence of
  experts with $m$ blocks (i.e.\ $\card{\setc{1 \le i < t}{\xi_{i} \neq \xi_{i+1}}}=m-1$), $k=\card{\Xi}$, and $w(\xi) = 1/k$. Let
  $\alpha^* = (m-1)/(t-1)$ denote the switching frequency in
  $\xi^t$. Write
  $\ent(\alpha^*,\alpha)=-\alpha^*\ln\alpha-(1-\alpha^*)\ln(1-\alpha)$
  for the cross entropy. Then
\begin{equation}\label{eq:fixedshare.lossbound}
\ln \frac{ \ehmm{P}[\xi^t](x^t)}{\ehmm{fs}[w,\alpha](x^t)}  
~\le~ 
m \ln k
+ (t-1) \ent(\alpha^*,\alpha).
\end{equation}
\end{theorem}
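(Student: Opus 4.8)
The plan is to apply \lemref{lem:dropmarg} with $\Q=\ehmm{fs}[w,\alpha]$, which bounds the left-hand side by $-\ln\ehmm{fs}[w,\alpha](\xi^t)$. The entire problem thus reduces to lower bounding the prior probability that the Fixed Share HMM assigns to the reference expert sequence $\xi^t$; once this is in hand, the claimed bound follows by taking the negative logarithm and substituting the stated value of $\alpha^*$.

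To lower bound $\ehmm{fs}[w,\alpha](\xi^t)$, I would read the contributing state paths off \figref{graph:mixedshare}. Since the productive state at position $i$ is forced to carry expert $\xi_i$, the only freedom is how consecutive productive states are linked: a \emph{stay} $\xi_i=\xi_{i+1}$ may use the direct arc (probability $1-\alpha$) or detour through the silent state, whereas a \emph{switch} $\xi_i\neq\xi_{i+1}$ must pass through the silent state $\tuple{i}$ (probability $\alpha\,w(\xi_{i+1})$). For a lower bound it suffices to select a single path: pay $w(\xi_1)$ to leave the start state, take the direct arc at each of the $t-m$ stays, and route each of the $m-1$ switches through a silent state. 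This gives
\[
\ehmm{fs}[w,\alpha](\xi^t)~\ge~w(\xi_1)\,(1-\alpha)^{t-m}\prod_{i:\,\xi_i\neq\xi_{i+1}}\alpha\,w(\xi_{i+1}),
\]
where I have used that among the $t-1$ transitions between successive blocks exactly $m-1$ are switches and the remaining $t-m$ are stays.

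Finally I would substitute the uniform prior $w\equiv 1/k$, so the right-hand side collapses to $k^{-m}(1-\alpha)^{t-m}\alpha^{m-1}$. Taking $-\ln$ yields $m\ln k-(t-m)\ln(1-\alpha)-(m-1)\ln\alpha$, and rewriting $m-1=\alpha^*(t-1)$ and $t-m=(1-\alpha^*)(t-1)$ turns the last two terms into exactly $(t-1)\ent(\alpha^*,\alpha)$, completing the argument. The one point requiring care is the bookkeeping around stays: because a stay can also be realised through the silent state, the true prior is strictly larger than the single path selected, but discarding this extra mass only strengthens the lower bound while keeping the counting clean. Correctly tallying switches against stays, and matching them to the exponents of $\alpha$ and $1-\alpha$, is the only genuinely delicate step.
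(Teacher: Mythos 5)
Your proposal is correct and follows essentially the same route as the paper: both apply Lemma~\ref{lem:dropmarg} and then lower bound $\ehmm{fs}[w,\alpha](\xi^t)$ by the probability of the single state path that takes the direct arc at every stay and routes every switch through a silent state, yielding $k^{-m}(1-\alpha)^{t-m}\alpha^{m-1}$. Your additional remark that stays realised via silent states only contribute extra mass (so the single-path bound is safe) is a correct observation that the paper leaves implicit.
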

\begin{IEEEproof}
Let $\wstate^{(t)}$ be the state sequence that produces $\xi^t$ and that passes through silent state $\tuple{i}$ iff $\xi_i \neq \xi_{i+1}$.
Then
\[\begin{split}
\frac{\ehmm{fs}[w,\alpha](x^t)}{ \ehmm{p}[\xi^t](x^t)}
&\mathop{\ge}^{\text{by~\eqref{eq:bound1}}}~
\ehmm{fs}[w,\alpha](\xi^t)
~\ge~
\ehmm{fs}[w,\alpha](\wstate^{(t)})\\
&=
k^{-m} (1-\alpha)^{t-m}\alpha^{m-1}.
\end{split}\]
Taking logarithms and substituting $\alpha^*$ completes the proof.
\end{IEEEproof}
Note that in Herbster and Warmuth's algorithm, switches to the same
expert are disallowed, allowing them to derive a bound with $\ln k +
(m-1) \ln (k-1)$ instead of our $m \ln k$. We do allow switches to the
same expert in all our models to keep the exposition clean and simple,
but this can be changed in the same way that this is done in the Fixed
Share paper.

While $\alpha^*$ optimises the bound, it does not necessarily maximise
the probability of the data. We therefore also calculate the regret of
$\ehmm{fs}[w,\alpha]$ with respect to the set of Fixed Share
algorithms for all $\alpha\in[0,1]$. 

\begin{theorem}[Monteleoni and Jaakkola
  \cite{Jaakkola2003}]\label{thm:fs.parambound}
For all data $x^t$ and switching rate $\alpha\in[0,1]$,
\[
\ln \frac{\ehmm{fs}[w,\hat\alpha](x^t)}{\ehmm{fs}[w,\alpha](x^t)}
~\le~ 
(t-1)\kld\delcc*{\hat\alpha}{\alpha},\]
where $\hat\alpha$ maximises the Fixed Share likelihood.
\end{theorem}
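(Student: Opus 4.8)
The plan is to obtain this bound as a direct instance of \lemref{lem:parambound} applied to the Fixed Share model $\ehmm{fs}[w,\alpha]$ of \figref{graph:mixedshare}, the point being that $\alpha$ governs precisely the transitions \emph{out of} the productive states. First I would take the parameterised subset $\Qpara$ to be all productive states $\wstates^\prd = \Xi\times\posints$, and recognise the stay/switch decision made at each such state as a Bernoulli exponential family: set $\mathcal J=\set{0,1}$ with $j=1$ marking a switch, $\phi(j)=j$, $h(j)=1$, natural parameter $\beta=\ln\frac{\alpha}{1-\alpha}$ and $Z(\beta)=1+e^\beta$, so that $T_\beta(1)=\alpha$ and $T_\beta(0)=1-\alpha$. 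The associated successor map $S$ sends $\tuple{\xi,t}$ to $\tuple{t}$ under $j=1$ and to $\tuple{\xi,t+1}$ under $j=0$, matching the kernel $\wtf$ of the figure exactly.

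Next I would read off the parametrised transition count $\cnt(\wstate^{(t)})$. Every arc \emph{into} a productive state originates from either the start state or a silent state and carries weight $w(\xi)$, which does not depend on $\alpha$; the only $\alpha$-dependent transitions are the stay/switch arcs leaving productive states. Since $\wstate^{(t)}$ terminates at the $t$th productive state, exactly the first $t-1$ productive states each emit one such arc, so $\cnt(\wstate^{(t)}) = t-1$ for every admissible state sequence. The count is thus a deterministic constant, and $\Exp_\W[\cnt(\rv{\wstate}^{(t)})]$ collapses to $t-1$ with no dependence on the data or on the posterior $\W$.

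Then I would invoke \lemref{lem:parambound}. The maximiser $\hat\beta$ of $\ehmm{q}[\beta](x^t)$ corresponds, under the monotone bijection $\beta\leftrightarrow\alpha$, to the maximiser $\hat\alpha$ of the Fixed Share likelihood $\ehmm{fs}[w,\alpha](x^t)$. The lemma then bounds the regret by $\Exp_\W[\cnt(\rv{\wstate}^{(t)})]\,\kld\delcc*{T_{\hat\beta}}{T_\beta} = (t-1)\,\kld\delcc*{T_{\hat\alpha}}{T_\alpha}$, and because $T_\alpha$ is the Bernoulli$(\alpha)$ law this divergence is exactly the binary relative entropy $\kld\delcc*{\hat\alpha}{\alpha}$, yielding the claimed bound.

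The step needing genuine care is the verification in the first paragraph that $\ehmm{fs}[w,\alpha]$ really conforms to the template of \lemref{lem:parambound}: all switch arcs funnel into the shared silent states $\tuple{t}$, so the global map $S$ is not injective across states (every expert's switch at sample size $t$ lands on the same $\tuple{t}$). I would therefore check that the only feature the lemma actually exploits is that \emph{within} each parameterised state the symbols of $\mathcal J$ index distinct successors — which is what makes $\ln\bigl(\ehmm{q}[\hat\beta](\wstate^{(t)})/\ehmm{q}[\beta](\wstate^{(t)})\bigr)$ equal to the count-weighted sum $\sum_{j}\cnt_j(\wstate^{(t)})\ln\bigl(T_{\hat\beta}(j)/T_\beta(j)\bigr)$ used at the start of the lemma's proof. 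Since stay and switch do lead to distinct successors, this per-state injectivity holds and the shared-silent-state structure is harmless; everything else reduces to the routine identification of the count as $t-1$.
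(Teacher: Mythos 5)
Your proof is correct and takes essentially the same route as the paper, whose entire argument is to apply \lemref{lem:parambound} with $\Qpara=\wstates^\prd$; you merely supply the details the paper leaves implicit (the Bernoulli exponential family for the stay/switch decision and the deterministic count $\cnt(\wstate^{(t)})=t-1$). Your closing observation --- that the global successor map $S$ is not injective because all switch arcs at time $t$ land on the shared silent state $\tuple{t}$, but that the lemma's proof only uses injectivity of $j\mapsto S(\wstate,j)$ for each fixed $\wstate$ --- is a valid and worthwhile clarification of the lemma's hypotheses that the paper glosses over.
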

\begin{proof}
  Apply \lemref{lem:parambound}, setting $\Qpara$ to $\wstates^\prd$,
  the set of all productive states, whose outgoing transitions are
  parameterised by the switching rate $\alpha$.
\end{proof}
Judging from this theorem and from \eqref{eq:fixedshare.lossbound},
the regret appears to grow linearly with time, but if we substitute
the switching rate $\alpha=\alpha^*$ that optimises the bound, cross
entropy reduces to ordinary entropy, and we find that if $m$ is kept
fixed, the regret only has a logarithmic dependence on $t$: we have
\begin{equation}\label{eq:fs.sandwich}
  (m-1)\ln\frac{t-1}{m-1}\le(t-1)\ent(\alpha^*)\le(m-1)\ln\frac{t-1}{m-1}+m
\end{equation}
where $\ent(\alpha^*) = \ent(\alpha^*,\alpha^*)$ is the binary entropy.
The problem is that such asymptotics can only be achieved if we are
somehow able to guess the optimal switching rate before observing the
data. This issue is addressed in the following sections. We will
evaluate the performance of the other models for expert tracking using
the loss of Fixed Share with $\alpha=\alpha^*$ as a baseline.

\subsection{Intermezzo: Interpolation}\label{sec:interpolation}
% 2. Intermezzo: interpolaties
%    Fixed share is combi van bayes en elem. mixtures, er wordt
%    telkens geinterpoleerd met gewicht alpha. We generaliseren
%    deze manier van interpoleren zodat zowel de twee oorspr. modellen
%    als de mixture gewichten, worden beschreven door (min of meer)
%    arbitraire hmms. (B0 beschrijft wat er normaal gebeurt, B1 wat er 
%    gebeurt bij een switch, en A de timing).
%
%    Reconstrueer Fixed Share
%    
Note how Fixed Share (\figref{graph:mixedshare}) interpolates between
the Bayesian Mixture (\figref{graph:bayes}) and the Elementwise
Mixture (\figref{graph:fixmix}). The parameter $\alpha$ determines
\emph{when} switches occur.  If no switch occurs then the Bayesian
Mixture's transitions are used: all experts' weights remain unchanged.
On the other hand, if a switch occurs then the Elementwise Mixture's
transitions are used: all experts' weights are gathered and
redistributed. Fixed Share can thus be interpreted as an algorithm
that interpolates between the Bayesian and Elementwise mixtures.

In this section we first give an intuitive high-level definition of
interpolation, and then follow it up with a detailed definition that
carefully manipulates silent states for the sake of efficiency.

% credit to Tim&Steven
Interpolations are natural to the switching
domain. In~\cite{DeRooijVanErven09}, Bernoulli HMMs are used to
produce switching rates; these are really just interpolators as
defined below. In similar fashion, interpolation allows us to lift
pretty much any algorithm for predicting binary data, such as
\cite{willems96:coding_piecew,WillemsKrom1997}, to the context of
prediction with expert advice.

% credit to Wouter&Tim
Interpolations are useful as a tool to build models in a modular
fashion. For example, the Bayesian Mixture model
(\figref{graph:bayes}) can be interpreted as a method for learning
which expert is best; the Fixed Share algorithm augments this model by
introducing the possibility to reset the weights, so that the Bayesian
learning process starts anew. By reinterpreting Fixed Share as an
interpolation between the Bayesian Mixture and the Elementwise
Mixture, as we do below, it becomes possible to replace the Bayesian
component of Fixed Share with any other EHMM. Useful examples include
EHMMs that are designed to learn elementwise mixture
coefficients. This endows the model with the option to reset its
state, effectively restarting the learning process. Such possibilities
are explored in detail
in~\cite{koolen09:_fixed_share_for_learn_exper}.

% the whole notation shebang for the interpolator
\newcommand{\C}{{\ehmm{C}[]}}
\newcommand{\Cstates}{C}
\newcommand{\Cstate}{c}
\newcommand{\Cstart}{\Cstate_0}
\newcommand{\Cfake}{\Cstate_\star}
\newcommand{\Cnl}{\Sigma}
\newcommand{\Ctf}{\wtf} % no local notation conflict, but maybe not so nice since it does not remind us of \C

% the whole notation shebang for the interpolation
\newcommand{\Rstates}{R}
\newcommand{\Rstate}{r}
\newcommand{\Rstart}{\Rstate_0}
\newcommand{\Rnl}{\wnl^\otimes} % not so nice
\newcommand{\Rtf}{\wtf^\otimes} % not so nice

% the whole notation shebang for the behaviour EHMMs
\newcommand{\Bstates}{Q}
\newcommand{\Bstate}{q}
\newcommand{\Bstart}{\Bstate_0}
\newcommand{\Bnl}{\wnl} % not so nice
\newcommand{\Btf}{\wtf} % not so nice

Interpolation works as follows. We start with two EHMMs $\Q_\nsw$ and
$\Q_\swi$ on state space $\Bstates$. We interpret $\Q_\nsw(\rv
\Bstate_{i+1} | \rv \Bstate_i)$ as a specification of the state
evolution under normal ($\nsw$) behaviour, while we regard
$\Q_\swi(\rv \Bstate_{i+1} | \rv \Bstate_i)$ as a codification of what
happens upon a switch ($\swi$). The decision whether switches are
taken is left to a third EHMM $\C$, now on state space $\Cstates$,
with node labels $\Cnl$ in $\set{\nsw,\swi}$ that we use to select
which evolution is desired at each time step. The resulting
interpolation $\R = \Q_\nsw \otimes_\C \Q_\swi$ is displayed in
\figref{fig:BNinterpolation} as a Bayesian network on variables $\rv
\Cstate_i$, $\rv \sigma_i$, $\rv \Bstate_i$, $\rv \xi_i$ and $\rv x_i$
for each time $i=1,2,\ldots$ (Note that $\C$ differs from regular
EHMMs in that it does not have a data layer (the $\rv x_i$
variables), and the produced symbols are from $\{\nsw,\swi\}$ rather
than the fixed expert set $\Xi$.) We will now define the interpolation
distribution on these variables. Most conditional distributions are
copied from the input EHMMs: as before $\rv{\xi}_i =
\Bnl(\rv{\Bstate}_i)$, $\rv{\sigma}_i = \Cnl(\rv{\Cstate}_i)$, the
state evolution for the selection process $\R(\rv \Cstate_{i+1} | \rv
\Cstate_i)$ is copied from $\C$, and $\R(\rv x_i | \rv \xi_i)$ denotes
the prediction of expert $\rv \xi_i$ for the $i$th outcome.  As
indicated, the switch decisions $\rv \sigma_i = \Cnl(\rv{\Cstate}_i)$
are made between rounds, with $\rv \sigma_i = \swi$ indicating that a
switch occurs between time $i$ and $i+1$. In the interpolation the
probability of a state $\rv \Bstate_{i+1}$ now depends not only on the
previous state $\rv \Bstate_i$, but also on the switch decision $\rv
\sigma_i$, which determines which of the original two dynamics is
selected:
\[
\R(\rv \Bstate_{i+1} | \rv \Bstate_i, \rv \sigma_i)
~\df~
\Q_{\rv \sigma_i}(\rv \Bstate_{i+1} | \rv \Bstate_i)
.
\]
In addition, we have to define which of the two dynamics is used initially. We arbitrarily choose to define the model to start with the switching dynamics, and set
$\R(\rv \Bstate_1) \df \Q_\swi(\rv \Bstate_1)$.
We note that the interpolation $\R$ is again a prediction strategy of the form shown in \figref{fig:hmm}, now with joint state space $Q \times C$.

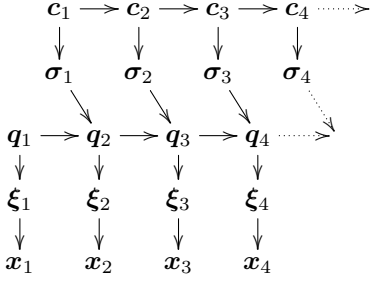
\begin{figure}
\caption{Bayesian network of interpolation $\R = \Q_\nsw \otimes_\C \Q_\swi$ \label{fig:BNinterpolation}}
\centering
$\xymatrix@!0@C=1.5em{
& {\rv \Cstate_1} \ar[rr] \ar[d] && {\rv \Cstate_2} \ar[rr]\ar[d] && {\rv \Cstate_3} \ar[rr]\ar[d] && {\rv \Cstate_4} \ar@{.>}[rr]\ar[d] &&
\\
& {\rv \sigma_1} \ar[dr] && {\rv \sigma_2} \ar[dr] && {\rv \sigma_3} \ar[dr] && {\rv \sigma_4} \ar@{.>}[dr] &&
\\
{\rv \Bstate_1} \ar[rr]\ar[d] && {\rv \Bstate_2} \ar[rr]\ar[d] && {\rv \Bstate_3} \ar[rr]\ar[d] && {\rv \Bstate_4} \ar@{.>}[rr]\ar[d] &&
\\
{\rv \xi_1} \ar[d]  && {\rv \xi_2} \ar[d]  && {\rv \xi_3} \ar[d]  && {\rv \xi_4} \ar[d]  &&
\\
{\rv x_1}  && {\rv x_2}  && {\rv x_3}  && {\rv x_4}  &&
}$
\end{figure}

Interpolation separates concerns; $\C$, on the highest level, determines when to switch. Below, $\Q_\nsw$ and $\Q_\swi$ determine the normal and switching behaviour. This separation is reflected in the following modular loss bound:

\begin{lemma}[Interpolation Decomposition]\label{lem:interpolation.loss.bound}
Fix interpolation $\R = \Q_\nsw \otimes_\C \Q_\swi$. For each sequence $\sigma^{t-1} \in \set{\nsw, \swi}^{t-1}$ of switch decisions (on the $\C$ level)
and each 
sequence  $\Bstate^t \in \Bstates^t$ of productive states (on the $\Q$ level)
\begin{align*}
\R(\Bstate^t)
&~\ge~
\C(\sigma^{t-1})~
\Q_\swi(\Bstate_1) \prod_{i=1}^{t-1} \Q_{\sigma_i}(\Bstate_{i+1} | \Bstate_i).
\end{align*}
\end{lemma}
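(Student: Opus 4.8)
The bound is an instance of the same ``drop terms from the mixture'' idea behind \lemref{lem:dropmarg}, now applied one level up, at the granularity of the switch-decision sequences $\sigma^{t-1}$ emitted by $\C$. The plan is to first express $\R(\Bstate^t)$ \emph{exactly} as a mixture over all possible switch-decision sequences, with each term a product of a $\C$-factor and a $\Q$-factor, and then to retain only the single term corresponding to the given $\sigma^{t-1}$. All summands are nonnegative, so discarding the rest yields the inequality.

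First I would read off the factorisation of $\R$ from the Bayesian network in \figref{fig:BNinterpolation}. Writing the joint over the hidden $\C$-level states, the switch decisions $\sigma^{t-1}$ and the productive states $\Bstate^t$ (the outcome layer $\rv x_i$ is summed out, contributing a factor of one, and $\rv \xi_i = \Bnl(\rv \Bstate_i)$ is a deterministic function of $\rv \Bstate_i$), the structure dictates that the $\C$-chain evolves on its own and influences the productive states only through the emitted symbols $\sigma_i = \Cnl(\Cstate_i)$. Hence the joint splits into the product of the $\C$-transition weights along the $\C$-path, the deterministic indicators $\indicator{\sigma_i = \Cnl(\Cstate_i)}$, the initial weight $\Q_\swi(\Bstate_1)$ (per the definition $\R(\rv \Bstate_1) = \Q_\swi(\rv \Bstate_1)$), and the transition factors $\Q_{\sigma_i}(\Bstate_{i+1} \mid \Bstate_i)$.

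Next I would marginalise out the $\C$-level states. The key point is that the $\Q$-factor $\Q_\swi(\Bstate_1)\prod_{i=1}^{t-1}\Q_{\sigma_i}(\Bstate_{i+1} \mid \Bstate_i)$ depends on the $\C$-path only through $\sigma^{t-1}$, so it pulls out of the sum over those $\C$-paths that emit a fixed $\sigma^{t-1}$. Summing the remaining $\C$-transition weights over all such paths is, by definition, exactly the marginal probability $\C(\sigma^{t-1})$ that the strategy $\C$ produces the symbol sequence $\sigma^{t-1}$ (the same marginalisation over silent states that the forward algorithm performs). Grouping the $\C$-paths by their emitted symbol sequence then yields the exact identity
\[
\R(\Bstate^t)
~=~
\sum_{\sigma^{t-1} \in \set{\nsw,\swi}^{t-1}}
\C(\sigma^{t-1})\,\Q_\swi(\Bstate_1)\prod_{i=1}^{t-1}\Q_{\sigma_i}(\Bstate_{i+1} \mid \Bstate_i),
\]
and keeping only the term indexed by the given $\sigma^{t-1}$ finishes the argument.

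The main obstacle I anticipate is bookkeeping rather than conceptual: making the marginalisation over the silent states of $\C$ (and, inside each factor $\Q_{\sigma_i}(\Bstate_{i+1} \mid \Bstate_i)$, over the silent states of $\Q_\nsw$ and $\Q_\swi$) rigorous, so that the clean chain drawn in \figref{fig:BNinterpolation} is justified as a faithful marginal of the true joint EHMM on the product state space $Q \times C$. In particular one must check that, by the time $t$ productive states of $\Q$ have appeared, the selection process $\C$ has emitted exactly $t-1$ decisions, so that the indexing set $\set{\nsw,\swi}^{t-1}$ is the correct range. Once it is clear that $\C(\sigma^{t-1})$ and $\Q_{\sigma_i}(\Bstate_{i+1} \mid \Bstate_i)$ denote the appropriate marginal quantities, the proof reduces to the elementary mixture-dropping step.
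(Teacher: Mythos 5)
Your proposal is correct and follows essentially the same route as the paper's proof: the paper likewise drops all terms of the marginal over $\sigma^{t-1}$ except the given one, writing $\R(\Bstate^t) \ge \R(\sigma^{t-1})\,\R(\Bstate^t \mid \sigma^{t-1})$, and then identifies $\R(\sigma^{t-1}) = \C(\sigma^{t-1})$ and $\R(\Bstate^t \mid \sigma^{t-1}) = \Q_\swi(\Bstate_1)\prod_{i=1}^{t-1}\Q_{\sigma_i}(\Bstate_{i+1}\mid\Bstate_i)$ directly from the definition of the interpolation. Your additional bookkeeping about marginalising silent states is a more explicit rendering of the same identification, not a different argument.
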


\begin{IEEEproof}
By dropping terms from the marginal, we find
%The crucial observation is that $\rv{a}^t$ and $\rv{\Bstate}^t$ are independent given $\rv{\sigma}^t$ (see \figref{fig:BNinterpolation}), so:
$\R(\Bstate^t) 
\ge
\R(\sigma^{t-1}) \R(\Bstate^t|\sigma^{t-1})$.
By the definition of interpolation, we have $\R(\sigma^{t-1}) = \C(\sigma^{t-1})$ and $\R(\Bstate^t|\sigma^{t-1}) = \Q_\swi(\Bstate_1) \prod_{i=1}^{t-1} \Q_{\sigma_i}(\Bstate_{i+1} | \Bstate_i)$.
\end{IEEEproof}

\begin{corollary}[Default Interpolation Regret]\label{cor:interbound}
  We apply this lemma to our $\Q$-level EHMMs of interest, $\Q_\nsw=\ehmm{b}[w]$ and $\Q_\swi=\ehmm{em}[w]$ where $w$ is uniform on $k$ experts. Fix $\xi^t$. Set $\sigma_i = \swi$
iff $\xi_{i+1} \neq \xi_i$, and let $m$ be the number of blocks in $\xi^t$, i.e.\ the number of $\swi$ in $\sigma^{t-1}$ plus one. Then for all data $x^t$
\[
\ln \frac{\ehmm{P}[\xi^t](x^t)}{\R(x^t)}
~\le~
- \ln \C(\sigma^{t-1}) + m \ln k.
\]
\end{corollary}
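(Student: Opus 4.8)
The plan is to chain the two regret tools already in hand. Since the interpolation $\R$ is itself an EHMM of the form in \figref{fig:hmm}, I would first apply \lemref{lem:dropmarg} directly to $\R$ and the reference sequence $\xi^t$, which yields
\[
\ln \frac{\ehmm{P}[\xi^t](x^t)}{\R(x^t)} ~\le~ -\ln \R(\xi^t).
\]
The whole problem then reduces to producing a good lower bound on the prior weight $\R(\xi^t)$ that the interpolation places on the reference expert sequence, and this is exactly what \lemref{lem:interpolation.loss.bound} is designed to deliver.

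To bridge from expert sequences to the state sequences that the decomposition lemma speaks about, I would fix the canonical $\Q$-level productive-state sequence $\Bstate^t$ with $\Bstate_i = \tuple{\xi_i,i}$, so that $\Bnl(\Bstate_i) = \xi_i$ and hence $\Bstate^t$ produces $\xi^t$. Dropping the contributions of every other state sequence consistent with $\xi^t$ gives $\R(\xi^t) \ge \R(\Bstate^t)$, and \lemref{lem:interpolation.loss.bound} applied with the prescribed switch pattern $\sigma^{t-1}$ then bounds
\[
\R(\Bstate^t) ~\ge~ \C(\sigma^{t-1})\, \Q_\swi(\Bstate_1) \prod_{i=1}^{t-1} \Q_{\sigma_i}(\Bstate_{i+1}\mid\Bstate_i).
\]

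What remains is to read off the factors from the component EHMMs $\Q_\nsw = \ehmm{b}[w]$ and $\Q_\swi = \ehmm{em}[w]$ with uniform $w$. The initial factor is $\Q_\swi(\Bstate_1) = w(\xi_1) = 1/k$. A no-switch step ($\sigma_i = \nsw$, forcing $\xi_{i+1}=\xi_i$) uses the Bayesian transition, a direct productive-to-productive arc of probability $1$. A switch step ($\sigma_i = \swi$) uses the elementwise transition, which passes through the single silent state $\tuple{i}$ and contributes $1 \cdot w(\xi_{i+1}) = 1/k$. Since there are exactly $m-1$ switches, the product collapses to $k^{-(m-1)}$, giving $\R(\Bstate^t) \ge \C(\sigma^{t-1})\,k^{-m}$; taking $-\ln$ and substituting into the first display produces the claimed bound.

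The one place demanding genuine care — and the main obstacle I anticipate — is the correct evaluation of the productive-to-productive transition probabilities $\Q_{\sigma_i}(\Bstate_{i+1}\mid\Bstate_i)$. These are not elementary transition-kernel entries: in the elementwise mixture the step silently marginalises over the intervening silent state, whereas in the Bayesian mixture it is a genuine single arc. I would therefore verify carefully against Figures~\ref{graph:bayes} and~\ref{graph:fixmix} that the switching transition indeed contributes $w(\xi_{i+1})$ and the non-switching transition contributes $1$, and I would confirm that selecting the no-switch (Bayesian) dynamics is legitimate precisely when $\xi_{i+1}=\xi_i$ — which is guaranteed by the definition $\sigma_i = \swi \iff \xi_{i+1}\neq\xi_i$, so that no admissible expert transition along $\Bstate^t$ is ever assigned probability zero.
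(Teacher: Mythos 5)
Your proposal is correct and follows essentially the same route as the paper: apply Lemma~\ref{lem:dropmarg} to reduce to lower-bounding $\R(\xi^t)$, identify productive states with experts, invoke Lemma~\ref{lem:interpolation.loss.bound}, and read off $\Q_\swi(\xi_1)=1/k$, probability~$1$ for Bayesian (no-switch) steps, and $1/k$ for each of the $m-1$ elementwise (switch) steps, giving $\R(\xi^t)\ge\C(\sigma^{t-1})\,k^{-m}$. The extra care you flag about the switch transition marginalising over the intervening silent state is exactly the right check, and your accounting of the $m$ factors of $1/k$ matches the paper's.
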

\begin{IEEEproof}Recall that in both $\Q$-level EHMMs there is, at each time, a one-one
  correspondence between productive states and experts, so we may just
  as well identify them. Then we have $\Q_\swi(\xi_1)=w(\xi_1)$, and
\[
\ehmm{b}[w](\xi_i = \xi_{i-1} | \xi_{i-1})= 1;\quad
\ehmm{em}[w](\xi_i | \xi_{i-1})= w(\xi_i).
\]
Now using $w(\xi) = 1/k$, \lemref{lem:interpolation.loss.bound} yields
$\R(\xi^t)\ge\C(\sigma^{t-1})~ k^{-m}$
and the result follows by \eqref{eq:bound1}.
\end{IEEEproof}

\begin{example}[Fixed Share Regret]
  We abbreviate $\ehmm{em}[\set{\nsw,\swi},\del{1-\alpha,\alpha}]$
  to $\ehmm{fs}[\alpha]$. We now redefine the Fixed Share model in
  terms of the interpolation
\[ \ehmm{fs}[w,\alpha] ~\df~
  \ehmm{b}[w] \otimes_{\ehmm{fs}[\alpha]}
  \ehmm{em}[w].
\]
This definition is equivalent to the one in~\figref{graph:mixedshare},
as the sets of infinite state sequences are in one-one correspondence
between the models.
We now reprove the Fixed Share regret bound
\eqref{eq:fixedshare.lossbound} by combining \corref{cor:interbound}
with the observation that
\[\begin{split}
- \ln
\ehmm{fs}[\alpha](\sigma^{t-1})&=-\ln\del[\big]{(1-\alpha)^{t-m}
  \alpha^{m-1}}\\
&=(t-1)\ent(\alpha^*,\alpha).
\end{split}\]\vskip-1.3\baselineskip \qedex
\end{example}\smallskip

In the following we often use this mechanism. We prove a loss bound
for the interpolator $\C$ on switch sequences, and transport it to the
data level using \corref{cor:interbound}, adding $m \ln k$. 

\subsubsection*{Detailed construction}
We now describe an explicit construction for the interpolation of two
EHMMs, and relate the size of the interpolation to the size of the
original models. This section is technical, and may be skipped on
first reading.

The construction takes three EHMMs, $\C$, $\Q_\nsw$ and $\Q_\swi$. The
\markdef{interpolator} $\C$ specifies \emph{when} switches occur,
$\Q_\nsw$ determines the \emph{normal} (\nsw) evolution and $\Q_\swi$
determines the evolution when a \emph{switch} (\swi) occurs.
In particular, we obtain a model that defines the same distribution as
$\ehmm{fs}[w,\alpha]$ by interpolation using $\C=\ehmm{em}[\set{\nsw,\swi},\del{1-\alpha,\alpha}]$, $\Q_\nsw=\ehmm{b}[w]$ and $\Q_\swi=\ehmm{em}[ w]$.

\begin{definition}[Interpolation] See~\figref{fig:interpolation} for an illustration.
Let $\C =
\tuple{
\Cstates, 
\Cstates^\prd,
\Cstart,
\Ctf,
\Cnl
}
$ be a EHMM on $\set{\nsw,\swi}$, and let $\Q_\nsw$ and $\Q_\swi$ be EHMMs on experts $\Xi$ sharing a common state set $\Bstates$ with identical start state $\Bstart$ and labelling $\Bnl$ (and thus identical productive states $\Bstates^\prd$), but with different transition functions $\Btf^\nsw$ and $\Btf^\swi$. 
For the construction below it will be convenient to prefix a new productive start state $\Cfake$ to $\C$ labelled $\Cnl(\Cfake) = \swi$,  with a single transition $\Ctf(\Cfake \to \Cstart) = 1$ to the original start state. This will ensure that we initially follow the $\Q_\swi$ dynamics.

We define $\Q_\nsw \otimes_\C \Q_\swi$, the \markdef{$\C$-interpolation of $\Q_\nsw$ and $\Q_\swi$}, by
\[
\Q_\nsw \otimes_\C \Q_\swi
~\df~
\tuple[\big]{
\Rstates, 
\Rstates^\prd,
\Rstart,
\Rtf,
\Rnl}
.\]
Each state of the interpolation is a pair of states, consisting of one
state from the common state set $\Bstates$, and one state from the
interpolator $\C$, at least one of them productive, plus a bit that indicates which state is to evolve next:
\[
\Rstates 
~\df~ 
\Bstates^\prd \times \Cstates \times \set{0}
\,\cup\,  
\Bstates \times \Cstates^\prd \times \set{1}.
\]
The productive states of the interpolation are the triples with two
productive states and zero bit
\[
\Rstates^\prd
~\df~
\Bstates^\prd \times \Cstates^\prd \times \set{0}
,
\]
and the interpolator has start state $\Rstart \df \tuple*{\Bstart, \Cfake, \indicator{\Bstart \not \in \Bstates^\prd}}$.
The transition function $\Rtf$ alternately forwards the components of the state, as indicated by the bit. First it evolves the first state component ($\Bstate$ in $\Bstates$) to the next productive state in $\Bstates$ using either $\Btf^\nsw$ or $\Btf^\swi$ as determined by the produced label $\Cnl(\Cstate) \in \set{\nsw, \swi}$. Then it forwards the second state component ($\Cstate$ in $\Cstates$) to the next productive state using $\Ctf$. That is
\[
\Rtf\del*{
\begin{aligned}
\tuple{\Bstate, \Cstate, 1} &\to \tuple{\Bstate', \Cstate, \indicator{\Bstate' \not\in \Bstates^\prd}}
\\
\tuple{\Bstate, \Cstate, 0} &\to \tuple{\Bstate, \Cstate', \indicator{\Cstate' \in \Cstates^\prd}}
\end{aligned}
} ~\df~
\del*{
\begin{aligned}
\Btf^{\Cnl(\Cstate)}(\Bstate &\to \Bstate')
\\
\Ctf(\Cstate &\to \Cstate')
\end{aligned}
}
\]
Finally, the node label is that of the first component
\[
\Rnl(\Bstate, \Cstate, 0)
~\df~
\Bnl\del{\Bstate}
.
\]
\figref{fig:interpolation.hmm} shows the state transition diagram of an interpolation, with the interpolator shown in \figref{fig:switc.dist.interpolator}. 
\begin{figure*}
\caption{Interpolation example: state transition diagram}\label{fig:interpolation}
\centering
\subfloat[EHMM $\C$ on $\set{\nsw,\swi}$ (normal/switch)\label{fig:switc.dist.interpolator}]{
$\xymatrix@!0@R=4em@C=2.3em{
    & \ex{\nsw} \ar[rr] & &\ex{\nsw} \ar[rr] & &\ex{\nsw} \ar@{.>}[r] & \\
&&&&&&\\
 \bn\ar[uur]\ar[r]\ar[ddr]   & \ex{\swi}  \ar[uurr] \ar[dr] & & \ex{\swi}  \ar[uurr] \ar[dr] && \ex{\swi} \ar@{.>}[dr] \ar@{.>}[ur] \\
 & & \si \ar[ur] \ar[dr] & & \si \ar[ur] \ar[dr] &&\\
    & \ex{\nsw} \ar[ur] & & \ex{\nsw} \ar[ur] && \ex{\nsw} \ar@{.>}[ur] &\\
}$
}
\qquad\qquad
\subfloat[{Interpolation $\Q_\nsw \otimes_\C \Q_\swi$ where $\Q_\nsw = \ehmm{b}[w]$ and $\Q_\swi = \ehmm{em}[w]$ on experts $\set{\expname{A}, \expname{B}, \expname{C}}$\label{fig:interpolation.hmm}}]{
$\xymatrix@!0@R=1.4em@C=2.7em{
&&    \si\ar[rr]       && \ex{a} \ar[rr] & & \si\ar[rr] && \ex{a} \ar[rr] & & \si\ar[r] & \si\\
&&   \si\ar[rr] && \ex{b} \ar[rr] & & \si \ar[rr] && \ex{b}  \ar[rr] & & \si\ar[r] & \si \\
&&    \si\ar[rr]      && \ex{c} \ar[rr] & & \si\ar[rr] && \ex{c} \ar[rr] & & \si\ar[r] & \si \\
\\
& \ex{a}\ar[uuuur]\ar[r]\ar@/^/[ddddddr] & \si\ar[dr] && \ex{a} \ar[uuuurr] \ar[dddr] & & \si\ar[dr] && \ex{a} \ar[uuuurr] \ar[dddr] & & \si\ar[dr] & \\
\bn\ar[ur]\ar[r]\ar[dr]& \ex{b}\ar[uuuur]\ar[r]\ar@/^/[ddddddr]  & \si\ar[r] &\si\ar[ur]\ar[r]\ar[dr] & \ex{b}\ar[uuuurr] \ar[dddr]  & & \si\ar[r] & \si\ar[ur]\ar[r]\ar[dr] & \ex{b} \ar[uuuurr] \ar[dddr]  & & \si \ar[r] & \si \\
&\ex{c}\ar[uuuur]\ar[r]\ar@/^/[ddddddr]&   \si\ar[ur] && \ex{c} \ar[uuuurr] \ar[dddr]  & & \si\ar[ur] && \ex{c} \ar[uuuurr] \ar[dddr]  & & \si\ar[ur] &  \\
& & & & &\si \ar[uuur] \ar[dddr] & & & &\si \ar[uuur] \ar[dddr]\\
& & & & &\si\ar[uuur] \ar[dddr] & & & &\si\ar[uuur] \ar[dddr]\\
& & & & &\si\ar[uuur] \ar[dddr] & & & &\si\ar[uuur] \ar[dddr]\\
&&    \si\ar[rr]     && \ex{a} \ar[uuur]  & &  \si\ar[rr] && \ex{a}  \ar[uuur]  & &  \si\ar[r] & \si\\
&    & \si\ar[rr] && \ex{b} \ar[uuur]  & & \si\ar[rr] && \ex{b}  \ar[uuur]  & & \si \ar[r] & \si\\
&&  \si\ar[rr]     && \ex{c} \ar[uuur]  & & \si\ar[rr] && \ex{c} \ar[uuur]  & & \si\ar[r] & \si \\
& \ar@{(-|}[l]+0^{\tuple{\cdot, \cdot, 1}}
& \ar@{(-|}[l]+0^{\tuple{\cdot, \cdot, 0}}
&&\ar@{(-|}[ll]+0^{\tuple{\cdot, \cdot, 1}}
&&\ar@{(-|}[ll]+0^{\tuple{\cdot, \cdot, 0}}
&&\ar@{(-|}[ll]+0^{\tuple{\cdot, \cdot, 1}}
&&\ar@{(-|}[ll]+0^{\tuple{\cdot, \cdot, 0}}
& \ar@{|-|}[l]+0^{\tuple{\cdot, \cdot, 1}}
}$}
\end{figure*}
\end{definition}

As mentioned before, the predictions of an EHMM can be computed with constant work per edge, where an edge is defined as a pair of states $\wstate, \wstate'$ with non-zero transition probability $\wtf(\wstate \to \wstate') > 0$.  We now bound the number of edges of an interpolation.

\begin{lemma}
Let $e^\C$, $e^\swi$, $e^\nsw$ and $e^{\R}$ be the numbers of edges in EHMMs $\C$, $\Q_\nsw$, $\Q_\swi$ and the interpolation $\R = \Q_\nsw \otimes_\C \Q_\swi$. Then
\[
e^\R
~\le~
\card{\Cstates^\prd} \max \set*{ e^\nsw, e^\swi}
+
\card{\Bstates^\prd} e^{\C}
\]
\end{lemma}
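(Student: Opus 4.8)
The plan is to count the edges of $\R$ directly from the transition function $\Rtf$, exploiting the fact that $\Rtf$ splits cleanly into two cases according to the bit in the third state component. Every edge of $\R$ originates either at a ``bit $1$'' state $\tuple{\Bstate, \Cstate, 1}$, which advances the $\Bstates$-component, or at a ``bit $0$'' state $\tuple{\Bstate, \Cstate, 0}$, which advances the $\Cstates$-component; these two families of source states are disjoint. I would therefore bound the number of edges in each family separately and add the two counts.

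First I would count the bit-$1$ edges. By the definition of $\Rstates$, a bit-$1$ source $\tuple{\Bstate, \Cstate, 1}$ has $\Cstate \in \Cstates^\prd$ and $\Bstate \in \Bstates$ arbitrary, and its outgoing transitions are exactly the $\tuple{\Bstate, \Cstate, 1} \to \tuple{\Bstate', \Cstate, \cdot}$ with $\Btf^{\Cnl(\Cstate)}(\Bstate \to \Bstate') > 0$. Hence, for each fixed $\Cstate \in \Cstates^\prd$, the bit-$1$ edges are in one-to-one correspondence with the edges of the EHMM $\Q_{\Cnl(\Cstate)}$, of which there are $e^{\Cnl(\Cstate)} \in \set{e^\nsw, e^\swi}$. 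Summing over $\Cstate \in \Cstates^\prd$ and bounding each summand by the maximum gives at most $\card{\Cstates^\prd}\max\set{e^\nsw, e^\swi}$ bit-$1$ edges. Next I would count the bit-$0$ edges analogously: a bit-$0$ source $\tuple{\Bstate, \Cstate, 0}$ has $\Bstate \in \Bstates^\prd$ and $\Cstate \in \Cstates$ arbitrary, and its outgoing transitions are the $\tuple{\Bstate, \Cstate, 0} \to \tuple{\Bstate, \Cstate', \cdot}$ with $\Ctf(\Cstate \to \Cstate') > 0$, so for each fixed $\Bstate \in \Bstates^\prd$ these correspond exactly to the edges of $\C$, yielding $\card{\Bstates^\prd}\, e^\C$ bit-$0$ edges. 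Adding the two families gives the claimed bound.

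The only care required — and the step I would treat as the main, if minor, obstacle — is verifying that these correspondences are genuine: that every component transition lifts to a \emph{valid} edge of $\R$ (the target triple, whose bit is $\indicator{\Bstate' \not\in \Bstates^\prd}$ in the bit-$1$ case and $\indicator{\Cstate' \in \Cstates^\prd}$ in the bit-$0$ case, really lies in $\Rstates$), and that distinct component transitions yield distinct edges of $\R$. Both follow routinely from the definitions of $\Rstates$ and $\Rtf$: the inert component ($\Cstate$ in the bit-$1$ case, $\Bstate$ in the bit-$0$ case) is carried along unchanged, so the source is recoverable from the edge and no collisions occur, while membership of the target in $\Rstates$ is a two-line case analysis on whether the advanced component lands in a productive or silent state. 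I would note finally that the bound is one-sided exactly because the true bit-$1$ count $\sum_{\Cstate \in \Cstates^\prd} e^{\Cnl(\Cstate)}$ is relaxed to $\card{\Cstates^\prd}\max\set{e^\nsw, e^\swi}$; no relaxation is needed for the bit-$0$ term.
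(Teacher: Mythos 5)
Your proposal is correct and follows the same route as the paper, whose entire proof is the one-liner ``From the definition of $\Rtf$''; you have simply made explicit the edge count that the definition of $\Rtf$ encodes, partitioning edges by the source state's bit and matching each family against the component EHMMs. The verification that targets lie in $\Rstates$ and that the correspondences are injective is exactly the routine check the paper leaves implicit.
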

\begin{IEEEproof}
From the definition of $\Rtf$.
\end{IEEEproof}

This theorem provides an upper bound, for not all these edges may be reachable from the start state. A careful counting for Fixed Share yields that between the \emph{reachable} productive states at time $t$ and $t+1$ sit $c \card{\Xi}$ edges, for some constant $c$ independent of $t$. The running time of the interpolation version is hence no worse than that of the classical version.

This concludes the intermezzo. In the remainder of this section, we
discuss the benefits and costs of several choices for $\C$, both in
terms of loss bound and in terms of running time. We also briefly
discuss alternatives for $\Q_\nsw$ and $\Q_\swi$.

\subsection{Decreasing Switching Rate}\label{sec:dsr}
% 3. Switchkans neemt af.

Fixed Share uses a fixed switching rate $\alpha$. However it is
possible to get good bounds without having to choose $\alpha$, by
letting the switching probability decrease as a function of time. This
trick was employed in the source coding setting in
\cite{ShamirMerhav1999}. 

Intuitively, if the number of blocks $m$ is small, then the term $(t-1)\ent(\alpha^*)$ in the Fixed Share bound \eqref{eq:fs.sandwich} with the optimal switching rate is not much smaller than $(m-1) \ln (t-1)$. To ensure at most an additive $\ln(t)$ penalty to the regret per switch, a switching rate of $\alpha_t = 1/t$ suffices. This inspires the models described in this section.

Whereas Fixed Share uses the elementwise mixture interpolator with
switching rate $\alpha$, we consider a new interpolator,
$\ehmm{dsr}[\alpha^\omega]$, which is similar to
$\ehmm{fs}[\alpha]$, except that the switching probability
$\alpha_t$ is no longer a parameter of the model, but a fixed
decreasing function of the time $t$. We still model switches as
independent, and as before, we define the full model as
\[\ehmm{dsr}[w,\alpha^\omega]~\df~\ehmm{b}[w]\otimes_{\ehmm{dsr}[\alpha^\omega]} \ehmm{em}[w].
\]

To obtain bounds, we use the following equality. Let $\sigma^{t-1}$ be
a sequence with $m-1$ occurrences of $\swi$ at positions $t_2, \ldots,
t_m$, and let $t_1=0$. Then
\begin{align}
&- \ln \ehmm{dsr}[\alpha^\omega](\sigma^{t-1}) 
=
- \ln \del[\Bigg]{\prod_{i=1}^{t-1} (1-\alpha_i) \prod_{j=2}^{m} \frac{\alpha_{t_j}}{1-\alpha_{t_j}}}\notag\\
&\quad=
- \sum_{i=1}^{t-1} \ln(1-\alpha_i) - \sum_{j=2}^{m}
\ln\frac{\alpha_{t_j}}{1-\alpha_{t_j}}\label{eq:bnd}.
\end{align}
The last expression can be read as follows: the first sum denotes the
cost of not switching during the first $t$ outcomes, and the second
sum denotes the correction for the switches that actually did occur.

Recall the two problems we identified above for Fixed Share: that the
switching rate $\alpha$ has to be tuned to obtain a good bound, and
that the regret keeps increasing even if, from some point on, no
switches occur anymore. Below we describe a first choice for
$\alpha_i$ that adequately solves the first of these two issues.  In
the next section we propose a different choice for $\alpha_i$ that
solves both problems simultaneously at the cost of a slight additional
overhead in the regret bound; a variant of this second model was shown
in~\cite{ErvenGrunwaldDeRooij2012} to yield a substantial improvement
of Bayes factors model selection; this will be discussed in more
detail below.
\subsubsection{Switching with Slowly Decreasing Probability}\label{sec:switch.slow}
%    a) e^{c/t}: loss bound, opmerken dat dit dus werkt zonder alpha
%    te kiezen
%
\begin{theorem}
  Let $\alpha_i=1-e^{-c/i}$ for some $c>0$. Let $w$ be the uniform
  distribution on the set $\Xi$ of $k$ experts. For any data $x^t$ and 
  expert sequence $\xi^t$ with $m$ blocks
\begin{equation}\label{eq:switch.slow.bound}\begin{split}
&\ln\frac{\ehmm{P}[\xi^t](x^t)}{\ehmm{dsr}[w,\alpha^\omega](x^t)}\\
%~\le~m\ln k-\ln \ehmm{dsr}[\alpha^\omega](\sigma^{t-1}) 
&\le m\ln k + c - (m-1) \ln c + (m-1 + c) \ln(t-1).\end{split}
\end{equation}
\end{theorem}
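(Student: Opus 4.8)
The plan is to apply Corollary~\ref{cor:interbound} (Default Interpolation Regret) to the model $\ehmm{dsr}[w,\alpha^\omega]$, since by definition it is exactly an interpolation $\ehmm{b}[w]\otimes_{\ehmm{dsr}[\alpha^\omega]}\ehmm{em}[w]$ with uniform $w$ on $k$ experts. The corollary tells us that for the switch sequence $\sigma^{t-1}$ induced by $\xi^t$ (i.e.\ $\sigma_i=\swi$ iff $\xi_{i+1}\neq\xi_i$), we have
\[
\ln \frac{\ehmm{P}[\xi^t](x^t)}{\ehmm{dsr}[w,\alpha^\omega](x^t)}
~\le~
-\ln \ehmm{dsr}[\alpha^\omega](\sigma^{t-1}) + m\ln k.
\]
Thus the entire data-dependent part is absorbed into the $m\ln k$ term, and the whole problem reduces to bounding the interpolator loss $-\ln \ehmm{dsr}[\alpha^\omega](\sigma^{t-1})$ for the specific schedule $\alpha_i=1-e^{-c/i}$.

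For that term I would substitute directly into equation~\eqref{eq:bnd}. With $\alpha_i=1-e^{-c/i}$ we get $\ln(1-\alpha_i)=-c/i$, so the first sum telescopes into $-\sum_{i=1}^{t-1}\ln(1-\alpha_i)=c\sum_{i=1}^{t-1}1/i=cH_{t-1}$, a harmonic sum. The second sum in~\eqref{eq:bnd} contributes $-\sum_{j=2}^m \ln\frac{\alpha_{t_j}}{1-\alpha_{t_j}}$; since $1-\alpha_{t_j}=e^{-c/t_j}$ and $\alpha_{t_j}=1-e^{-c/t_j}$, each term becomes $-\ln(e^{c/t_j}-1)$, which I would bound above using $e^{c/t_j}-1\ge c/t_j$ (valid since $e^x\ge 1+x$), giving $-\ln(e^{c/t_j}-1)\le -\ln(c/t_j)=\ln t_j-\ln c$. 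Summing over the $m-1$ switches and bounding each $t_j\le t-1$ yields $-\ln c\,(m-1)+(m-1)\ln(t-1)$ as an upper bound for the second sum.

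Combining, $-\ln\ehmm{dsr}[\alpha^\omega](\sigma^{t-1})\le cH_{t-1}-(m-1)\ln c+(m-1)\ln(t-1)$, and using the standard harmonic-sum estimate $H_{t-1}\le 1+\ln(t-1)$ turns the $cH_{t-1}$ contribution into $c+c\ln(t-1)$. Adding the $m\ln k$ term from the corollary reproduces exactly the claimed bound $m\ln k + c - (m-1)\ln c + (m-1+c)\ln(t-1)$, after collecting the two $\ln(t-1)$ contributions into the coefficient $(m-1+c)$. The only genuine estimates are the elementary inequality $e^x\ge 1+x$ and the harmonic bound $H_{t-1}\le 1+\ln(t-1)$; the main conceptual step is recognizing that Corollary~\ref{cor:interbound} reduces everything to the switch-sequence level, so that the data $x^t$ and the expert predictions $\ehmm{P}[\xi^t]$ never need to be analysed directly. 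I do not anticipate a serious obstacle here; the mild care needed is in the harmonic bound (ensuring the additive $c$ matches) and in checking that bounding each $t_j$ by $t-1$ is tight enough not to lose more than the stated form.
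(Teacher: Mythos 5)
Your proposal is correct and follows essentially the same route as the paper's proof: reduce to the interpolator level via Corollary~\ref{cor:interbound}, expand with~\eqref{eq:bnd}, apply $e^x \ge x+1$ to each switch term and the harmonic bound $\sum_{i=1}^{t-1} 1/i \le \ln(t-1)+1$ to the no-switch sum, and finally bound each $t_j$ by $t-1$. No gaps.
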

\begin{IEEEproof}
By
  \eqref{eq:bnd}, using $\sum_{i=1}^{t} \frac{1}{i} < \ln t + 1$ and
  $e^x \ge x+1$,
\begin{align}
\notag
&- \ln \ehmm{dsr}[\alpha^\omega](\sigma^{t-1}) 
%~= ~
%- \sum_{i=1}^{t-1} \ln(1-(1- e^{-c/i})) - \sum_{j=2}^{m} \ln \frac{1- e^{-c/t_j}}{1-(1- e^{-c/t_j})}
=
c \sum_{i=1}^{t-1} \frac{1}{i} - \sum_{j=2}^{m} \ln \del[\big]{e^{c/t_j}-1}\\
\label{eq:logtvisible}
&\le
c \ln (t-1) + c - (m-1) \ln c + \sum_{j=2}^{m} \ln t_j.
\end{align}
The sum is bounded by substituting each $t_j$ by
$t-1$, and the result follows by \corref{cor:interbound}.
\end{IEEEproof}
Note that while we succeeded in eliminating the parameter $\alpha$, we
have in fact introduced a new parameter $c$, so it would appear that
matters have not improved much. But unlike $\alpha$, a suboptimal
value for $c$ only yields a regret penalty of order $\ln t$, so it
may safely be set to some convenient constant like $c=1$. The
optimising value is $c^* = (m-1)/(1 + \ln(t-1))$, which yields
slightly better asymptotics, but this defeats the purpose as it would
require a priori knowledge of $m$ and $t$ again.

We now compare the regret bound~\eqref{eq:switch.slow.bound} to the
bound \eqref{eq:fixedshare.lossbound} for Fixed Share. To maximise
the difference, we use the optimising parameter $\alpha^*$ for Fixed
Share, and we lower bound the entropy using~\eqref{eq:fs.sandwich}.
The difference is
\[
c-(m-1)\ln c+c\ln(t-1)+(m-1)\ln(m-1),
\]
where the last two terms dictate asymptotic behaviour. Which of these
terms is dominant depends on how quickly $m$ grows as a function of
$t$. If there are relatively few switches, $m\ln m=o(\ln t)$, then the
$c\ln(t-1)$ term dominates, so it pays to use a small value for $c$ to
get good asymptotics in that case. If, on the other hand, the number
of switches is large, then the last term is larger, and it may be
substantial; careful judgement is then required to decide whether or
not this is an acceptable price to pay or that a more sophisticated
method for \emph{learning} the switching rate (\secref{sec:learnrate}) is
preferable.

\subsubsection{Switching with More Quickly Decreasing
  Probability}\label{sec:switch.fast}
%    b) e^{c*w(t)}: loss bound, opmerken dat die niet afhangt van T
In some settings the optimal number of switches between experts may
remain bounded. A natural example is Bayes factors model selection,
where the considered experts are Bayesian prediction strategies
associated with model classes of varying complexity; at small sample
sizes, simple model classes typically yield the best predictions (as
their parameter estimates are quicker to converge to their optimal
values), but if one of the more complex model classes contains the
data generating distribution, then that model class eventually
produces the best predictions. From that point in time onwards, no
more switches away from that model class are required.

In such a scenario, a simple Bayesian mixture of the experts with
uniform prior yields a regret bound of $\ln k$ w.r.t.\ the ultimately
best expert (see \eqref{eq:drop_terms}), which depends on the number
of experts but \emph{not on the sample size}. Asymptotically, this is
therefore a better solution than the one presented in the previous
section, where even if there are no switches at all ($m=1$), the
incurred regret bound of $\ln k+c+c\ln(t-1)$ grows without bound due
to the first term of \eqref{eq:logtvisible}. This happens because the
ES-prior $\ehmm{dsr}[\alpha^\omega]$ assigns zero probability to the
event that no more switches occur from some time $t$ onwards. The
problem with the Bayesian mixture, as apparent in \eqref{eq:drop_terms}, is that it cannot take advantage of
the superior performance of the simpler models at small sample
sizes. As shown in~\cite{ErvenGrunwaldDeRooij2012}, this results in a
suboptimal rate of convergence in the nonparametric case for the
Bayesian mixture: its overhead compared to a switching model (such as
the one from the previous section) can be arbitrarily large!

To achieve the best of both worlds, we must tweak the model from the
previous section somewhat: while we still assign positive prior
probability to the occurrence of switches, we also ensure that the
probability that no more switches occur from any given time onwards is
strictly positive. This section describes a simplification of the
\emph{Switch Distribution}\footnote{Incidentally, the switch
  distribution is a tracking model whose interpolator has the
  structure depicted in \figref{fig:switc.dist.interpolator}. The idea
  is that with every switch, there is a certain fixed probability of
  ``stabilisation'', meaning that the interpolator enters a special
  ``band'' of states where further switching is impossible.} proposed
in~\cite{ErvenGrunwaldDeRooij2012} for which the results of that paper
still hold. In brief, this model achieves the optimal rate of risk
convergence when used for sequential prediction, but at the same time,
it defines a \emph{consistent} model selection criterion (it selects the
model containing the true distribution with probability $1$ as
sufficient data become available).

\begin{theorem} 
  Let $\alpha_i=1-e^{-c\pit(i)}$ for some $c>0$ and a decreasing probability mass
  function $\pit$ on the positive integers.  Let $w$ be the uniform
  distribution on the set $\Xi$ of $k$ experts.  For any data $x^t$ and
  expert sequence $\xi^t$ with $m$ blocks
\[%\begin{split}
\ln\frac{\ehmm{P}[\xi^t](x^t)}{\ehmm{dsr}[w,\alpha^\omega](x^t)}
%~\le~
%m\ln k- \ln \ehmm{dsr}[\alpha^\omega](\sigma^{t-1})
\le m\ln k+
c - (m-1) \ln c - (m-1)\ln\pit(t_m).
\]%\end{split}\]
\end{theorem}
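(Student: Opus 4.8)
The plan is to follow exactly the template of the preceding theorem (the slowly decreasing case $\alpha_i=1-e^{-c/i}$): I would first bound the interpolator cost $-\ln\ehmm{dsr}[\alpha^\omega](\sigma^{t-1})$ for the canonical switch sequence $\sigma^{t-1}$ carrying $\swi$ at exactly the $m-1$ block boundaries $t_2<\cdots<t_m$, and then transport this to the data level by adding $m\ln k$ via \corref{cor:interbound}. The starting point is the exact identity \eqref{eq:bnd}, which cleanly separates the cost of never switching (the first sum) from the per-switch corrections (the second sum).

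The substitution $\alpha_i=1-e^{-c\pit(i)}$ turns each non-switching factor into $-\ln(1-\alpha_i)=c\pit(i)$ and each correction ratio into $\alpha_{t_j}/(1-\alpha_{t_j})=e^{c\pit(t_j)}-1$, so that
\[
-\ln\ehmm{dsr}[\alpha^\omega](\sigma^{t-1})
~=~
c\sum_{i=1}^{t-1}\pit(i)
-\sum_{j=2}^{m}\ln\del[\big]{e^{c\pit(t_j)}-1}.
\]
Two elementary bounds then finish the estimate. Because $\pit$ is a probability mass function, $\sum_{i=1}^{t-1}\pit(i)\le 1$, so the non-switching cost is at most $c$ — a \emph{constant}, rather than the $c\ln(t-1)$ that the summable-but-divergent harmonic weights produced in the previous section. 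For the corrections I would apply $e^x\ge 1+x$ to get $e^{c\pit(t_j)}-1\ge c\pit(t_j)$, hence $-\ln\del[\big]{e^{c\pit(t_j)}-1}\le -\ln c-\ln\pit(t_j)$.

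The one step requiring care — and the reason the final bound is independent of the current sample size $t$ — is the handling of the $m-1$ correction terms $-\ln\pit(t_j)$. In the slowly decreasing case these were crudely bounded by replacing each $t_j$ with $t-1$, which reintroduced a $\ln t$ penalty; here I would instead exploit that $\pit$ is \emph{decreasing} together with $t_j\le t_m$ for every $j\le m$, so that $\pit(t_j)\ge\pit(t_m)$ and thus $-\ln\pit(t_j)\le-\ln\pit(t_m)$. Summing the $m-1$ corrections gives $-(m-1)\ln c-(m-1)\ln\pit(t_m)$; combining with the constant $c$ from the non-switching cost and adding $m\ln k$ through \corref{cor:interbound} yields the stated inequality. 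The key conceptual point to flag is that the bound depends only on the time $t_m$ of the \emph{last} switch, not on $t$, which is precisely the improvement this section seeks: once switching stops, the regret guarantee no longer grows.
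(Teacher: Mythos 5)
Your proposal is correct and follows the paper's own proof essentially verbatim: both start from the identity \eqref{eq:bnd}, bound the non-switching cost by $c$ using $\sum_i\pit(i)\le 1$, bound each correction via $e^x\ge x+1$ to get $-\ln c-\ln\pit(t_j)$, and use that $\pit$ is decreasing with $t_j\le t_m$ before transporting to the data level through \corref{cor:interbound}. No gaps.
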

\begin{IEEEproof}
Using \eqref{eq:bnd},
  $\sum_i\pit(i) = 1$ and $e^x \ge x+1$,
\[\begin{split}
&- \ln \ehmm{dsr}[\alpha^\omega](\sigma^{t-1}) 
=
c \sum_{i=1}^{t-1}\pit(i) - \sum_{j=2}^{m} \ln \del[\big]{e^{c\cdot\pit(t_j)} -1}\\
&\le
c - (m-1) \ln c - \sum_{j=2}^{m} \ln\pit(t_j).
\end{split}\]
For decreasing $\pit$, we obtain an upper bound by substituting
$t_i=t_m$ for $1\le i<t_m$, and the theorem follows 
from \corref{cor:interbound}. 
\end{IEEEproof}
A desirable feature of this bound is that it is expressed in terms of
the index $t_m$ of the last switch rather than in terms of the time
$t$, as we obtained in \secref{sec:switch.slow}. The role of $c$ is
even weaker than before, as there is no $c\ln t$ penalty term; its
optimal value is now $c^*=m-1$, but in practice $c=1$ would be a
sensible value. On the flip side, a $\ln t$ cost per switch as in \eqref{eq:switch.slow.bound}
can no longer be guaranteed. A convenient fat-tailed prior that comes very close is
\begin{equation}\label{eq:cool.prior}
  \pit(t)=\frac{1}{\ln(t+e-1)}-\frac{1}{\ln(t+e)},
\end{equation}
which satisfies $-\ln\pit(t)\le\ln(t)+2\ln\ln(t+e)+e/t$.

\subsection{Learning the Switching Rate}\label{sec:learnrate}
% 4. Leren van de rate
\subsubsection{The Switching Method}
%    a) Switching method

% switching method:
% 1. volf en willems introduceren switching method, die kan worden
% beschreven als hmm
% 2. vergelijk interpolatorkans op sw.seq.; verschil <= \half\ln t +1
% 3. optillen naar expertsequences, plug in thm.4 (i)
% bound tov FS[a] voor alle a
% in [...] heeft bousquet onafh. ook de sw.m. bedacht en het verband
% met de FS bound onderzocht, maar hij vergelijkt de bounds terwijl
% wij sw.m. vergelijken met de echte FS performance.

In a very early publication \cite{volfwillems1998}, Volf and Willems
describe an algorithm called \emph{the switching method}, which is
very similar to Herbster and Warmuth's Fixed Share, except that it is
able to learn the optimal switching rate $\alpha$ on-line. A similar
method was developed independently in \cite{Vovk1999}. Here we
describe the Switching Method as an interpolation and bound its regret. Whereas Fixed
Share interpolates using a fixed $\text{Bernoulli}[\alpha]$
distribution, the switching method ``integrates out'' the parameter
using Jeffreys' prior (which is $\text{Beta}[\half,\half]$).

The switching method EHMM is defined as the interpolation
\[
\ehmm{sm}[w] 
~\df~
\ehmm{b}[w] 
\otimes_{\ehmm{sm}[]} 
\ehmm{em}[w],
\]
with the interpolator $\ehmm{sm}[]$ defined in
\figref{graph:universal.share}. Each productive state $\tuple{n_\nsw,
  n_\swi,\sigma}$ represents the fact that after observation $n_\nsw
+ n_\swi +1$ a switch occurs $(\sigma=\swi)$ or not $(\sigma=\nsw)$,
while there have been $n_\swi$ switches in the past.

We now bound the regret of the switching method with respect to Fixed
Share with any switching rate $\alpha$ (in particular the maximum
likelihood rate $\hat \alpha$), and thereby show that it is universal
for the Fixed Share model class $\setc[\big]{\ehmm{fs}[w,\alpha]}{
  \alpha \in \intcc{0,1}}$.

\begin{theorem}[The Switching Method Regret]\label{thm:sm.loss.bound}For any switching rate $\alpha$ and data $x^t$
\[
\ln \frac{\ehmm{fs}[w, \alpha](x^t)}{\ehmm{sm}[w](x^t)}
~\le~
\ln 2 + \tfrac{1}{2} \ln t.
\]
\end{theorem}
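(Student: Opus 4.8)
The plan is to exploit that $\ehmm{fs}[w,\alpha]$ and $\ehmm{sm}[w]$ are interpolations built from the \emph{same} pair $\Q_\nsw = \ehmm{b}[w]$ and $\Q_\swi = \ehmm{em}[w]$, differing only in their interpolator: $\ehmm{fs}[\alpha]$ for the former and $\ehmm{sm}[]$ for the latter. First I would note that, conditional on a fixed switch sequence $\sigma^{t-1} \in \set{\nsw,\swi}^{t-1}$, the induced data law $\R(x^t \mid \sigma^{t-1})$ depends only on $\sigma^{t-1}$ together with the shared $\Q$-level dynamics --- it is the sum over state sequences $\Bstate^t$ of the factor $\Q_\swi(\Bstate_1)\prod_{i=1}^{t-1}\Q_{\sigma_i}(\Bstate_{i+1}\mid\Bstate_i)$ from \lemref{lem:interpolation.loss.bound}, weighted by the corresponding expert predictions --- and hence is \emph{identical} for both models; call it $M_{\sigma^{t-1}}(x^t)$. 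Since $\R(\sigma^{t-1}) = \C(\sigma^{t-1})$ for any interpolation with interpolator $\C$, both data likelihoods are mixtures of the common $M_{\sigma^{t-1}}$ against the respective interpolator weights:
\[
\ehmm{fs}[w,\alpha](x^t) = \sum_{\sigma^{t-1}} \ehmm{fs}[\alpha](\sigma^{t-1})\,M_{\sigma^{t-1}}(x^t),
\qquad
\ehmm{sm}[w](x^t) = \sum_{\sigma^{t-1}} \ehmm{sm}[](\sigma^{t-1})\,M_{\sigma^{t-1}}(x^t).
\]
Because every $M_{\sigma^{t-1}}(x^t) \ge 0$, any \emph{pointwise} bound $\ehmm{fs}[\alpha](\sigma^{t-1}) \le \rho\,\ehmm{sm}[](\sigma^{t-1})$ holding uniformly in $\sigma^{t-1}$ lifts immediately to $\ehmm{fs}[w,\alpha](x^t) \le \rho\,\ehmm{sm}[w](x^t)$. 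The whole problem therefore reduces to comparing the two interpolators on binary switch strings.

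Next I would make the two interpolator probabilities explicit. As already computed for Fixed Share, $\ehmm{fs}[\alpha](\sigma^{t-1}) = (1-\alpha)^{t-m}\alpha^{m-1}$, which is exactly the i.i.d.\ Bernoulli$(\alpha)$ likelihood of a length-$(t-1)$ string carrying $m-1$ copies of $\swi$. The interpolator $\ehmm{sm}[]$ integrates this Bernoulli parameter against Jeffreys' prior $\text{Beta}(\tfrac12,\tfrac12)$; by de~Finetti's theorem this means $\ehmm{sm}[](\sigma^{t-1})$ is precisely the Krichevsky--Trofimov (KT) sequential probability assignment on the binary alphabet $\set{\nsw,\swi}$ evaluated at $\sigma^{t-1}$. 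The ratio we must control is thus the individual-sequence redundancy of the KT code relative to an arbitrary Bernoulli source.

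Finally I would invoke the standard pointwise KT redundancy bound: for every binary string of length $n\ge1$ one has $\ln\bigl(\ehmm{fs}[\hat\alpha](\sigma^{t-1})/\ehmm{sm}[](\sigma^{t-1})\bigr) \le \tfrac12\ln n + \ln 2$, with $\hat\alpha$ the maximum-likelihood rate. Since $\ehmm{fs}[\alpha](\sigma^{t-1}) \le \ehmm{fs}[\hat\alpha](\sigma^{t-1})$ for every $\alpha$, the same bound holds for the arbitrary $\alpha$ in the statement; taking $n = t-1$ and using $\tfrac12\ln(t-1)\le\tfrac12\ln t$ yields the uniform factor $\rho = e^{\ln 2 + \frac12\ln t} = 2\sqrt{t}$. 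Substituting this into the mixture identity of the first paragraph and taking logarithms gives exactly $\ln 2 + \tfrac12\ln t$. I expect the main obstacle to be the pointwise KT bound with these precise constants: it is the only genuinely non-trivial ingredient and is obtained from a Stirling-type lower estimate of the form $\ehmm{sm}[](\sigma^{t-1}) \ge \tfrac{1}{2\sqrt{t-1}}\,\ehmm{fs}[\hat\alpha](\sigma^{t-1})$. A secondary point that must be verified carefully is that $M_{\sigma^{t-1}}$ really is interpolator-independent, which follows from the conditional-independence structure displayed in \figref{fig:BNinterpolation}.
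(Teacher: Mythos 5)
Your proposal is correct and follows essentially the same route as the paper: the paper transfers the bound from the data level to the switch-sequence level via information processing inequalities (your explicit mixture decomposition over $\sigma^{t-1}$ with a common kernel $M_{\sigma^{t-1}}$ is precisely the justification of that step, with the paper merely inserting the intermediate expert-sequence level $\xi^t$), and then both arguments conclude by the standard pointwise Krichevsky--Trofimov redundancy bound $\ln 2 + \tfrac{1}{2}\ln t$ for the Bernoulli model under Jeffreys' prior.
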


\begin{IEEEproof}
Fixed Share and the switching method interpolate the same EHMMs, so we
have the following information processing inequalities (c.f.\
\lemref{lem:toexpectation})
\[\begin{split}
&\max_{x^t} 
\frac{\ehmm{fs}[w, \alpha](x^t)}{\ehmm{sm}[w](x^t)}
\le
\max_{\xi^t} 
\frac{\ehmm{fs}[w, \alpha](\xi^t)}{\ehmm{sm}[w](\xi^t)}\\
&\le
\max_{\sigma^{t-1}} 
\frac{\ehmm{fs}[w, \alpha](\sigma^{t-1})}{\ehmm{sm}[w](\sigma^{t-1})}
=
\max_{\sigma^{t-1}} 
\frac{\ehmm{fs}[\alpha](\sigma^{t-1})}{\ehmm{sm}[w](\sigma^{t-1})}.
\end{split}\]
Thus we may transfer regret bounds from the interpolator level via the
expert-sequence level to the data level. The rightmost term is the
worst-case regret for the Bernoulli model with Jeffreys prior, which
can be bounded (see e.g.~\cite{willems96:coding_piecew}) by $\ln 2 +
\tfrac{1}{2} \ln t$ for all $\alpha$.
\end{IEEEproof}

\noindent
By the previous theorem and the Fixed Share regret bound
\thmref{thm:fixed.share.loss.bound}, we obtain for all $\xi^t$ with
switching frequency $\alpha^*$
\[
\ln \frac{\ehmm{P}[\xi^t](x^t)}{\ehmm{sm}[w](x^t)}
~\le~
m \ln k + 
(t-1) \ent(\alpha^*) + \ln 2 + \tfrac{1}{2} \ln t.
\]
The switching method was independently derived in~\cite{bousquet2003},
where this last inequality is also proved. Our theorem is slightly
sharper, as it bounds the regret w.r.t.\ the actual maximum-likelihood
Fixed Share performance instead of its regret bound.

\begin{figure}
\caption{The switching method interpolator $\ehmm{sm}[]$}\label{graph:universal.share}
\centering
\subfloat{%
\newcommand{\exA}{1}%
\newcommand{\exB}{0}%
$\xymatrix@R=0.5em@C=1.4em{
&
&&
&&
&&&
\\
&   & 
&   & 
&   & 
& \ex{\swi} \ar@{.>}[ur]  & 
\\
&
&&
&&
&\si \name{3,0} \ar[r] \ar[ur] & \ex{\nsw} \ar@{.>}[r] &
\\
& & 
& & 
& \ex{\swi} \ar[ur] & 
& \ex{\swi} \ar[ur] & 
\\
& 
&&
&\si \name{2,0} \ar[r] \ar[ur] & \ex{\nsw} \ar[r]
&\si \name{2,1} \ar[r] \ar[ur] & \ex{\nsw} \ar@{.>}[r] &
\\
& & 
& \ex{\swi} \ar[ur] & 
& \ex{\swi} \ar[ur] & 
& \ex{\swi} \ar[ur] & 
\\
& 
&\si \name{1,0} \ar[r] \ar[ur] & \ex{\nsw} \ar[r]
&\si \name{1,1} \ar[r] \ar[ur] & \ex{\nsw} \ar[r]
&\si \name{1,2} \ar[r] \ar[ur] & \ex{\nsw} \ar@{.>}[r] &
\\
& \ex{\swi} \ar[ur] & 
& \ex{\swi} \ar[ur] & 
& \ex{\swi} \ar[ur] & 
& \ex{\swi} \ar[ur] & 
\\
 \bn \name{0,0} \ar[r] \ar[ur] & \ex{\nsw} \ar[r]
&\si \name{0,1} \ar[r] \ar[ur] & \ex{\nsw} \ar[r]
&\si \name{0,2} \ar[r] \ar[ur] & \ex{\nsw} \ar[r]
&\si \name{0,3} \ar[r] \ar[ur] & \ex{\nsw} \ar@{.>}[r] &
}$%
}
\quad
\subfloat{%
\setlength{\abovedisplayskip}{0cm}
\setlength{\belowdisplayskip}{0cm}
\small%
\begin{minipage}[t]{.47\textwidth}
\begin{gather*}
\begin{aligned}
\wstates &= \wstates^\sil \cup \wstates^\prd
&
\wstates^\sil &= \nats^2
&
\wstates^\prd &= \nats^2 \times \set{\nsw,\swi}
\end{aligned}
\\
\begin{aligned}
\wstart &= \tuple{0,0}
&
\wnl(n_\nsw, n_\swi, \sigma) &= \sigma
\end{aligned}
\\
\wtf \del*{
\begin{aligned}
\tuple{n_\nsw, n_\swi,\nsw} &\to \tuple{n_\nsw+1, n_\swi}
\\
\tuple{n_\nsw, n_\swi,\swi} &\to \tuple{n_\nsw, n_\swi+1}
\\
\tuple{n_\nsw, n_\swi} &\to \tuple{n_\nsw, n_\swi, \nsw}
\\
\tuple{n_\nsw, n_\swi} &\to \tuple{n_\nsw, n_\swi, \swi}
\end{aligned}} =
\del*{
\begin{gathered}
1
\\
1
\\
\tfrac{(n_\nsw +\frac{1}{2})}{(n_\nsw + n_\swi+1)}
\\
\tfrac{(n_\swi +\frac{1}{2})}{(n_\nsw + n_\swi+1)}
\end{gathered}}
\end{gather*}
\end{minipage}
}
\end{figure}
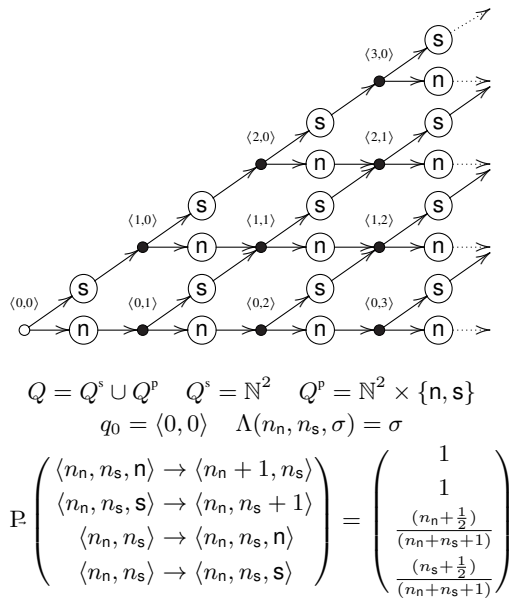%

\subsubsection{Improving Time Efficiency for Learning the Switching Rate}\label{sec:montjaak}
The new ingredient of the switching method compared to Fixed Share is
that the EHMM includes a switch count in each state. This allows us to
adapt the switching probability to the data, but it also renders the
number of states quadratic. The quadratic running time $O(k\,t^2)$
restricts its use to moderately sized data sets. The approach taken by Monteleoni and Jaakkola~\cite{Jaakkola2003} is to place a
\emph{discrete} prior on the switching rate $\alpha$: the prior mass is distributed over $\sqrt t$
well-chosen points, where the ultimate sample size $t$ is assumed
known. This way they still achieve the bound of
\thmref{thm:sm.loss.bound} up to a constant, while reducing the
running time to $O(k\,t\sqrt t)$.

This approach has two disadvantages of
its own: first, the ultimate sample size $t$ has to be known in
advance, which means that the presented algorithm is only
quasi-online. Second, the discretisation of the prior is obtained by a numeric optimisation procedure, which means that both the number and the locations of
the discretisation points are not known in closed form. As a
consequence, the resulting regret bound can only be determined up to
$O(1)$. In~\cite{DeRooijVanErven09} a simple \emph{explicit}
discretisation scheme is presented which allows the regret bound to be
calculated exactly. Furthermore, it is shown how, at the cost of a
somewhat worse regret bound, this discretisation scheme can be
\emph{refined} online such that $t$ no longer has to be known in
advance.

\subsection{The Run-length Model for Clustered Switching}\label{sec:rl}
% 5. Clustered switches
%    run length model
%

Run-length codes have been used extensively in the context of data
compression, see e.g.~\cite{Moffat2002}. The corresponding probability distributions are known in the statistical literature as renewal processes, see \cite{renewal}.

Rather than applying run
length codes directly to the observations, we use them as interpolators, as
they constitute good models for the distances between consecutive
switches.

The run-length model is especially useful if the switches are
clustered, in the sense that some parts of the expert sequence contain
relatively few switches, while other parts contain many. The Fixed
Share algorithm remains oblivious to such properties, as its
interpolator is a Bernoulli model: the probability of switching
remains the same, regardless of the index of the previous
switch. Essentially the same limitation also applies to the switching
method, whose switching probability normally converges as the sample
size increases. The decreasing $\alpha$ models perform well when the
switches are clustered toward the beginning of the sample, but
depending on the application this may be unrealistic and may introduce
a new unnecessary loss overhead.

The run-length model, which is related to Willems' ``linear complexity
coding method'' from~\cite{willems96:coding_piecew} and its subsequent refinement in~\cite{ShamirMerhav1999}, models
the \emph{intervals} between successive switches as
independently distributed according to some distribution $\pit$. After
the switching method and decreasing $\alpha$ models, this is a third
generalisation of the Fixed Share algorithm, which is recovered by
taking a geometric distribution for $\pit$: the interpolation then becomes memoryless and reduces to the interpolator of Fixed Share.

Let $\pit$ be a distribution on $\posints \cup \set{\infty}$, which is
used to model the lengths of the blocks. We assume $\pit(\infty)>0$;
this keeps our regret constant when the reference number of switches
is bounded while the number of samples goes to infinity. The
run-length interpolator $\ehmm{rl}[\pit]$ is defined in
\figref{fig:dfe.automaton}. Intuitively, the state $\tuple{t,\delta}$
means that we are at time $t$, and that sample $t+1$ will be the
$\delta$th sample since the last switch.  
The EHMM for the run-length model is given by the interpolation
\[
\ehmm{rl}[w, \pit] 
~\df~
\ehmm{b}[w] 
\otimes_{\ehmm{rl}[\pit]} 
\ehmm{em}[w].
\]
As may be read from the diagram of the interpolator, we require
quadratic running time $O(k\,t^2)$ to evaluate the run-length model in
general.

\begin{figure}
\caption{The run-length model interpolator $\ehmm{rl}[\pit,c]$}\label{fig:dfe.automaton}
\centering
\subfloat{%
$\xymatrix@R=0.5em@C=1.4em{
& &   &  &  &  & &&
\\
 &  &  &  &  & &&\ex{\swi} \ar@{.>}[ur] &
\\
& &   &  &  &  & \si \bname{3,1} \ar[r]\ar[ur] & \ex{\nsw} \ar@{.>}[r] &
\\
 &  &  &  &  & \ex{\swi} \ar[ur] &
\\
& &   &  & \si \bname{2,1} \ar[r]\ar[ur]  & \ex{\nsw} \ar[r] & \si \bname{3,2} \ar[r] \ar[uuur] & \ex{\nsw} \ar@{.>}[r] &
\\
 &  &  & \ex{\swi} \ar[ur] &  & &
\\
& & \si \bname{1,1} \ar[r]\ar[ur]  & \ex{\nsw} \ar[r] & \si \bname{2,2} \ar[r]\ar[uuur]  & \ex{\nsw} \ar[r] & \si \bname{3,3} \ar@/^.5ex/[uuuuur] \ar[r] & \ex{\nsw} \ar@{.>}[r] &
\\
 & \ex{\swi} \ar[ur] &  &  &  &  &
\\
\bn \bname{0,1} \ar[r] \ar[ur] & \ex{\nsw} \ar[r]  & \si \bname{1,2} \ar[r] \ar[uuur] & \ex{\nsw} \ar[r] & \si \bname{2,3} \ar[r] \ar@/^.5ex/[uuuuur] & \ex{\nsw} \ar[r] & \si \bname{3,4} \ar[r] \ar@/^1ex/[uuuuuuur] & \ex{\nsw} \ar@{.>}[r] &
}$%
}
\quad
\subfloat{%
\setlength{\abovedisplayskip}{0cm}
\setlength{\belowdisplayskip}{0cm}
\small%
\begin{minipage}[t]{.47\textwidth}
\begin{gather*}
\begin{aligned}
\wstates &= \wstates^\sil \cup \wstates^\prd
&
\wstates^\sil &= \mathbb S
&
\wstates^\prd &= \set*{\nsw} \times \mathbb S ~\cup\, \set*{\swi} \times \nats
\end{aligned}
\\
\begin{aligned}
\wstart &= \tuple{0,1}
&
\wnl(\nsw,t,\delta) &= \nsw
&
\wnl(\swi,t) &= \swi
\end{aligned}
\\
\wtf \del*{\!\!
\begin{aligned}
\tuple{\swi,t} & \to \tuple{t, 1}
\\
\tuple{\nsw, t, \delta} & \to \tuple{t,\delta}
\\
\tuple{t, \delta} & \to \tuple{\nsw, t\!+\!1, \delta\!+\!1}
\\
\tuple{t, \delta} & \to \tuple{\swi, t\!+\!1}
\end{aligned}\!} =
\del*{\!
\begin{gathered}
1
\\
1
\\
\pit({\scriptsize \rv z > \delta | \rv z \ge \delta})
\\
\pit({\scriptsize \rv z = \delta | \rv z \ge \delta})
\end{gathered}\!}
\end{gather*}
where
\[
\mathbb S \df \set*{\tuple{t,\delta} \in \nats^2 \mid \delta \le t+1}. 
\]
\end{minipage}
}
\end{figure}

\begin{theorem}[Run-length Model Regret]\label{thm:rlebound}
  Let $w$ be the uniform distribution on $k$ experts.  Assume there is
  a log-convex function $\vartheta$ on $[1,\infty)$ that agrees with
  $\tau$ on $\posints$. With abuse of notation, we identify $\tau$
  with $\vartheta$. Then, for all data $x^t$ and expert sequences
  $\xi^t$ with $m$ blocks, we have
\begin{equation}\label{eq:rlebound}
\ln \frac{\ehmm{P}[\xi^t](x^t)}{\ehmm{rl}[w, \pit](x^t)}
~\le~
m\ln k -\ln\pit(\infty)- (m-1)\ln\pit\del*{\frac{t_m}{m-1}}.
\end{equation}
\end{theorem}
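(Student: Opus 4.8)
The plan is to follow the recipe announced at the close of the interpolation section: establish the bound on the interpolator level for $\ehmm{rl}[\pit]$ and transport it to the data level through \corref{cor:interbound}, paying the additive $m\ln k$. Concretely, I fix $\xi^t$ with $m$ blocks and set $\sigma_i=\swi$ iff $\xi_{i+1}\neq\xi_i$, so that $\sigma^{t-1}$ contains $m-1$ switches, at positions $t_2<\cdots<t_m$; as in \secref{sec:dsr} I write $t_1=0$. With $\C=\ehmm{rl}[\pit]$ the interpolator and $\R=\ehmm{rl}[w,\pit]$ the full model, \corref{cor:interbound} reduces the claim to showing that $-\ln\ehmm{rl}[\pit](\sigma^{t-1})\le -\ln\pit(\infty)-(m-1)\ln\pit\del*{t_m/(m-1)}$.

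The first step is to read the factorisation of $\ehmm{rl}[\pit](\sigma^{t-1})$ off the transition function of \figref{fig:dfe.automaton}. Within the $j$th completed block, of length $\ell_j=t_{j+1}-t_j$ for $j=1,\ldots,m-1$, the unique state sequence producing $\sigma^{t-1}$ takes the no-switch transitions $\pit(\rv z>\delta\mid\rv z\ge\delta)$ for $\delta=1,\ldots,\ell_j-1$ followed by the switch transition $\pit(\rv z=\ell_j\mid\rv z\ge\ell_j)$; this product telescopes to exactly $\pit(\ell_j)$. After the final switch only no-switch transitions are taken, and their product telescopes to the survival probability of the ongoing last block, which is at least $\pit(\infty)$ since a survival probability always includes the atom at $\infty$. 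Hence
\[
\ehmm{rl}[\pit](\sigma^{t-1})~=~\Big(\textstyle\prod_{j=1}^{m-1}\pit(\ell_j)\Big)\cdot\pit(\rv z\ge t-t_m),
\]
so that $-\ln\ehmm{rl}[\pit](\sigma^{t-1})\le -\ln\pit(\infty)-\sum_{j=1}^{m-1}\ln\pit(\ell_j)$. Because $\sum_{j=1}^{m-1}\ell_j=t_m-t_1=t_m$, the remaining task is to bound $-\sum_{j=1}^{m-1}\ln\pit(\ell_j)$ by $-(m-1)\ln\pit\del*{t_m/(m-1)}$.

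This last inequality is exactly where the log-convexity hypothesis is used: identifying $\pit$ with its log-convex extension $\vartheta$ makes $-\ln\pit$ a concave function on $[1,\infty)$, so Jensen's inequality applied to the $m-1$ block lengths $\ell_j$, whose arithmetic mean is $t_m/(m-1)$, yields precisely the stated upper bound. Assembling the survival term, the Jensen term, and the $m\ln k$ contributed by \corref{cor:interbound} gives \eqref{eq:rlebound}.

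I expect the delicate points to be bookkeeping rather than depth: getting the telescoping products and the block/survival off-by-one right, and—most importantly—the direction of Jensen. It is easy to misread the hypothesis as log-\emph{concavity}; the correct reading is that $\pit$ log-convex makes $-\ln\pit$ \emph{concave}, which is what converts the sum $\sum_j-\ln\pit(\ell_j)$ into an \emph{upper} bound of the form $-(m-1)\ln\pit(\text{mean})$. The continuous extension $\vartheta$ is needed only so that the generally non-integer argument $t_m/(m-1)$ lies in the domain where concavity can be invoked.
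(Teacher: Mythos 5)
Your proposal is correct and follows essentially the same route as the paper's proof: reduce to the interpolator level via Corollary~\ref{cor:interbound}, factorise $\ehmm{rl}[\pit](\sigma^{t-1})$ into the block probabilities $\pit(\delta_j)$ times the survival term bounded below by $\pit(\infty)$, and apply Jensen's inequality to the concave function $-\ln\pit$ over the block lengths. Your explicit verification of the telescoping of the conditional transition probabilities and your remark on the direction of Jensen are details the paper leaves implicit, but the argument is the same.
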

\begin{IEEEproof}
  Fix a switch sequence $\sigma^{t-1}$ with $m-1$ occurrences of
  $\swi$ at positions $t_2,\ldots,t_m$, and let $t_1=0$. For $j=1,
  \ldots, m-1$, let $\delta_j=t_{j+1}-t_j$ denote the length of block
  $j$. From the definition of the interpolator above, we obtain
\[\begin{split}
- \ln \ehmm{rl}[\pit](\sigma^{t-1})
&=
-\ln\pit(\rv z \ge t-t_m)-\sum_{j=1}^{m-1}\ln \pit(\delta_j)\\
&\le
-\ln \pit(\infty)-\sum_{j=1}^{m-1}\ln \pit(\delta_j)
.
\end{split}\]
Since $-\ln \pit$ is concave, by Jensen's inequality we have
\[
\sum_{j=1}^{m-1}\frac{-\ln\pit(\delta_j)}{m-1}
\le -\ln\pit\del[\Bigg]{\sum_{j=1}^{m-1} \frac{\delta_j}{m-1}}
=-\ln\pit\del*{\frac{t_m}{m-1}}.
\]
In other words, the block lengths $\delta_i$ are all equal in the
worst case. Combining this with \corref{cor:interbound} we obtain the
result.
\end{IEEEproof}

We have seen that the run-length model reduces to Fixed Share if the
prior on switch distances $\pit$ is geometric, so it can be
evaluated in linear time in that case. The geometric prior is not the
only one for which the complexity can be reduced; for example, the
negative binomial distribution can also be implemented efficiently, as
well as any $\tau$ with finite support. However, such priors must have
exponentially small tails; for priors $\tau$ with thick tails, which
are desirable in our worst-case analysis, one may use the fully general
EHMM as depicted in Figure~\ref{fig:dfe.automaton}.

To compare the performance of the run-length model to the bound
\eqref{eq:fixedshare.lossbound} for Fixed Share, assume $t_m=t-1$ and
define $\pit$ as in~\eqref{eq:cool.prior}. The
bound~\eqref{eq:rlebound} becomes
\begin{multline*}
  m\ln k-\ln\pit(\infty)+(m-1)\ln\frac{t-1}{m-1}\\
  +2(m-1)\ln\ln\del*{\frac{t-1}{m-1}+e}+(m-1)e.
\end{multline*}
To maximise the difference, we use the optimising parameter $\alpha^*$
for Fixed Share, and we bound the entropy from below
using~\eqref{eq:fs.sandwich}.  The gap between the bounds is then
given by
\[-\ln\pit(\infty)+2(m-1)\ln\ln\del*{\frac{t-1}{m-1}+e}+(m-1)e.\]
At this modest price, the run-length model does not require tuning any
parameters, its regret depends on $t_m$ instead of $t$, and it may
take advantage of clustered switches, although this is not expressed
by \thmref{thm:rlebound}.

Willems and Krom showed in \cite{WillemsKrom1997} how the time complexity of the algorithm can be improved at a cost of a slightly worse regret bound. They force a switch at carefully chosen states in the run-length EHMM (\figref{fig:dfe.automaton});
all subsequent states in the same row become unreachable and hence can be pruned from the model. Using this scheme the number of states reachable in the model at time $t$ (and therefore the complexity per time step) can be reduced to $\log_2 t$, while the regret bound deteriorates, also by a logarithmic factor. This approach was later refined to obtain more general time complexity / regret tradeoffs in~\cite{ShamirMerhav1999,
  ShamirCostello2000, hazan2009efficient, GyorgyLinderLugosi12012}.

\subsection{Ordered Experts}\label{sec:oe}
% 6. Ordered experts
%
In the models discussed so far, once a switch occurs, it is equally
easy to switch to any of the available experts, as $\Q_\swi$
prescribes uniform redistribution of the probability mass. This
approach is reasonable if we do not know anything about the
relationship between the experts; furthermore it has the advantage
that percolating probabilities through $\Q_\swi$ requires only $O(k)$
operations, while we would need $O(k^2)$ operations to support
arbitrary transition probabilities between the experts. In this
section we consider an interesting alternative that both makes
intuitive sense and allows for efficient computation.

Assume that the experts can be sensibly organised using a line or ring
topology, with the interpretation that switches between two experts
are more likely if they are close together on this structure than if
they are far apart.  An example is given by Vovk \cite{Vovk1999} who considers
a polynomial regression problem with one expert to represent the polynomials of each degree. In this
case it is clear that, typically, the optimal degree
increases gradually as more observations are gathered, so switching
from degree 10 to degree 11 is more likely than,
say, switching instead to degree 1,000.

We will simplify matters further by postulating that the probability
of a switch between any pair of experts who are $\delta$ apart is the
same. Furthermore, for simplicity of exposition we identify the
experts with the integers, $\Xi=\ints$. (In practice it is of course
not possible to work with an infinite set of experts, but this can be
resolved by simply changing the forward algorithm to drop all
probability mass that at any time becomes propagated to an expert
outside of the considered range.)

Now the notion of similarity between experts may be expressed by a
kernel $\kappa$, i.e.\ a probability distribution on distances. This
allows us to specify an expert sequence prior of the form
\[
\pi(\xi_{t+1}\mid \xi^t)=\kappa(\xi_{t+1}-\xi_t).
\]
Note that in principle a different kernel can be used every round, but
it suffices for our purpose to consider a fixed kernel. If an EHMM
$\Q$ has a marginal distribution on expert sequences of this form, it
is relatively easy to maintain the weights on the experts. A property of Hidden Markov Models is that
the next state $\rv \xi_{t+1}$ is independent of the observed data $\rv x^t$ given the current state $\rv \xi_t$, so that
$\ehmm{Q}[\pi](\xi_{t+1} \mid \xi_t, x^t) = \pi(\xi_{t+1} \mid \xi_t) = \kappa(\xi_{t+1}-\xi_t)$. Therefore we can compute the marginal distribution on the expert at time $t+1$ given the previously
observed data as
\[
\ehmm{Q}[\pi](\xi_{t+1} \mid x^t) 
=
\sum_{\xi_t} \kappa(\xi_{t+1} - \xi_t) \ehmm{Q}[\pi](\xi_t \mid x^t)
\fd
\kappa * \ehmm{Q}[\pi](\xi_t \mid x^t),
\]
where the asterisk denotes the \emph{convolution} operation.
This approach can be lifted to the level of states: sometimes it may
be sensible to order all states involved in an expert HMM. However,
for simplicity we will consider the interpolating model of
\secref{sec:interpolation}, where the transitions of $\Q_\swi$
are replaced by a convolution $\kappa$ on the experts. An EHMM
implementing such convolutions is
$\ehmm{kernel}[\kappa]\df\tuple{ \wstates, \wstates^\prd, \wstart,
  \wtf, \wnl }$, defined as follows
\begin{subequations}
\begin{gather*}
\begin{aligned}
\wstates &=\wstates^\prd \cup \wstates^\sil 
&
\wstates^\prd &=\ints\times\posints
&
\wstates^\sil &= \set{\tuple{0,0}}
&
\wstart &=\tuple{0,0}%)\kappa(\xi)
\end{aligned}
\\
\begin{aligned}
  \wnl(\xi,t)&=\xi
&
\wtf(\tuple{\xi,t}\to\tuple{\xi',t+1})&=\kappa(\xi'-\xi)
\end{aligned}
.
\end{gather*}
\end{subequations}
For this scenario, we derive the following analogue of
\corref{cor:interbound}:
\begin{corollary}[Kernel Interpolation Regret]\label{cor:kernel.bound}
  Let $\Q_\nsw=\ehmm{b}[\ints, \kappa]$ and $\Q_\swi=\ehmm{kernel}[\kappa]$. Fix $\xi^t$. Set $\sigma_i=\swi$ iff
  $\xi_{i+1}\ne\xi_i$, and for $1\le j\le m$ let $k_j$ denote the
  expert used in the $j$th block. Further let $k_0=0$. Then for all $x^t$:
\[
\ln \frac{\ehmm{P}[\xi^t](x^t)}{\del*{\Q_\nsw \otimes_\C \Q_\swi}(x^t)}
~\le~
- \ln \C(\sigma^{t-1}) - \sum_{j=1}^m \ln\kappa(k_j-k_{j-1}).
\]
\end{corollary}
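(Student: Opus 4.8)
\begin{IEEEproof}[Proof sketch of \corref{cor:kernel.bound}]
The plan is to mirror the proof of \corref{cor:interbound} line by line, the only difference being that the uniform redistribution performed by $\ehmm{em}[w]$ is now replaced by the kernel $\kappa$ supplied by $\ehmm{kernel}[\kappa]$. First I would use the one-one correspondence between productive states and their expert labels (valid at each time for both $\Q_\nsw$ and $\Q_\swi$) to identify state sequences with expert sequences, so that the interpolation $\R=\Q_\nsw \otimes_\C \Q_\swi$ may be viewed directly as a strategy on expert sequences. I then apply \lemref{lem:interpolation.loss.bound} with $\Bstate^t$ taken to be the unique productive state sequence realising $\xi^t$ and with the switch decisions $\sigma_i=\swi$ iff $\xi_{i+1}\neq\xi_i$.

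The next step is to read off the three ingredients of the decomposition in \lemref{lem:interpolation.loss.bound}. From the $\ehmm{kernel}[\kappa]$ definition the switching transitions are $\Q_\swi(\xi_{i+1}\mid\xi_i)=\kappa(\xi_{i+1}-\xi_i)$ and the initial factor is $\Q_\swi(\xi_1)=\kappa(\xi_1-0)=\kappa(k_1-k_0)$, using the convention $k_0=0$ fixed by the start state $\tuple{0,0}$; the Bayesian no-switch dynamics keep the active expert fixed, so $\Q_\nsw(\xi_{i+1}=\xi_i\mid\xi_i)=1$. Substituting these, every $\nsw$ round contributes a factor $1$ and the product over rounds collapses onto the block structure: the $m-1$ switches at block boundaries contribute $\prod_{j=1}^{m-1}\kappa(k_{j+1}-k_j)$, which together with the initial $\kappa(k_1-k_0)$ re-indexes to $\prod_{j=1}^m\kappa(k_j-k_{j-1})$. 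Hence $\R(\xi^t)\ge\C(\sigma^{t-1})\prod_{j=1}^m\kappa(k_j-k_{j-1})$.

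Finally I would take $-\ln$ of this inequality and chain it with the drop-a-term bound \eqref{eq:bound1} of \lemref{lem:dropmarg} applied to $\R$, yielding $\ln\bigl(\ehmm{P}[\xi^t](x^t)/\R(x^t)\bigr)\le-\ln\R(\xi^t)\le-\ln\C(\sigma^{t-1})-\sum_{j=1}^m\ln\kappa(k_j-k_{j-1})$, as required. The computation is otherwise routine; the only point that needs genuine care, and hence the main obstacle, is the bookkeeping at the very first round, namely justifying that the silent start state $\tuple{0,0}$ of $\ehmm{kernel}[\kappa]$ acts as a fictitious expert $k_0=0$ so that $\Q_\swi(\xi_1)=\kappa(k_1-k_0)$ lines up with the $j=1$ term of the claimed sum. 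Once this convention is pinned down, the alignment of the per-round product with the per-block product is immediate.
\end{IEEEproof}
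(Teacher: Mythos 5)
Your proposal is correct and follows essentially the same route as the paper: identify productive states with experts, read off $\Q_\swi(\xi_1)=\kappa(\xi_1)$, $\Q_\nsw(\xi_i=\xi_{i-1}\mid\xi_{i-1})=1$ and $\Q_\swi(\xi_i\mid\xi_{i-1})=\kappa(\xi_i-\xi_{i-1})$, apply Lemma~\ref{lem:interpolation.loss.bound} to get $\R(\xi^t)\ge\C(\sigma^{t-1})\prod_{j=1}^m\kappa(k_j-k_{j-1})$, and finish with \eqref{eq:bound1}. The point you flag as delicate --- that the silent start state $\tuple{0,0}$ makes the initial factor $\kappa(k_1-k_0)$ with $k_0=0$ --- is handled identically (if more tersely) in the paper's proof.
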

\begin{IEEEproof}As before, we identify productive states and experts to
  get $\ehmm{b}[\ints, \kappa](\xi_1) =
  \ehmm{kernel}[\kappa](\xi_1) = \kappa(\xi_1)$,
  $\ehmm{b}[\ints, \kappa](\xi_i = \xi_{i-1} | \xi_{i-1}) =
  1$, and $\ehmm{kernel}[\kappa](\xi_i | \xi_{i-1})=
  \kappa(\xi_i-\xi_{i-1})$.  Now
  \lemref{lem:interpolation.loss.bound} yields $\del*{\Q_\nsw
    \otimes_\C \Q_\swi}(\xi^t)\ge \C(\sigma^{t-1})~\prod_{j=1}^m
  \kappa(k_j-k_{j-1})$, and the result follows by \eqref{eq:bound1}.
\end{IEEEproof}
From the Convolution Theorem, we know that any convolution $\kappa *
\lambda$ on $k$ experts can be carried out in $O(k \ln k)$ time using
the Fast Fourier Transform algorithm, see e.g.
\cite{cootuk1965,bracewell65}. Thus, the ordered expert approach,
which can be combined with any of the interpolating models described
in previous sections, seems to provide a very attractive tradeoff
between time complexity and expressive power.

In the following we consider a particular kernel for which the
convolution can be performed in $O(k)$ time using a much simpler
algorithm. It also has an interesting interpretation as a nice model
for ``parameter drift''.

\subsection{Parameter Drift}\label{sec:drift}
% 7. Parameter drift
%    - dit gaat over een specifieke convolutie die in lineare tijd kan
%      en die leidt tot een heel ander soort regret bound.

So far, we have discussed strategies where we follow a Bayesian
prediction strategy which is interrupted every now and then by
switching events. This is reflected by the regret bound
\corref{cor:kernel.bound}, which consists of a term for the
cost of specifying the indices of the switches, and a second term for
the cost of specifying which experts are involved in the switches.

In this section we take a radically different approach. Rather than
thinking of sporadic abrupt changes in the relative predictive
performance of the experts, we now imagine that their performance
changes gradually over time. Sticking to the ordered experts approach,
as before we identify the set of experts with the integers,
$\Xi=\ints$ (indicating for example the number of bins in a regular histogram, or arising from discretisation of a continuous parameter space).  However, in this section we will bound the regret in
terms of the total amount of \emph{drift} in $\xi^t$:
\[
d=\sum_{i=1}^t \abs{\delta_i}\text{, where $\delta_1 = \xi_1$
  and $\delta_i=\xi_i-\xi_{i-1}$ for $1<i\le t$,}
\]
which can be viewed as the length of the path described by $\xi^t$.

As an example, one may consider the switching model proposed by
Monteleoni and Jaakkola (see \secref{sec:montjaak}). They
essentially instantiate a number of Fixed Share models, for various
values of the switching rate $\alpha$. These Fixed Share instances are
prediction strategies, and can therefore be interpreted as experts
themselves. However, it seems reasonable to assume that in many cases
the optimal switching rate $\alpha$ might be subject to drift: it
might vary somewhat as time progresses. Therefore it may be beneficial
to combine these ``Fixed Share experts'' using a model that can
represent parameter drift. The resulting loss can be bounded in terms
of the amount of drift that occurs in the reference sequence of
switching parameters.
For parameter drift we no longer use an interpolation, as in previous
sections, because switches no longer have special status.  Instead,
shifts between experts are possible at each time step, through
convolution with the following kernel, parameterised by $0<\alpha<1$:
\begin{equation}\label{eq:kappa.alpha}
\kappa_\alpha(\delta)~\df~\alpha^{|\delta|}\frac{1-\alpha}{1+\alpha}.
\end{equation}
This kernel can be implemented with the EHMM
$\ehmm{kernel}[\kappa_\alpha]$ from the previous section, but
as it turns out it is possible to represent the same kernel using a
different EHMM $\ehmm{pd}[\alpha]$, defined in
\figref{fig:geometric.kernel.update}, that uses silent states to
reduce the number of edges, allowing the convolution to be carried out
in time proportional to the number of experts considered.
Linear time convolution is possible because the kernel is a memoryless distribution conditional on the sign of the drift. This sign information is represented by the distinction between two columns of silent states for each time step.
\begin{figure}
\caption{Parameter drift: $\ehmm{pd}[\alpha]$}\label{fig:geometric.kernel.update}
% misschien ooit eens zorgen dat ook de negatieve experts geen eieren
% worden
\centering
\subfloat{%
$\xymatrix@!0@R=2em@C=2.1em{
&&&\ar@{.>}[d]&&&\ar@{.>}[d]\\
&&\si \ar@{.>}[u]&\si \ar[dd] \ar[dr]&&\si \ar@{.>}[u]&\si \ar[dd] \ar[dr]\\
&\ex{2} \ar[ur] \ar[rrd] \ar[rrr]&&& \ex{2} \ar[ur] \ar[rrd] \ar[rrr] &&& \ex{2}\\
&&\si \ar[uu] \ar[rru]&\si \ar[dd] \ar[dr]&&\si \ar[uu] \ar[rru]&\si \ar[dd] \ar[dr]\\
&\ex{1} \ar[ur] \ar[rrd] \ar[rrr] &&& \ex{1} \ar[ur] \ar[rrd] \ar[rrr] &&& \ex{1}\\
&&\si \ar[uu] \ar[rru]&\si \ar[dd] \ar[dr]&&\si \ar[uu] \ar[rru]&\si \ar[dd] \ar[dr]\\
\bn\ar@{.>}[uuuuuur]\ar[uuuur]\ar[uur]\ar[r]\ar[ddr]\ar[ddddr]\ar@{.>}[ddddddr]&\ex{0}
\ar[ur] \ar[rrd] \ar[rrr] &&& \ex{0} \ar[ur] \ar[rrd] \ar[rrr] &&& \ex{0} & \cdots\\
&&\si \ar[uu] \ar[rru]&\si \ar[dd] \ar[dr]&&\si \ar[uu] \ar[rru]&\si \ar[dd] \ar[dr]\\
&\ex{-1} \name[+<3pt,13pt>]{1,-1} \ar[ur] \ar[rrd] \ar[rrr]  &&& \ex{-1} \name[+<1pt,13pt>]{2,-1}\ex{-1} \ar[ur] \ar[rrd] \ar[rrr]  &&& \ex{-1}\\
&&\si \name[+<-14pt,0pt>]{1,-2,-1}\ar[uu] \ar[rru]&\si \name[+<14pt,0pt>]{1,-1,-2} \ar[dd] \ar[dr]&&\si \ar[uu] \ar[rru]&\si \ar[dd] \ar[dr]\\
&\ex{-2} \ar[ur] \ar[rrd] \ar[rrr]  &&& \ex{-2} \ex{-2} \ar[ur] \ar[rrd] \ar[rrr]  &&& \ex{-2}\\
&&\si \ar[uu] \ar[rru]&\si \ar@{.>}[d] &&\si \ar[uu] \ar[rru]&\si \ar@{.>}[d]\\
&&\ar@{.>}[u]&&&\ar@{.>}[u]&\\
}$}
\quad
\subfloat{%
\setlength{\abovedisplayskip}{0cm}
\setlength{\belowdisplayskip}{0cm}
\small
\begin{minipage}[t]{.5\textwidth}
\begin{gather*}
\ehmm{pd}[\alpha]=\tuple{\wstates, \wstates^\prd, \wstart, \wtf,
  \wnl}\qquad
\wstates = \wstates^\sil \cup \wstates^\prd\\
\wstates^\prd = \posints\times\ints\quad
\wstart = 0\quad
\wnl(t, \xi) = \xi 
\\
\wstates^\sil=\posints\times\{\tuple{i,i+1},\tuple{i,i-1}\mid i\in\ints\} \; \cup \; \set{0}
\\
\wtf\del*{
\begin{aligned}
\tuple{0} &\to \tuple{1,\xi}\\
\tuple{t,\xi\!-\!1,\xi} &\to \tuple{t,\xi,\xi\!+\!1}\\
\tuple{t,\xi\!+\!1,\xi} &\to \tuple{t,\xi,\xi\!-\!1}\\
\tuple{t,\xi\!-\!1,\xi} &\to \tuple{t\!+\!1,\xi}\\
\tuple{t,\xi\!+\!1,\xi} &\to \tuple{t\!+\!1,\xi}\\
\tuple{t,\xi}&\to\tuple{t\!+\!1,\xi}\\
\tuple{t,\xi}&\to\tuple{t,\xi,\xi\!+\!1}\\
\tuple{t,\xi}&\to\tuple{t,\xi,\xi\!-\!1}
\end{aligned}} = \del*{
\begin{gathered}
\kappa_\alpha(\xi)\\
\alpha\\
\alpha\\
1-\alpha\\
1-\alpha\\
(1\!-\!\alpha)/(1\!+\!\alpha)\\
\alpha/(1+\alpha)\\
\alpha/(1+\alpha)
\end{gathered}}
\end{gather*}
\end{minipage}
}
\end{figure}

% \begin{theorem}[Oude versie]
%   Fix any data $x^t$ and reference sequence $\xi^t$ with total drift $d$. Let $\alpha^* = \argmax_\alpha \ehmm{pd}[\alpha](\xi^t)$. Then
% \[
% \ln\frac{\ehmm{P}[\xi^t](x^t)}{\ehmm{pd}[\alpha](x^t)}
% ~\le~
% %- \ln \ehmm{pd}[\alpha](\xi^t)
% %~=~
% t \ent(\kappa_{\alpha^*}, \kappa_{\alpha})
% ~=~
% -t\ln\frac{1-\alpha}{1+\alpha}-d\ln\alpha,
% \]
% where $\ent(\kappa_\alpha^*,\kappa_\alpha)= -\sum_\delta  \kappa_{\alpha^*}(\delta) \ln \kappa_\alpha(\delta)$ is the cross entropy.
% \end{theorem}

\begin{theorem}[Parameter Drift Regret] \label{thm:oe}
  Fix any data $x^t$ and reference sequence $\xi^t$ with total drift
  $d$. Let $\ent(P,Q)=-\sum_x P(x)\ln Q(x)$ denote the cross entropy. Then
\[
\ln\frac{\ehmm{P}[\xi^t](x^t)}{\ehmm{pd}[\alpha](x^t)}
~\le~
t \ent(\kappa_{\alpha^*}, \kappa_{\alpha})
~=~
-t\ln\frac{1-\alpha}{1+\alpha}-d\ln\alpha,
\]
where $\alpha^* = \argmax_\alpha \ehmm{pd}[\alpha](\xi^t)= \sqrt{1 + (t/d)^2} - (t/d)$.
\end{theorem}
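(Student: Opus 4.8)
The plan is to treat the bound as a regret against the single reference expert sequence $\xi^t$, for which \lemref{lem:dropmarg} is the appropriate tool, and then to recognise the resulting closed form as $t$ times a cross entropy by invoking the exponential-family structure of $\kappa_\alpha$.

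First I would observe that $\ehmm{pd}[\alpha]$ is merely an efficient re-implementation of $\ehmm{kernel}[\kappa_\alpha]$ from \secref{sec:oe}: its extra silent states only reroute probability mass, so the marginal it assigns to an expert sequence is the product of kernel values along the increments, $\ehmm{pd}[\alpha](\xi^t) = \prod_{i=1}^t \kappa_\alpha(\delta_i)$, exactly as established in the proof of \corref{cor:kernel.bound}. Applying \lemref{lem:dropmarg} (that is, \eqref{eq:bound1}) with $\Q = \ehmm{pd}[\alpha]$ then gives
\[
\ln\frac{\ehmm{P}[\xi^t](x^t)}{\ehmm{pd}[\alpha](x^t)}
~\le~
-\ln\ehmm{pd}[\alpha](\xi^t)
~=~
-\sum_{i=1}^t\ln\kappa_\alpha(\delta_i).
\]
Expanding each term using the explicit form \eqref{eq:kappa.alpha}, namely $-\ln\kappa_\alpha(\delta_i) = -\abs{\delta_i}\ln\alpha - \ln\frac{1-\alpha}{1+\alpha}$, and summing with $\sum_i\abs{\delta_i}=d$, yields $-t\ln\frac{1-\alpha}{1+\alpha} - d\ln\alpha$, which is precisely the right-hand side of the theorem. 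This alone settles the inequality; nothing further is needed for the bound itself.

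To justify the middle expression $t\ent(\kappa_{\alpha^*},\kappa_\alpha)$, I would note that $\kappa_\alpha$ is a one-parameter exponential family in $\beta=\ln\alpha$ with sufficient statistic $\abs{\delta}$ and normaliser $Z(\beta)=(1+\alpha)/(1-\alpha)$, so that $\ln\ehmm{pd}[\alpha](\xi^t) = \beta d - t\ln Z(\beta)$. Maximising over $\beta$ gives the standard moment-matching identity $\Exp_{\delta\sim\kappa_{\alpha^*}}[\abs{\delta}] = d/t$ at the maximiser $\alpha^*$. Since $\ent(\kappa_{\alpha^*},\kappa_\alpha) = -\Exp_{\kappa_{\alpha^*}}[\abs{\delta}]\ln\alpha - \ln\frac{1-\alpha}{1+\alpha}$, substituting $\Exp_{\kappa_{\alpha^*}}[\abs{\delta}]=d/t$ shows that $t\ent(\kappa_{\alpha^*},\kappa_\alpha)$ coincides with the closed form obtained above.

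Finally, I would solve the moment equation to exhibit $\alpha^*$ explicitly: a geometric-series evaluation gives $\Exp_{\kappa_\alpha}[\abs{\delta}] = 2\alpha/(1-\alpha^2)$, and setting this equal to $d/t$ produces the quadratic $(d/t)\,(\alpha^*)^2 + 2\alpha^* - (d/t)=0$, whose unique root in $(0,1)$ is $\alpha^*=\sqrt{1+(t/d)^2}-(t/d)$. I do not expect a genuine obstacle here: the only steps requiring care are confirming that $\ehmm{pd}[\alpha]$ reproduces the kernel marginal (a structural fact already used for $\ehmm{kernel}[\kappa_\alpha]$) and the exponential-family moment matching, both standard. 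The cross-entropy reformulation is essentially cosmetic, serving only to display the optimal parameter $\alpha^*$ in closed form.
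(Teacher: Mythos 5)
Your proposal is correct and follows essentially the same route as the paper's proof: bound the regret by $-\ln\ehmm{pd}[\alpha](\xi^t)$ via Lemma~\ref{lem:dropmarg}, use $\ehmm{pd}[\alpha](\xi^t)=\prod_{i=1}^t\kappa_\alpha(\delta_i)=\alpha^d\bigl(\tfrac{1-\alpha}{1+\alpha}\bigr)^t$ to obtain the closed form, and identify the result with $t\ent(\kappa_{\alpha^*},\kappa_\alpha)$ through the exponential-family structure of $\kappa_\alpha$. The only difference is that you spell out the moment-matching identity $\Exp_{\kappa_{\alpha^*}}[\abs{\rv\delta}]=d/t=2\alpha^*/(1-(\alpha^*)^2)$ explicitly, where the paper defers this to a citation and to ``equating the derivative to zero.''
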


\begin{IEEEproof} By \lemref{lem:dropmarg}, the left-hand side is bounded above by $- \ln \ehmm{pd}[\alpha](\xi^t)$.
Since $\set{\kappa_\alpha}$ is an exponential family with unit carrier (see e.g.\ \cite[Proposition 19.1]{grunwald2007}),
\[\begin{split}
- \ln \ehmm{pd}[\alpha](\xi^t)
&=
- \ln \prod_{i=1}^t \kappa_\alpha(\delta_i)\\
&=
t \Exp_{\kappa_{\alpha^*}} \sbr*{- \ln \kappa_\alpha(\rv\delta)}
=
t \ent(\kappa_{\alpha^*}, \kappa_\alpha).
\end{split}\]
The equality of the Theorem further follows from
\[
\ehmm{pd}[\alpha](\xi^t)
~=~\prod_{i=1}^t  \kappa_\alpha(\delta_i)
~=~\alpha^d\del*{\frac{1-\alpha}{1+\alpha}}^t.
\]
The parameter $\alpha^*$ that maximises the likelihood of $\xi^t$ is
found by equating the derivative to zero.
\end{IEEEproof}

We can be somewhat more precise about how much it can hurt performance
to use a suboptimal parameter $\alpha$. The following theorem, which
bounds the regret with respect to the optimal parameter-drift model,
is an analogue of Theorem~\ref{thm:fs.parambound} for Fixed Share. The
theorem applies to a wide class of kernel EHMMs, but in particular it
holds for the parameter-drift model $\ehmm{pd}[\alpha]$, for which the transition dynamics are governed by the one-dimensional exponential family \eqref{eq:kappa.alpha}. It is a strong result that uses the full generality of \lemref{lem:parambound}.

\begin{theorem}[Kernel ML Regret]\label{thm:kernel_ml_regret}
Fix $x^t$ and let $\hat\beta=\argmax_\beta
\ehmm{kernel}[\kappa_\beta](x^t)$ for some exponential family $\set{\kappa_\beta}$. We have
\[
\begin{split}
\ln\frac{\ehmm{kernel}[\kappa_{\hat\beta}](x^t)}{\ehmm{kernel}[\kappa_\beta](x^t)}
~\le~
t \kld\delcc[\big]{\kappa_{\hat\beta}}{\kappa_\beta}.
%&= t
%\Exp_{\kappa_{\hat\alpha}}\abs{\rv{\delta}}\ln\frac{\hat\alpha}{\alpha}+t\ln\frac{(1-\hat\alpha)(1+\alpha)}{(1+\hat\alpha)(1-\alpha)}\\
%&=t\del*{\frac{2}{\frac{1}{\hat\alpha}-\hat\alpha}\ln\frac{\hat\alpha}{\alpha}+\ln\frac{(1-\hat\alpha)(1+\alpha)}{(1+\hat\alpha)(1-\alpha)}}.
\end{split}
\]
\end{theorem}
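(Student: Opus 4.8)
The plan is to obtain this bound as a direct instantiation of \lemref{lem:parambound}, using the kernel itself as the exponential family that governs the parameterised transitions. First I would recast $\ehmm{kernel}[\kappa_\beta]$ into the form required by the lemma. Take the parameterised subset $\Qpara$ to consist of the silent start state $\tuple{0,0}$ together with \emph{every} productive state $\tuple{\xi,t}$, since each outgoing transition in the model is a kernel convolution. The role of $T_\beta$ is played by $\kappa_\beta$ itself, viewed as an exponential family of distributions over the jump $j=\delta\in\ints$, with the sufficient statistic $\phi$, carrier $h$ and normaliser $Z(\beta)$ supplied by the hypothesis that $\set{\kappa_\beta}$ is an exponential family. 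The indexing map is $S(\tuple{\xi,t},\delta)=\tuple{\xi+\delta,t+1}$, together with $S(\tuple{0,0},\delta)=\tuple{\delta,1}$ at the start state; this map is injective because successors of distinct-time states live at distinct times and, at a fixed state, distinct jumps yield distinct successors. With this identification, $\wtf(\wstate\to S(\wstate,\delta))=\kappa_\beta(\delta)$ holds for every $\wstate\in\Qpara$, exactly as the lemma demands.

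Second, I would pin down the transition count. As the kernel model has no intermediate silent states, the path producing the first $t$ outcomes is $\wstate^{(t)}=\tuple{0,0},\tuple{\xi_1,1},\ldots,\tuple{\xi_t,t}$, which makes exactly $t$ parameterised transitions regardless of which experts are visited; note that the initial transition $\tuple{0,0}\to\tuple{\xi_1,1}$ is itself a kernel transition (carrying probability $\kappa_\beta(\xi_1)$, as in the proof of \corref{cor:kernel.bound}), which is precisely why the count is $t$ rather than $t-1$. Hence $\cnt(\rv\wstate^{(t)})=t$ deterministically, so $\Exp_\W[\cnt(\rv\wstate^{(t)})]=t$ for the posterior $\W$ in the lemma. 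Invoking \lemref{lem:parambound} with $\hat\beta=\argmax_\beta\ehmm{kernel}[\kappa_\beta](x^t)$ then gives
\[
\ln\frac{\ehmm{kernel}[\kappa_{\hat\beta}](x^t)}{\ehmm{kernel}[\kappa_\beta](x^t)}
~\le~
\Exp_\W[\cnt(\rv\wstate^{(t)})]\,\kld\delcc[\big]{\kappa_{\hat\beta}}{\kappa_\beta}
~=~
t\,\kld\delcc[\big]{\kappa_{\hat\beta}}{\kappa_\beta},
\]
which is the claimed bound.

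The step requiring the most care is justifying that \lemref{lem:parambound} genuinely applies, and this is exactly where its full generality is needed rather than its multinomial special case. For kernels such as the parameter-drift family $\kappa_\alpha(\delta)=\alpha^{\abs{\delta}}(1-\alpha)/(1+\alpha)$ the sufficient statistic ($\abs{\delta}$) is nontrivial and the natural parameter is one-dimensional, so the reference family is strictly smaller than the set of all distributions on successor states; the multinomial version would replace $\kld\delcc{\kappa_{\hat\beta}}{\kappa_\beta}$ by a divergence over an unconstrained successor distribution and lose the tightness. The genuine technical gap to close is that \lemref{lem:parambound} is stated for a \emph{finite} symbol set $\mathcal J$, whereas here $\mathcal J=\ints$. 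I would argue that the lemma's derivation — expanding the log-ratio as a sum of per-transition divergences and then matching moments by differentiating the likelihood at $\hat\beta$ — carries over verbatim to a countable $\mathcal J$ provided $Z(\beta)$ converges and the moment $\Exp_{\kappa_{\hat\beta}}[\phi]$ and the divergence $\kld\delcc{\kappa_{\hat\beta}}{\kappa_\beta}$ are finite, which holds whenever $\beta$ and $\hat\beta$ lie in the interior of the natural parameter space (e.g.\ $0<\alpha<1$ for $\kappa_\alpha$).
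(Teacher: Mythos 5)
Your proof is correct and follows essentially the same route as the paper's, whose entire argument is one sentence: apply Lemma~\ref{lem:parambound} with $\Qpara$ equal to the set of all productive states. You are in fact slightly more careful than the paper on two points worth keeping: including the silent start state $\tuple{0,0}$ in $\Qpara$ is what makes the transition count come out to exactly $t$ rather than $t-1$ and, more importantly, is needed for the lemma's expansion of $\ln\bigl(\ehmm{q}[\hat\beta](\wstate^{(t)})/\ehmm{q}[\beta](\wstate^{(t)})\bigr)$ into per-transition terms to account for \emph{all} $\beta$-dependent transitions (the initial one carries $\kappa_\beta(\xi_1)$); and your remark that Lemma~\ref{lem:parambound} is stated for finite $\mathcal J$ while the kernel model has $\mathcal J=\ints$ flags a genuine gap that the paper silently passes over and that your integrability conditions correctly close.
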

\begin{IEEEproof}
%  To prove this we consider $\ehmm{kernel}[\kappa_\beta]$ rather
 % than $\ehmm{kernel}[\kappa_\beta]$. 
  Since the transition probabilities
  associated with each productive state (i.e.\ the kernel
  $\kappa_\beta$) are an exponential family distribution, we can apply
  \lemref{lem:parambound} with $\Qpara$ equal to the set of all
  productive states.
% to obtain the first inequality.  Rewriting the
%  divergence yields the second expression; to complete the proof we
%  subsequently apply \eqref{eq:driftrate}.
\end{IEEEproof}
Instantiating this theorem for parameter drift we obtain:
\begin{corollary}[Parameter Drift ML Regret]\label{crlr:pd.ml}
Fix $x^t$ and let $\hat\alpha=\argmax_\alpha \ehmm{PD}[\alpha](x^t)$. We have
\[
\ln\frac{\ehmm{PD}[\hat\alpha](x^t)}{\ehmm{PD}[\alpha](x^t)}
~\le~
t
\del*{
\frac{
  2 \hat\alpha \ln \left(\frac{\hat\alpha}{\alpha}\right)
}{
  (1\!-\!\hat\alpha)(1\!+\!\hat\alpha)
}
+
\ln \left(\frac{(1+\alpha) (1-\hat\alpha)}{(1-\alpha) (1+\hat\alpha)}\right)
}
.
\]
\end{corollary}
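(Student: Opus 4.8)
The plan is to recognise the corollary as a direct specialisation of Theorem~\ref{thm:kernel_ml_regret}. The parameter-drift model $\ehmm{pd}[\alpha]$ represents exactly the kernel $\kappa_\alpha$ of~\eqref{eq:kappa.alpha}, and (as already used in the proof of Theorem~\ref{thm:oe}) the family $\set{\kappa_\alpha}$ is a one-dimensional exponential family with unit carrier. Since $\ehmm{pd}[\alpha]$ and $\ehmm{kernel}[\kappa_\alpha]$ define the same distribution on the data, their maximum-likelihood parameters coincide, so the $\hat\alpha$ of the corollary is precisely the $\hat\beta$ of the theorem; Theorem~\ref{thm:kernel_ml_regret} then applies verbatim and bounds the left-hand side by $t\,\kld\delcc[\big]{\kappa_{\hat\alpha}}{\kappa_\alpha}$. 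All that remains is to evaluate this Kullback--Leibler divergence in closed form and match it to the bracketed expression in the statement.

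First I would write out the pointwise log-likelihood ratio between the two kernels. From~\eqref{eq:kappa.alpha}, for each drift value $\delta\in\ints$,
\[
\ln\frac{\kappa_{\hat\alpha}(\delta)}{\kappa_\alpha(\delta)}
=
\abs{\delta}\ln\frac{\hat\alpha}{\alpha}
+\ln\frac{(1-\hat\alpha)(1+\alpha)}{(1+\hat\alpha)(1-\alpha)}.
\]
Taking the expectation under $\kappa_{\hat\alpha}$ and noting that the second term is constant in $\delta$ gives
\[
\kld\delcc[\big]{\kappa_{\hat\alpha}}{\kappa_\alpha}
=
\Exp_{\kappa_{\hat\alpha}}[\abs{\rv\delta}]\,\ln\frac{\hat\alpha}{\alpha}
+\ln\frac{(1-\hat\alpha)(1+\alpha)}{(1+\hat\alpha)(1-\alpha)},
\]
so the entire computation collapses to finding the single scalar $\Exp_{\kappa_{\hat\alpha}}[\abs{\rv\delta}]$.

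The one genuine computation is therefore this expected absolute drift. Using the symmetry of $\kappa_{\hat\alpha}$ about $0$ together with the standard summation $\sum_{k\ge 1}k\,x^k = x/(1-x)^2$, I would obtain
\[
\Exp_{\kappa_{\hat\alpha}}[\abs{\rv\delta}]
=\frac{1-\hat\alpha}{1+\hat\alpha}\sum_{\delta\in\ints}\abs{\delta}\,\hat\alpha^{\abs{\delta}}
=\frac{1-\hat\alpha}{1+\hat\alpha}\cdot\frac{2\hat\alpha}{(1-\hat\alpha)^2}
=\frac{2\hat\alpha}{(1-\hat\alpha)(1+\hat\alpha)}.
\]
Substituting this back into the divergence reproduces exactly the bracketed expression, and multiplying through by $t$ completes the proof.

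I do not expect a real obstacle: the result is a mechanical instantiation of Theorem~\ref{thm:kernel_ml_regret}, and the bound is in fact an \emph{equality} for the divergence, not an inequality. The only point requiring a moment of care is the geometric-series evaluation of $\Exp_{\kappa_{\hat\alpha}}[\abs{\rv\delta}]$ (together with its finiteness, which holds because $0<\hat\alpha<1$); everything else is bookkeeping of the normalising factors $(1-\alpha)/(1+\alpha)$.
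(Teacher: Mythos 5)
Your proposal is correct and follows exactly the route the paper intends: the corollary is stated immediately after the sentence ``Instantiating this theorem for parameter drift we obtain,'' so the paper's (implicit) proof is precisely your application of Theorem~\ref{thm:kernel_ml_regret} followed by a closed-form evaluation of $\kld\delcc[\big]{\kappa_{\hat\alpha}}{\kappa_\alpha}$, and your computation of $\Exp_{\kappa_{\hat\alpha}}[\abs{\rv\delta}] = 2\hat\alpha/\del[\big]{(1-\hat\alpha)(1+\hat\alpha)}$ and of the normalising ratio is exactly right. Your added remark that $\ehmm{pd}[\alpha]$ and $\ehmm{kernel}[\kappa_\alpha]$ induce the same data distribution (so the maximum-likelihood parameters coincide) is a detail the paper leaves implicit but is worth making explicit.
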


%\subsubsection{Getting rid of $\alpha$}
The parameter drift model as discussed so far shares both the elegance
of the Fixed Share algorithm and its awkward dependence on a 
parameter $\alpha$. However, most of the techniques to avoid specifying
$\alpha$ that were discussed in previous sections can be adapted to
the parameter drift model. In particular, we can compete with the maximum likelihood drift parameter $\hat \alpha$ using a discretisation scheme akin to \cite{Jaakkola2003,DeRooijVanErven09}, such that the discretisation point $\alpha$ closest to $\hat \alpha$ reduces the  right hand side of Corollary~\ref{crlr:pd.ml} to a uniformly bounded constant. As before, this is possible using $O(\sqrt{t})$ discretisation points, leading to a $O(t \sqrt{t})$ total running time. We omit the details.

Moreover we can adapt the trick we used in
\secref{sec:dsr}, and let the kernel parameter $\alpha$
decrease with time. This yields a linear run time, at the cost of deteriorating the bound for large drifts.
\begin{theorem}[Decreasing Drift Regret]
  Let $\ehmm{pd}[]$ denote the ES-joint based on the
  parameter drift model with time-dependent kernel $\kappa_{\alpha_i}$
  with $\alpha_i=1/(i+1)$. For any data $x^t$ and reference sequence
  $\xi^t$ with total drift $d$, we have
  \[
  \ln\frac{\ehmm{P}[\xi^t](x^t)}{\ehmm{pd}[](x^t)}~\le~
  (d+2)\ln(t+1).
  \]
\end{theorem}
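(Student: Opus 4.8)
The plan is to reduce everything to the simplest regret tool, \lemref{lem:dropmarg}, applied on the level of expert sequences. The parameter-drift model with a fixed kernel factorises as $\ehmm{pd}[\alpha](\xi^t)=\prod_{i=1}^t\kappa_\alpha(\delta_i)$, exactly as exploited in \thmref{thm:oe}; replacing the fixed $\alpha$ by the time-varying $\alpha_i=1/(i+1)$ simply gives $\ehmm{pd}[](\xi^t)=\prod_{i=1}^t\kappa_{\alpha_i}(\delta_i)$. By \lemref{lem:dropmarg} the left-hand side of the theorem is at most $-\ln\ehmm{pd}[](\xi^t)$, so the entire statement reduces to bounding this elementary product. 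No expectation, convolution, or exponential-family machinery (\lemref{lem:parambound}) is needed here.

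First I would expand the negative logarithm of the product using the explicit kernel $\kappa_\alpha(\delta)=\alpha^{\abs{\delta}}(1-\alpha)/(1+\alpha)$, splitting $-\ln\ehmm{pd}[](\xi^t)$ into a \emph{drift} contribution $\sum_{i=1}^t\abs{\delta_i}\,(-\ln\alpha_i)$ and a \emph{normalisation} contribution $\sum_{i=1}^t-\ln\frac{1-\alpha_i}{1+\alpha_i}$. Substituting $\alpha_i=1/(i+1)$ turns $-\ln\alpha_i$ into $\ln(i+1)$, and, after simplifying $\frac{1-\alpha_i}{1+\alpha_i}=\frac{i}{i+2}$, turns each normalisation term into $\ln(i+2)-\ln i$.

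For the drift term I would use the monotonicity $\ln(i+1)\le\ln(t+1)$ for $1\le i\le t$ to pull the logarithm out, obtaining $\sum_i\abs{\delta_i}\ln(i+1)\le\ln(t+1)\sum_i\abs{\delta_i}=d\ln(t+1)$, which is where the total drift $d$ enters. The normalisation term $\sum_{i=1}^t(\ln(i+2)-\ln i)$ telescopes to $\ln(t+1)+\ln(t+2)-\ln 2$; bounding $\ln(t+2)-\ln 2=\ln\frac{t+2}{2}\le\ln(t+1)$ (valid since $t+2\le 2(t+1)$) shows this is at most $2\ln(t+1)$. Adding the two contributions yields exactly $(d+2)\ln(t+1)$.

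The calculation is entirely routine; the only place requiring any care — and the one I would flag as the main obstacle — is the bookkeeping of the normalisation sum. One must recognise that the constants $(1-\alpha_i)/(1+\alpha_i)$ collapse into a telescoping sum whose leftover boundary terms contribute the additive $2\ln(t+1)$, and verify that the crude estimate $\ln(t+2)-\ln 2\le\ln(t+1)$ suffices to absorb them into the stated form rather than leaving an ungainly residual $\ln(t+2)$ term.
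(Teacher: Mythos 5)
Your proposal is correct and follows essentially the same route as the paper's proof: both apply Lemma~\ref{lem:dropmarg} to reduce the claim to bounding $-\ln\ehmm{pd}[](\xi^t)=\sum_i\abs{\delta_i}\ln(i+1)-\sum_i\ln\frac{i}{i+2}$, exploit the telescoping product $\prod_i\frac{i}{i+2}=\frac{2}{(t+1)(t+2)}\ge(t+1)^{-2}$, and bound the drift contribution by $d\ln(t+1)$ (the paper does this by noting the worst case concentrates all drift at $i=t$, which is equivalent to your termwise bound $\ln(i+1)\le\ln(t+1)$). The only difference is that you work additively in log space while the paper works with the product directly; the content is identical.
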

\begin{IEEEproof} We first expand
  \begin{multline*}
    \ehmm{pd}[](\xi^t)=\prod_{i=1}^t
    \kappa_{\alpha_i}(\delta_i)=\prod_{i=1}^t
    \alpha_i^{\abs{\delta_i}}\frac{1-\alpha_i}{1+\alpha_i}\\
    =\prod_{i=1}^t (i+1)^{-\abs{\delta_i}}\frac{i}{i+2}=\frac{2}{(t+1)(t+2)}\prod_{i=1}^t(i+1)^{-\abs{\delta_i}}.
  \end{multline*}
  For fixed total drift $d$, it is clear that this probability
  is minimised by $\abs{\delta_i}=0$ for $1\le i<t$ and $\abs{\delta_t}=d$.
  Therefore
  \[
    \ehmm{pd}[](\xi^t)\ge\frac{2}{(t+1)(t+2)}(t+1)^{-d}\ge
    (t+1)^{-d-2}.
  \]
  We now take the $-\ln$ and apply \lemref{lem:dropmarg} to complete the proof.
\end{IEEEproof}

\section{Extensions}\label{sec:loose.ends}
In this section we describe a number of extensions to the framework
described above. In \secref{sec:conditioning.on.data} we discuss a
number of tracking algorithms for which different, potentially useful
performance guarantees can be given. Then in \secref{sec:adaptive} we discuss \emph{adaptive regret}, a criterion for evaluating performance more locally. In \secref{sec:map} we try to
find out which expert was best at a particular time step, and finally
in Sections~\ref{sec:loss.functions} and~\ref{sec:investment} and we indicate how our approach can be
generalised to work with any mixable loss function and how it can be applied to online investment.

\subsection{Other Approaches to
  Switching}\label{sec:conditioning.on.data}
The models described in the present paper can be described in the
Bayesian framework using prior distributions on sequences of
experts. The priors we presented so far did not depend on any contextual information, such
as the outcomes, or any other external information. Below, we will
list three other important models for expert tracking whose EHMM
transition probabilities depend on the past losses of the experts. We
are not aware of any models in which the transition probabilities
depend on properties of the observed data other than the losses; this
is an interesting area for future research.

In all three cases it is straightforward to find counterexamples that show that these algorithms cannot be represented as EHMMs with fixed transition probabilities.

The first such algorithm, Variable Share, was introduced together with
Fixed Share in \cite{HerbsterWarmuth1998}. It is useful for loss
functions like square loss, that are not only mixable, but also
bounded (also see Section~\ref{sec:loss.functions}). For this setting,
which is outside the scope of this paper, a regret of $O(m\ln
k+m\ln(L/m))$ can be established, where $m$ and $k$ are the number of
blocks and the number of experts as usual, but $L$ is the loss of the
best reference strategy using $m$ blocks. Thus, if the data can be
predicted well by partitioning it into blocks and using a fixed expert
within each block, then the overhead of the algorithm is small. In the
log loss setting however, the algorithm incurs infinite regret in the
worst case, so no useful guarantees can be provided.

The other two approaches do in fact work in the log loss setting. The
first is known as ``Mixing Past Posteriors'' \cite{bousquet2002}. Like
Fixed Share, this algorithm allows for efficient tracking of the best
predicting expert; but unlike Fixed Share, it is especially efficient
for sparse problems, where the predictions of only a few out of the
full set of experts ever need to be used. For Mixing Past Posteriors,
a regret of $O(u\ln k+m\ln t)$ can be achieved, where $u$ is the
number of experts that occur in the comparator sequence, and $t$, $k$
and $m$ are as usual. This is beneficial if $k$ is large and $u$ is
small. Whereas the original algorithm is rather eclectic, a proper
Bayesian interpretation (making use of specialists) and a slight improvement of the bound can be
found in \cite{bayes.sleep}.

The last result combines two experts in such a way that the regret is
controlled in terms of the fluctuations in the cumulative loss
difference of the two experts as a function of time. The idea is that
if the fluctuations are large, the regret is relatively high, but in
that case you also gain a lot from switching between the experts in
the first place. The paper \cite{KoolenDeRooij2013} is phrased in
terms of investment policies, but the setting is equivalent to ours. In
financial terms, the bound expresses that you have a large overhead
only when you are making a lot of money anyway!

\subsection{Adaptive Regret}\label{sec:adaptive}
Also of interest is the notion of \emph{adaptive regret} proposed by Hazan and Seshadhri in \cite{hazan2009efficient}. The adaptive regret of an algorithm on a given interval is the difference between the loss of the algorithm on that interval and that of the best expert for that interval. The new goal is then to design algorithms with low worst-case adaptive regret on all intervals. An algorithm with low adaptive regret will automatically have low tracking regret, the tracking bound is obtained simply by summing the adaptive bound over all blocks of a segmentation of the data; the converse is not always possible. It is proved in \cite{adaptive.regret} that Fixed Share, and its generalisations with  time-varying switching rates (as e.g.\ in Section~\ref{sec:dsr}) are optimal algorithms for adaptive regret: no other algorithm can guarantee lower adaptive regret on all intervals. Moreover in that paper's forthcoming journal version \cite{adaptive.regret.journal} it is shown that the worst-case adaptive regret of any algorithm is dominated by that of such a generalised Fixed Share. As such, whereas the local perspective taken by adaptive regret allows giving stronger performance guarantees for Fixed Share, it cannot capture the global benefit of modelling the switching dynamics, as expressed e.g.\ by the bounds for run-length and the Switching Method.

\subsection{Expert Estimation}\label{sec:map}
We focused on EHMM models for sequential prediction. However, EHMMs may also be used for batch data analysis. Below we indicate how to obtain the best regularised expert sequence, for example to gauge change-points in the data. We then discuss calculating the posterior marginal distribution on experts at each time step. This can be used to visualise the evolution of the prediction performance of each of the experts.
The forward algorithm computes the probability of the data, that is
\[
\Q(x^t) 
~=~
\sum_{\wstate^{(t)}} \Q(x^t,\wstate^{(t)}),
\]
Instead of the entire sum, we are sometimes interested in the sequence
of states $\wstate^{(t)}$ that contributes most to it:
\[
\argmax_{\wstate^{(t)}} \Q(x^t, \wstate^{(t)}) ~=~ \argmax_{\wstate^{(t)}} \Q(x^t|\wstate^{(t)}) \Q(\wstate^{(t)}).
\]
The Viterbi algorithm~\cite{rabiner1989} is used to compute the most
likely sequence of states for HMMs. It can be easily adapted to handle
silent states. However, we may also write
\[
\Q(x^t)
~=~
\sum_{\xi^t} \Q(x^t,\xi^t),
\]
and wonder about the sequence of \emph{experts} $\xi^t$ that
contributes most. This problem is harder because several states can
produce the same expert simultaneously; in other words, a single
sequence of experts can be generated by many different sequences of
states. So we cannot use the Viterbi algorithm as it is. The Viterbi
algorithm can be extended to compute the MAP expert sequence for
general EHMMs, but the resulting running time explodes.
Still, the MAP $\xi^t$ can be sometimes be obtained efficiently by
exploiting the structure of the EHMM at hand. This turns out to be
possible for Fixed Share, and also for the more sophisticated Switch
Distribution mentioned in \secref{sec:switch.fast}; the algorithm for
the latter is given in \cite{us:ceae:corr}.

As an alternative way to gain insight, one may run the forward and
backward algorithms to compute $\Q(x^i, \wstate^\prd_i)$ and $\Q(x^t |
\wstate^\prd_i, x^i)$. Recall that $\wstate^\prd_i$ is the productive
state that is used at time $i$.  From these we can compute the
a posteriori probability $\Q(\wstate^\prd_i | x^t)$ of each productive
state $\wstate^\prd_i$.  That is, the posterior probability taking all
the available data into account (including observations that were made
later than time $i$). This is a standard way to analyse data in the
HMM literature, see e.g. \cite{rabiner1989}. We can then project the
posterior on \emph{states} down to obtain the posterior probability
$\Q(\xi_i | x^t)$ of each \emph{expert} $\xi_i \in \Xi$ at each time
$i=1, \ldots, t$.  This gives us a sequence of mixture weights over
the experts that we can, for example, plot on a $\Xi \times t$ grid.
On the one hand this gives us a mixture over experts for each time
instance, obviously a richer representation than just single experts.
On the other hand we lose the temporal correlations that can be
important in MAP calculation, as each time instance is treated
separately.

\subsection{Mixable Loss Functions}\label{sec:loss.functions}
We presented log-loss regret bounds for experts that sequentially produce probability distributions on the next outcome. Not all prediction tasks are in this form, for example, we may be asked to make a point prediction based on real-valued expert advice and be scored using quadratic loss. Fortunately, several loss functions are \markdef{mixable}~\cite{cesa-bianchi2006,haussler1998}, in that for each mixture of predictions, there is a single prediction whose loss is always less than the exponentiated average loss. Mixable losses include log loss, quadratic loss, Hellinger loss and entropic loss. 0/1 loss and absolute loss are not mixable.

Prediction strategies that are obtained by running the forward
algorithm on any EHMM can be adapted to mixable losses
straightforwardly, by preprocessing the input to and post-processing
the output of the forward algorithm for sequential prediction. On the
input side, expert predictions are transformed into probabilities. On
the output side, the posterior distribution on the next expert is
transformed (using the mixability condition) into a single
prediction. The resulting prediction strategy has the \emph{same}
mixable-loss regret bound as the original prediction strategy
(although possibly expressed in different units).
The details of the general reduction can be found
in~\cite{koolen09:_fixed_share_for_learn_exper}. In the special case of online investment an even tighter correspondence holds, as outlined in the next section.

%During sequential prediction, our algorithm produces the predictive mixture $P(\rv \xi_{t+1}|x^t)$ on the next expert given past outcomes $x^t$. A crucial observation is that this mixture only depends on the expert's predictions \emph{of the outcomes that actually occurred}, that is, $P_\xi(x_1), P_\xi(x_2|x^1), P_\xi(x_3|x^2), \ldots, P_\xi(x_t|x^{t-1})$ for each expert $\xi \in \Xi$. It does not depend any expert's prediction $P_\xi(\rv x_{t+1}|x^t)$ of the \emph{next} outcome, nor does it depend on the expert probability $P_\xi(y|x^{i-1})$ assigned to any \emph{counter-factual} outcome $y \neq x_i$.

\subsection{Online Investment}\label{sec:investment}

As it happens, all algorithms for prediction with expert advice discussed in this paper can also be used as strategies for online
investment. The key observation is that the weights on the experts issued by any
considered prediction algorithm and the resulting codelength
\emph{only} depend on the \emph{losses} incurred by the experts, not
on any other aspect of their behaviour. This is clear from the motivating definition \eqref{eq:ES.joint}.

To predict on the stock market, we start again from \eqref{eq:ES.joint}, but replace expert $\xi$'s data likelihood $P_{\xi,s}(x_s)$ with the multiplication factor $r_{\xi,s}$ incurred by  stock $\xi$ in trading round $s$. Each round, the investment algorithm uses the same expression to compute the posterior distribution on the next stock. It then divides its capital among the stock according to this distribution. In the prediction algorithm this posterior was used instead to mix the predictions of the experts to form its own predictive distribution; for the investment strategy this last step has been abstracted away.

Note that while the expert data likelihoods are proper probability distributions, the multiplication factors $r_{\xi,t}$ may be larger than one; this does not cause problems since the posterior weights are renormalised by Bayes' rule.

The returns obtained by the investment strategy are given by the formula for the data likelihood with the above substitution. As such, all regret guarantees derived in this paper carry over unmodified when using these algorithms as investment strategies.

For further discussion of the relation between investment and prediction algorithms see \cite[Chapter 10]{cesa-bianchi2006}.

\subsubsection*{Cover's Portfolio Selection}
Perhaps the best known link between information theory and finance is provided by Cover's seminal results on portfolio selection~\cite{cover1991universal,Cover2006}. These algorithms fit exactly in the formalism described in this paper: they can be obtained by applying the reduction described above to specific EHMM models. The simplest such model was introduced in Example~\ref{example:fixed.elementwise.mixtures}, where each round, the predictions of the experts are mixed using a fixed weight vector. Applying the reduction to finance, we recover the \emph{constant rebalanced portfolio} strategy. It is also possible to obtain Cover's \emph{universal} portfolios by using a more sophisticated EHMM that learns the optimal mixture weights. For the case of two experts, this EHMM has already been defined, albeit for a different purpose: the EHMM depicted in \figref{graph:universal.share} was previously used as an interpolator, for learning the switching rate in an expert tracking strategy. However, when applied in its own right it learns the optimal elementwise mixture weights for combining the predictions of two experts labelled ``\swi'' and ``\nsw''. 

The construction of~\figref{graph:universal.share} can be augmented to more than two experts, but the state space quickly grows large: for $k$ experts, the number of states in round $t$ is $t^{k-1}$. As such, the algorithm will process $t$ outcomes in $O(t^k)$ time, matching the complexity of Cover's algorithm. Interestingly, the methods discussed in~\secref{sec:montjaak} for reducing the time complexity of the Switching Method carry over to learning mixtures, allowing an easy speedup to $O(t^{(k+1)/2})$.

Substantial advances have been made in making Cover's universal portfolio selection practical for large numbers of stocks (by imposing some assumptions); these fall outside the scope of this paper. For more information see~\cite{HazanAgarwalKale2007}.

\section{Conclusion}
We generalise the concept of universal coding for some model class
$\cal M$, by comparing the performance of the universal code not just
to the performance of the codes in $\cal M$, but also to other
reference classes. 

We evaluate performance in terms of individual sequence regret, and
make no distributional assumptions. We summarise and unify existing algorithms from two domains: information-theoretic literature about
universal coding on the one hand and universal prediction (also known as
``prediction with expert advice'') from learning theory. Thanks to the well-known
equivalence between prefix coding and probability theory the algorithms and techniques of this paper can immediately be applied in both settings.

We present all models in Bayesian form using prior
distributions on expert sequences (ES-priors). The (infinitely long)
expert sequence defines which expert is used at which time. Prediction
then amounts to ``integrating out'' those experts in the sequence that
are used at other time steps than the one predicted. The challenge is
to identify those models that provide good tradeoffs between
predictive performance and time complexity.

Throughout the paper, hidden Markov models (HMMs) are used to specify ES-priors, since their
explicit representation of the current state and state-to-state
evolution naturally fit the temporal correlations we seek to model.
For reasons of efficiency we use HMMs with silent states. The standard
algorithms for HMMs (Forward, Backward, Viterbi and Baum-Welch) can be
used to answer questions about the ES-prior as well as the induced
distribution on data. The running time of the forward algorithm can be
read off directly from the graphical representation of the HMM.

This approach allows a unified presentation of many existing expert models. We
focus on models for \emph{tracking the best expert}, where the loss
incurred by a prediction strategy is compared to the loss incurred if
the data are optimally divided into $m$ blocks, and the best expert is
used within each block. The discrepancy (``regret'') is then bounded
in terms of variables such as the current time $t$, the number of
experts $k$, and the number of blocks $m$. In each case, we recover
both the regret bound and the running time known
from the literature.

We not only succinctly summarise and
contrast many key algorithms from the literature, but also describe
a number of new models. In particular the model with quickly decreasing
probability of switching (\secref{sec:dsr}) and the models that assume the experts are ordered
(\secref{sec:oe}) are new and computationally efficient, and have
competitive regret bounds.

\section*{Acknowledgements}
Peter Gr\"unwald's and Tim van Erven's suggestions significantly
improved this paper. Thanks also go to Mark Herbster for an enjoyable
afternoon exchanging ideas, which has certainly influenced the shape
of this paper. We thank Wojciech Kot\l owski and Thijs van Ommen for proofreading. 

\bibliographystyle{IEEEtran}
\bibliography{experts}

\begin{IEEEbiography}[{\includegraphics[width=1in,height=1.25in,clip,keepaspectratio]{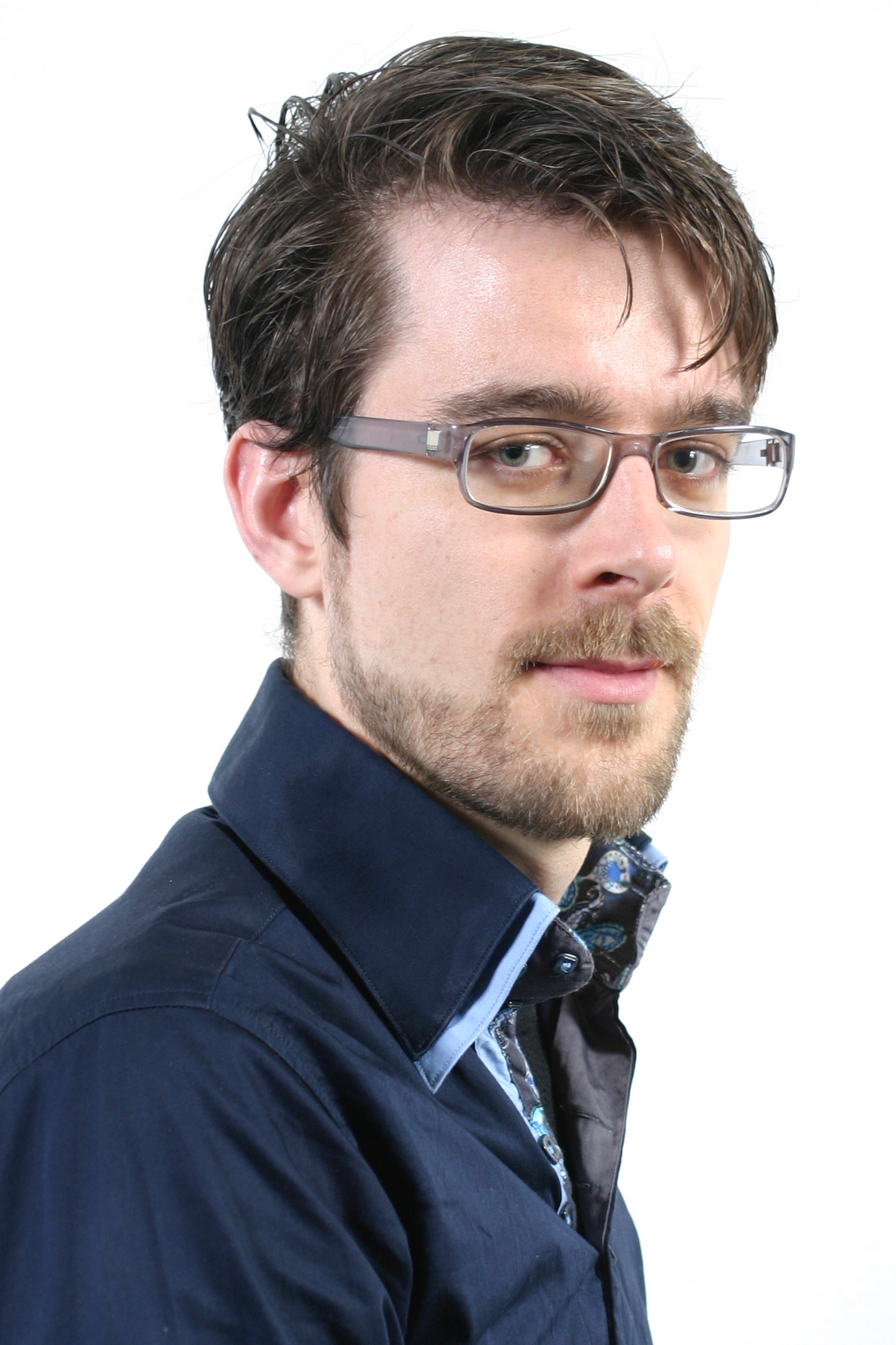}}]{Wouter M. Koolen}
% TODO: up to dateify
carried out his graduate research at CWI, supervised by Professor Paul Vit\'anyi and
  Professor Peter Gr\"unwald. He graduated cum laude in 2011 at the University of Amsterdam. He then took up a research fellowship at the Computer Learning Research Centre at Royal Holloway, University of London. He joined the information theoretic learning group at CWI in February 2013. His interests are online decision making, minimax algorithms, Bayesian reasoning and finance.
\end{IEEEbiography}
\begin{IEEEbiography}[{\includegraphics[width=1in,height=1.3in,clip,keepaspectratio]{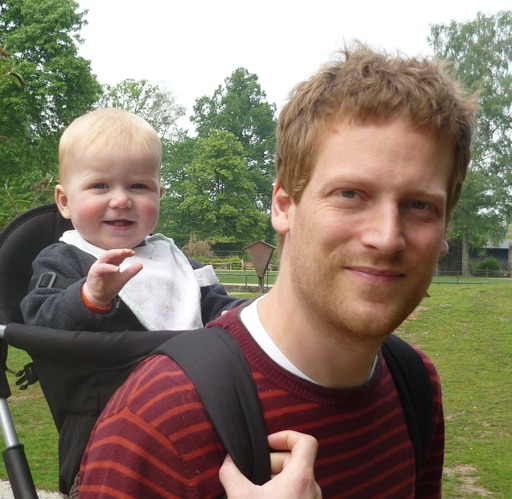}}]{Steven
    de Rooij}
% TODO: up to dateify
  received his PhD in 2008 from the University of Amsterdam, under
  supervision of Professor Paul Vit\'anyi and Professor Peter
  Gr\"unwald at CWI institute in Amsterdam, Netherlands. From 2008 to 2010 he
  was Research Associate in the Statistical Laboratory at the
  University of Cambridge; he currently works at the University of
  Amsterdam and VU university, where he applies concepts from Minimum
  Description Length learning, Bayesian inference, model selection,
  and sequential prediction to the semantic web.
\end{IEEEbiography}

\end{document}